\documentclass{article}

\usepackage{anysize}
\marginsize{2cm}{2cm}{2cm}{2cm}

\newcommand{\n}{\noindent}

\newcommand{\tr}{\text{tr}}

\newcommand{\mb}{\mathbf}

\newcommand{\x}{\times}
\newcommand{\Z}{\mathbb{Z}}
\newcommand{\N}{\mathbb{N}}
\newcommand{\R}{\mathbb{R}}

\renewcommand{\epsilon}{\varepsilon}

\usepackage[dvipsnames]{xcolor}
\usepackage{amssymb,amsmath,mdframed,ytableau,tikz,extarrows,graphicx,float,enumerate,mathdots}
\usepackage[mathscr]{eucal}
\usepackage[nottoc]{tocbibind}
\usetikzlibrary{calc,arrows,math,snakes}
\usepackage[raggedright]{titlesec}
\usepackage[linktocpage=true,colorlinks=true]{hyperref}

\numberwithin{equation}{section}

 \definecolor{cyan}{RGB}{123,215,255}
 \definecolor{red}{RGB}{184,46,32}

\usepackage{amsthm}
\newtheorem{thm}{Theorem}[section]
\newtheorem{lem}[thm]{Lemma}
\newtheorem{cor}[thm]{Corollary}

\newtheorem{prop}[thm]{Proposition}
\newtheorem{defn}[thm]{Definition}

\newtheorem{ex}[thm]{Example}
\newtheorem{rem}[thm]{Remark}

\title{Lusztig Factorization Dynamics of the Full Kostant-Toda Lattices}
\author{Nicholas M. Ercolani\textsuperscript{1,*,$\dagger$} and Jonathan Ramalheira-Tsu\textsuperscript{1,$\star$}}

\date{}

\begin{document}

\maketitle

\textsuperscript{1}Department of Mathematics, University of Arizona, 617 N. Santa Rita Ave., Tucson, 85721-0089, AZ, USA

\begin{center}
    *Corresponding author(s). E-mail(s): ercolani@math.arizona.edu, jramalheiratsu@math.arizona.edu\\
    ORCID: 0000-0003-2010-4205\textsuperscript{$\dagger$}, 0000-0003-2028-3370\textsuperscript{$\star$}\\
\end{center}

\begin{abstract}
We study extensions of the classical Toda lattices at several different space-time scales. These extensions are from the classical tridiagonal phase spaces to the phase space of full Hessenberg matrices, referred to as the Full Kostant-Toda Lattice. Our formulation makes it natural to make further Lie-theoretic generalizations to dual spaces of Borel Lie algebras. Our study brings into play factorizations of Loewner-Whitney type in terms of canonical coordinatizations due to Lusztig. Using these coordinates we formulate precise conditions for the well-posedness of the dynamics at the different space-time scales. Along the way we derive a novel, minimal box-ball system for the Full Kostant-Toda Lattice that does not involve any capacities or colorings, and which has a natural interpretation in terms of the Robinson-Schensted-Knuth algorithm. We provide as well an extension of O'Connell's ordinary differential equations to the Full Kostant-Toda Lattice.\\

\n\textbf{Keywords:} Toda lattice, box-ball system, Lusztig factorization, integrability
\end{abstract}

\n Acknowledgement: The authors gratefully acknowledge support from the National Science Foundation: DMS-1615921.

\tableofcontents

\section{Introduction} \label{sec:intro}
\subsection{Motivation}

The Toda lattice is a completely integrable system of modern vintage (late 1960’s), which over the ensuing decades has shown remarkable resilience in applicability to a wide range of problems in applied mathematics, geometry and analysis. Though initially posed in a classical mechanical setting, under an appropriate coordinate transformation it acquires the form of a Lax equation on tridiagonal Jacobi matrices. This has opened the door to many other extensions including to discrete and ultra-discrete dynamical analogues (described later in this introduction) as well as Lie theoretic generalizations, due to Kostant, inspired by the Lax formulation. A third, more recent, type of extension comes from lifting these Lax representations from Jacobi matrices to a larger phase space related to Borel Lie algebras. This is the Full Kostant Toda lattice and it too has many potential applications to a variety of mathematical areas. Some of these have begun to emerge very recently (see comments at the end of Section
\ref{section:summary}). This paper will lay a firm and uniform foundation for potential applications of the Full Kostant Toda lattice 
in its several continuous and discretized versions and do this in a way that makes Lie theoretic generalizations natural.

In this introduction we will present some historical and conceptual background that will enable us to informally summarize our main results in Section \ref{section:summary}
and then make some brief connections to related literature. After that we outline the remainder of the paper.

\subsection{The Classical Toda Lattice} \label{history}

The Toda lattice \cite{bib:toda} is a dynamical system on $\mathbb{R}^{2n}$, with coordinates $(p_1,\ldots,p_n,q_1,\ldots,q_n)$. The system is Hamiltonian with respect to the standard symplectic structure on $\mathbb{R}^{2n}$ with Hamiltonian

\begin{equation} \label{hamiltonian}
H(p_1,\ldots,p_n,q_1,\ldots,q_n)=\dfrac{1}{2}\sum_{j=1}^n p_j^2 + \sum_{j=1}^{n-1}e^{q_j-q_{j+1}}.
\end{equation}
(Equations (\ref{eqoneofhamiltod}) - (\ref{eqtwoofhamiltod}) present the associated classical Hamiltonian ODEs.)
Flaschka's transformation \cite{bib:fl} introduces the variable replacement $$(p_1,\ldots,p_n,q_1,\ldots,q_n)\mapsto (a_1,\ldots,a_n,b_1,\ldots,b_{n-1})$$ given by setting $a_j=-p_j$ for $j=1,\ldots,n$, and $b_j=e^{q_j-q_{j+1}}$ for $j=1,\ldots,n-1$. In these variables, the Hamiltonian equations may be re-expressed as a Lax equation

\begin{equation} \label{Lax}
\dfrac{d}{ds}X=[X,\pi_-(X)],~~~~X(0)=X_0
\end{equation}
where 
\begin{equation} \label{hessenbergprojection}
X \doteq \left[\begin{array}{cccc} a_1 & 1\\ b_1 & a_2 & \ddots\\ &\ddots & \ddots & 1\\ &&b_{n-1}&a_n\end{array}\right],
\qquad
\pi_-(X)= \left[\begin{array}{cccc} 0 & \\ b_1 & 0 & \\ &\ddots & \ddots & \\ &&b_{n-1}&0\end{array}\right].
\end{equation}

The formulation as a Lax equation (\ref{Lax}) opens a door to connections with symmetry groups (Lie theory) and integrability which spurred the initial fascination that continues to current applications within both pure and applied mathematics. Everything discussed here may be formulated in a general Lie theoretic framework which is an important aspect of this topic; however, for brevity of exposition we will remain with the above specific formulation.

\subsection{dToda} \label{sec:dtoda}
We now consider the following discrete dynamics on tridiagonal Hessenberg matrices; those of the form (\ref{hessenbergprojection}),

\begin{equation}\label{groupversoflax}
    X(t+1)=\Pi_-^{-1}(X(t))X(t)\Pi_-(X(t)), \,\,\,\, t \in \mathbb{N} \cup \{ 0 \}
\end{equation}
where $\Pi_-$ is the projection onto the lower part of the lower-upper factorization of $X(t)$.
Equation (\ref{Lax}) is in fact an infinitesimal analogue of Equation (\ref{groupversoflax}). Equation (\ref{Lax}) is not the {\it exact} infinitesimal version of (\ref{groupversoflax}); the exact version is given by (\ref{luloglaxen}). However, though these two continuous systems are not the same flows on the stated phase space of tridiagonal matrices, they do commute with one another in the Hamiltonian sense of being in involution \cite{bib:arnold}. Traditional usage \cite{bib:h} refers to (\ref{groupversoflax}) as 
the {\it discrete Toda lattice} and also denotes this system by {\it dToda}, so we will continue with that usage but bear in mind the distinction.\\

\n More precisely, dToda is constructed as follows: if one can factor a matrix $X(t)$, of the tridiagonal Hessenberg form shown in (\ref{hessenbergprojection}), as $X(t)=L(t)R(t)$, where $L(t)$ is lower unipotent and $R(t)$ is upper triangular, then
the time $t+1$ matrix $X(t+1)$ in (\ref{groupversoflax}) is given by $$X(t+1)=R(t)L(t).$$  
This is because $\Pi_-(X(t))=L(t)$, so
$$\Pi_-^{-1}(X(t))X(t)\Pi_-(X(t))
=L(t)^{-1}L(t)R(t)L(t)=R(t)L(t).$$
This way of viewing the dynamics, due to Symes \cite{bib:symes78}, amounts to performing a lower-upper factorization of $X(t)$, then flipping the factors. In Section 
\ref{section:dfktodarecdtoda}, we will specify positivity conditions which ensure that this dynamics may be continued for all discrete time.

Since $X(t)$ is a tridiagonal Hessenberg matrix,  one can see that $L(t)$ is lower bi-diagonal with ones on its diagonal and $R(t)$ is upper bi-diagonal with ones on its superdiagonal. Therefore, the product $X(t+1)=R(t)L(t)$ is itself once again a tridiagonal Hessenberg matrix; i.e., this flow preserves the Toda lattice phase space. In fact it is the {\it stroboscope} of a solution to the Lax equation in continuous time $s$ \cite{bib:watkins}, 
\begin{equation} 
\dfrac{d}{ds}X=[X,\pi_-(\log X)] \label{luloglaxen}
\end{equation}
that commutes with (\ref{Lax}). \begin{rem} \label{rem:symes} In our applications here we will always take $X$ to have positive eigenvalues  so that $\log X$ may be uniquely defined in terms of the principal branch of the logarithm along the positive real axis. From this it follows that this continuous flow is well-defined and commutes with the original Toda flow
\cite{bib:watkins, bib:dlt}. The Hamiltonian for (\ref{luloglaxen}) is $H_{LU} = \mbox{Tr}(X \log X -  X)$. 
\end{rem}

\subsection{The Box-Ball System}
The passage from the time-discrete dToda system to a system that is spatially discrete as well is mediated by a process called {\it ultra-discretization} (see Section \ref{sec:ultradis}). The resultant system is denoted {\it udToda} and has a remarkable presentation in terms of cellular automata. We briefly describe the latter here. 
The (classical) box-ball system (BBS) consists of an infinite number of boxes arranged as a one-dimensional array with a finite number of boxes containing a ball. A simple evolution rule is provided for the box-ball dynamics:\index{Basic Box-Ball Evolution}\index{Box-Ball System}
\begin{enumerate}[(1)]
\item Take the left-most ball that has not been moved and move it to the left-most empty box to its right.
\item Repeat (1) until all balls have been moved precisely once.
\end{enumerate}

\n Since the algorithm requires one to know which balls have been moved, we can, without technically changing the algorithm, introduce a colour-coding based on whether balls have moved or not. Balls will be blue until they have moved, after which they will become red. When all balls are red, the colours should be reset to blue, ready for the next time step. Or, equivalently, a $0$-th step of colouring all balls blue should be prescribed. We will use the latter for a minor benefit in brevity. Below is an example of the evolution with this colour-coding, with each ball movement separated into a sub-step:

\begin{figure}[H]
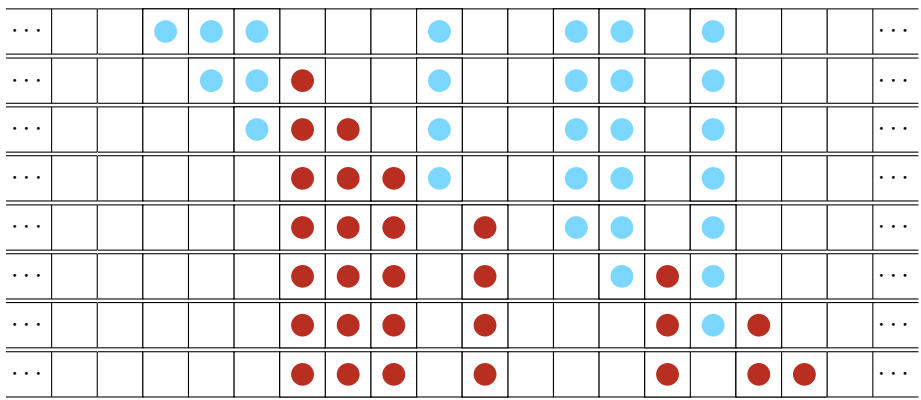

\centering
\tikz[scale=0.6]{
\foreach \x in {0,1,2,3,4,5,6,7,8,9,10,11,12,13,14,15}
{\draw[fill=white]  (\x,3) -- (\x+1,3) -- (\x+1,4) -- (\x,4) -- cycle;			
}
\foreach \x in {1,2,3,7,10,11,13}
{\draw[fill=white]  (\x,3) -- (\x+1,3) -- (\x+1,4) -- (\x,4) -- cycle;			
\fill[cyan] (\x+0.5,3.5) circle (0.25);
}
\foreach \x in {}
{\draw[fill=white]  (\x,3) -- (\x+1,3) -- (\x+1,4) -- (\x,4) -- cycle;			
\fill[red] (\x+0.5,3.5) circle (0.25);
}
\foreach \x in {16}
{\draw[fill=white,white]  (\x,3) -- (\x+2,3) -- (\x+2,4) -- (\x,4) -- cycle;
\draw[-] (\x,3) -- (\x,4);
\draw[-] (\x,3) -- (\x+2,3);
\draw[-] (\x,4) -- (\x+2,4);
\draw[-] (\x+1,3) -- (\x+1,4);
\node at (\x+1.5,3.5) {$\cdots$};
}
\foreach \x in {0}
{\draw[fill=white,white]  (\x,3) -- (\x-2,3) -- (\x-2,4) -- (\x,4) -- cycle;
\draw[-] (\x,3) -- (\x,4);
\draw[-] (\x,3) -- (\x-2,3);
\draw[-] (\x,4) -- (\x-2,4);
\draw[-] (\x-1,3) -- (\x-1,4);
\node at (\x-1.5,3.5) {$\cdots$};
}
}
\tikz[scale=0.6]{
\foreach \x in {0,1,2,3,4,5,6,7,8,9,10,11,12,13,14,15}
{\draw[fill=white]  (\x,3) -- (\x+1,3) -- (\x+1,4) -- (\x,4) -- cycle;			
}
\foreach \x in {2,3,7,10,11,13}
{\draw[fill=white]  (\x,3) -- (\x+1,3) -- (\x+1,4) -- (\x,4) -- cycle;			
\fill[cyan] (\x+0.5,3.5) circle (0.25);
}
\foreach \x in {4}
{\draw[fill=white]  (\x,3) -- (\x+1,3) -- (\x+1,4) -- (\x,4) -- cycle;			
\fill[red] (\x+0.5,3.5) circle (0.25);
}
\foreach \x in {16}
{\draw[fill=white,white]  (\x,3) -- (\x+2,3) -- (\x+2,4) -- (\x,4) -- cycle;
\draw[-] (\x,3) -- (\x,4);
\draw[-] (\x,3) -- (\x+2,3);
\draw[-] (\x,4) -- (\x+2,4);
\draw[-] (\x+1,3) -- (\x+1,4);
\node at (\x+1.5,3.5) {$\cdots$};
}
\foreach \x in {0}
{\draw[fill=white,white]  (\x,3) -- (\x-2,3) -- (\x-2,4) -- (\x,4) -- cycle;
\draw[-] (\x,3) -- (\x,4);
\draw[-] (\x,3) -- (\x-2,3);
\draw[-] (\x,4) -- (\x-2,4);
\draw[-] (\x-1,3) -- (\x-1,4);
\node at (\x-1.5,3.5) {$\cdots$};
}
}
\tikz[scale=0.6]{
\foreach \x in {0,1,2,3,4,5,6,7,8,9,10,11,12,13,14,15}
{\draw[fill=white]  (\x,3) -- (\x+1,3) -- (\x+1,4) -- (\x,4) -- cycle;			
}
\foreach \x in {3,7,10,11,13}
{\draw[fill=white]  (\x,3) -- (\x+1,3) -- (\x+1,4) -- (\x,4) -- cycle;			
\fill[cyan] (\x+0.5,3.5) circle (0.25);
}
\foreach \x in {4,5}
{\draw[fill=white]  (\x,3) -- (\x+1,3) -- (\x+1,4) -- (\x,4) -- cycle;			
\fill[red] (\x+0.5,3.5) circle (0.25);
}
\foreach \x in {16}
{\draw[fill=white,white]  (\x,3) -- (\x+2,3) -- (\x+2,4) -- (\x,4) -- cycle;
\draw[-] (\x,3) -- (\x,4);
\draw[-] (\x,3) -- (\x+2,3);
\draw[-] (\x,4) -- (\x+2,4);
\draw[-] (\x+1,3) -- (\x+1,4);
\node at (\x+1.5,3.5) {$\cdots$};
}
\foreach \x in {0}
{\draw[fill=white,white]  (\x,3) -- (\x-2,3) -- (\x-2,4) -- (\x,4) -- cycle;
\draw[-] (\x,3) -- (\x,4);
\draw[-] (\x,3) -- (\x-2,3);
\draw[-] (\x,4) -- (\x-2,4);
\draw[-] (\x-1,3) -- (\x-1,4);
\node at (\x-1.5,3.5) {$\cdots$};
}
}
\tikz[scale=0.6]{
\foreach \x in {0,1,2,3,4,5,6,7,8,9,10,11,12,13,14,15}
{\draw[fill=white]  (\x,3) -- (\x+1,3) -- (\x+1,4) -- (\x,4) -- cycle;			
}
\foreach \x in {7,10,11,13}
{\draw[fill=white]  (\x,3) -- (\x+1,3) -- (\x+1,4) -- (\x,4) -- cycle;			
\fill[cyan] (\x+0.5,3.5) circle (0.25);
}
\foreach \x in {4,5,6}
{\draw[fill=white]  (\x,3) -- (\x+1,3) -- (\x+1,4) -- (\x,4) -- cycle;			
\fill[red] (\x+0.5,3.5) circle (0.25);
}
\foreach \x in {16}
{\draw[fill=white,white]  (\x,3) -- (\x+2,3) -- (\x+2,4) -- (\x,4) -- cycle;
\draw[-] (\x,3) -- (\x,4);
\draw[-] (\x,3) -- (\x+2,3);
\draw[-] (\x,4) -- (\x+2,4);
\draw[-] (\x+1,3) -- (\x+1,4);
\node at (\x+1.5,3.5) {$\cdots$};
}
\foreach \x in {0}
{\draw[fill=white,white]  (\x,3) -- (\x-2,3) -- (\x-2,4) -- (\x,4) -- cycle;
\draw[-] (\x,3) -- (\x,4);
\draw[-] (\x,3) -- (\x-2,3);
\draw[-] (\x,4) -- (\x-2,4);
\draw[-] (\x-1,3) -- (\x-1,4);
\node at (\x-1.5,3.5) {$\cdots$};
}
}
\tikz[scale=0.6]{
\foreach \x in {0,1,2,3,4,5,6,7,8,9,10,11,12,13,14,15}
{\draw[fill=white]  (\x,3) -- (\x+1,3) -- (\x+1,4) -- (\x,4) -- cycle;			
}
\foreach \x in {10,11,13}
{\draw[fill=white]  (\x,3) -- (\x+1,3) -- (\x+1,4) -- (\x,4) -- cycle;			
\fill[cyan] (\x+0.5,3.5) circle (0.25);
}
\foreach \x in {4,5,6,8}
{\draw[fill=white]  (\x,3) -- (\x+1,3) -- (\x+1,4) -- (\x,4) -- cycle;			
\fill[red] (\x+0.5,3.5) circle (0.25);
}
\foreach \x in {16}
{\draw[fill=white,white]  (\x,3) -- (\x+2,3) -- (\x+2,4) -- (\x,4) -- cycle;
\draw[-] (\x,3) -- (\x,4);
\draw[-] (\x,3) -- (\x+2,3);
\draw[-] (\x,4) -- (\x+2,4);
\draw[-] (\x+1,3) -- (\x+1,4);
\node at (\x+1.5,3.5) {$\cdots$};
}
\foreach \x in {0}
{\draw[fill=white,white]  (\x,3) -- (\x-2,3) -- (\x-2,4) -- (\x,4) -- cycle;
\draw[-] (\x,3) -- (\x,4);
\draw[-] (\x,3) -- (\x-2,3);
\draw[-] (\x,4) -- (\x-2,4);
\draw[-] (\x-1,3) -- (\x-1,4);
\node at (\x-1.5,3.5) {$\cdots$};
}
}
\tikz[scale=0.6]{
\foreach \x in {0,1,2,3,4,5,6,7,8,9,10,11,12,13,14,15}
{\draw[fill=white]  (\x,3) -- (\x+1,3) -- (\x+1,4) -- (\x,4) -- cycle;			
}
\foreach \x in {11,13}
{\draw[fill=white]  (\x,3) -- (\x+1,3) -- (\x+1,4) -- (\x,4) -- cycle;			
\fill[cyan] (\x+0.5,3.5) circle (0.25);
}
\foreach \x in {4,5,6,8,12}
{\draw[fill=white]  (\x,3) -- (\x+1,3) -- (\x+1,4) -- (\x,4) -- cycle;			
\fill[red] (\x+0.5,3.5) circle (0.25);
}
\foreach \x in {16}
{\draw[fill=white,white]  (\x,3) -- (\x+2,3) -- (\x+2,4) -- (\x,4) -- cycle;
\draw[-] (\x,3) -- (\x,4);
\draw[-] (\x,3) -- (\x+2,3);
\draw[-] (\x,4) -- (\x+2,4);
\draw[-] (\x+1,3) -- (\x+1,4);
\node at (\x+1.5,3.5) {$\cdots$};
}
\foreach \x in {0}
{\draw[fill=white,white]  (\x,3) -- (\x-2,3) -- (\x-2,4) -- (\x,4) -- cycle;
\draw[-] (\x,3) -- (\x,4);
\draw[-] (\x,3) -- (\x-2,3);
\draw[-] (\x,4) -- (\x-2,4);
\draw[-] (\x-1,3) -- (\x-1,4);
\node at (\x-1.5,3.5) {$\cdots$};
}
}
\tikz[scale=0.6]{
\foreach \x in {0,1,2,3,4,5,6,7,8,9,10,11,12,13,14,15}
{\draw[fill=white]  (\x,3) -- (\x+1,3) -- (\x+1,4) -- (\x,4) -- cycle;			
}
\foreach \x in {13}
{\draw[fill=white]  (\x,3) -- (\x+1,3) -- (\x+1,4) -- (\x,4) -- cycle;			
\fill[cyan] (\x+0.5,3.5) circle (0.25);
}
\foreach \x in {4,5,6,8,12,14}
{\draw[fill=white]  (\x,3) -- (\x+1,3) -- (\x+1,4) -- (\x,4) -- cycle;			
\fill[red] (\x+0.5,3.5) circle (0.25);
}
\foreach \x in {16}
{\draw[fill=white,white]  (\x,3) -- (\x+2,3) -- (\x+2,4) -- (\x,4) -- cycle;
\draw[-] (\x,3) -- (\x,4);
\draw[-] (\x,3) -- (\x+2,3);
\draw[-] (\x,4) -- (\x+2,4);
\draw[-] (\x+1,3) -- (\x+1,4);
\node at (\x+1.5,3.5) {$\cdots$};
}
\foreach \x in {0}
{\draw[fill=white,white]  (\x,3) -- (\x-2,3) -- (\x-2,4) -- (\x,4) -- cycle;
\draw[-] (\x,3) -- (\x,4);
\draw[-] (\x,3) -- (\x-2,3);
\draw[-] (\x,4) -- (\x-2,4);
\draw[-] (\x-1,3) -- (\x-1,4);
\node at (\x-1.5,3.5) {$\cdots$};
}
}
\tikz[scale=0.6]{
\foreach \x in {0,1,2,3,4,5,6,7,8,9,10,11,12,13,14,15}
{\draw[fill=white]  (\x,3) -- (\x+1,3) -- (\x+1,4) -- (\x,4) -- cycle;			
}
\foreach \x in {}
{\draw[fill=white]  (\x,3) -- (\x+1,3) -- (\x+1,4) -- (\x,4) -- cycle;			
\fill[cyan] (\x+0.5,3.5) circle (0.25);
}
\foreach \x in {4,5,6,8,12,14,15}
{\draw[fill=white]  (\x,3) -- (\x+1,3) -- (\x+1,4) -- (\x,4) -- cycle;			
\fill[red] (\x+0.5,3.5) circle (0.25);
}
\foreach \x in {16}
{\draw[fill=white,white]  (\x,3) -- (\x+2,3) -- (\x+2,4) -- (\x,4) -- cycle;
\draw[-] (\x,3) -- (\x,4);
\draw[-] (\x,3) -- (\x+2,3);
\draw[-] (\x,4) -- (\x+2,4);
\draw[-] (\x+1,3) -- (\x+1,4);
\node at (\x+1.5,3.5) {$\cdots$};
}
\foreach \x in {0}
{\draw[fill=white,white]  (\x,3) -- (\x-2,3) -- (\x-2,4) -- (\x,4) -- cycle;
\draw[-] (\x,3) -- (\x,4);
\draw[-] (\x,3) -- (\x-2,3);
\draw[-] (\x,4) -- (\x-2,4);
\draw[-] (\x-1,3) -- (\x-1,4);
\node at (\x-1.5,3.5) {$\cdots$};
}
}
\caption{A box-ball system time evolution (one time step).}\label{firstbbsexample}
\end{figure}	
\n Each box-ball configuration is coordinatised by counting balls in blocks and spaces between blocks. Tokihiro \cite{bib:tokihiro} showed that dToda can be spatially discretised (ultradiscretization, \ref{sec:ultradis}) in such a way that the spatial discretization (udToda) describes the coordinate evolution of the box-ball system.\\

Although the coordinates were originally developed to count block lengths, coordinate evolution of udToda makes sense even when coordinates are taken to be zero. In fact, making geometric sense of the vanishing of coordinates leads to a remarkable new cellular automaton known as the ghost-box-ball system (GBBS), which the authors first introduced in \cite{bib:era}.

\subsection{The Full Kostant-Toda Lattice} \label{sec:FKT}

The main themes of this paper concern an extension
of the Lax equation (\ref{Lax}) from the tridiagonal Hessenberg phase space represented in (\ref{hessenbergprojection}) to the full phase space of {\it all} lower Hessenberg matrices which will be denoted by 
$\mathcal{H}$.

This dynamical system on $\mathcal{H}$ was first introduced in \cite{bib:efs}, and named therein as the {\it Full Kostant-Toda lattice } (abbreviated in this paper as FKToda), in the context of complete integrability for dynamical systems associated to the classical semi-simple Lie algebras. From the viewpoint of complete integrability it is also related to (but different from) the Toda lattice equations defined on generic symmetric matrices \cite{bib:dlnt}.

The central idea for us is that the various discrete analogues of FKToda that we will consider may be deconstructed in terms of coupled systems of dToda. Hence, it is worth mentioning here that, in continuous time, the {\it tridiagonal} Hessenberg matrices comprise an invariant subspace for FKToda.  A more general and systematic discussion of the invariance of bandlimited Hessenberg matrices is presented in Section \ref{ExtRed}.

\subsection{Summary of Results} \label{section:summary}

\begin{figure}[H]
\centering\renewcommand{\arraystretch}{1.4}
\begin{tabular}{|c|c|c|}
\hline
     Scale & Classical System & Full System\\
     \hline
     Discrete space, continuous time & Toda Lattice & Full Kostant-Toda (FKToda)\\
     Discrete space, discrete time & ``Discrete Toda'' (dToda) & Symes's Map (dFToda)\\
     Ultradiscrete space, discrete time & Box-Ball System (BBS) & Ultradiscrete Full Toda (udFToda)\\
     \hline
\end{tabular}
\caption{The dynamical systems in their three spatio-temporal scales, and in both the classical and full settings.}
\label{casttable}
\end{figure}

This paper provides a description of how the classical systems just described (listed in the middle column of Figure \ref{casttable}) extend to the Full systems listed in the last column. The table also summarizes the notations for these systems that will be used going forward. Note that we refer to the Symes Map as {\it dFToda} rather than what might seem more natural to refer to as ``dFKToda". The reason for this is in accord with the explanation given at the start of Section \ref{sec:dtoda} related to dToda. The Symes map is not a time discretization of FKToda; rather, it is the discretization of a flow, \eqref{luloglaxen}, that commutes with FKToda and so we employ a different denotation.

This work establishes precise conditions for existence and uniqueness of solutions to the Full systems and explains the underlying group theoretic framework for understanding their properties and potential generalizations.
The first column describes the scales, or domains, usually associated with these models. Another scale that might have been considered here is that of continuous space and continuous time. One model for that in the middle column would be the KdV equation, for which the classical Toda lattice is often regarded as an integrable spatial discretization. A natural analogue for this in the third column would be the KP equation. There is a large literature related to a discretization
of KP referred to as dKP \cite{bib:hirota81} and \cite{bib:knw} and references therein. However these are usually presented in a formal infinite dimensional setting. Our approach stems from the Kostant-Kirillov Poisson bracket on the dual of a Lie algebra \cite{bib:efs}, corresponding to the Hessenberg matrices in the cases that we focus on here. It is in this setting that it becomes natural to consider matrix factorizations and potential applications such as to representation theory, random matrix theory, orthogonal polynomials \cite{bib:ew}. However, explorations of connections to dKP, along the lines related to \cite{bib:sik} as described below and in Section \ref{sec:connectionsliterature}, is something to be considered in the future.\\


\n As we have just mentioned, as well as seen above in Section \ref{groupversoflax} and later in Section \ref{fulltodasolnfactorsconj}, matrix factorization plays a fundamental central role in the dynamical systems we are considering. Specifically, we consider the lower-upper factorizations. There is a deeper factorization structure dating back to the work of Loewner and Whitney (Section \ref{sec:lw}) applied to \textit{totally positive} Hessenberg matrices (\ref{resphase}). This is rooted in the study of weighted path matrices for planar networks \cite{bib:fz}, which have a relation to recent developments in first and last passage percolation \cite{bib:ganguly}. Such structures also appear in a related context in \cite{bib:er}. This deeper factorization produces a unique coordinatization of the Toda phase spaces in terms of so-called \textit{Lusztig} coordinates (Section \ref{sec:lusztigfactors}). These coordinates and their associated factorizations, which may be generally referred to as {\it Lusztig factorizations},  are a recurring theme throughout this paper. \\

\n We discuss Lusztig factorization at the three spatio-temporal scales described in Figure \ref{casttable}. First, in the discrete time, discrete space scale, Theorem \ref{remark:substepsdtoda} provides a decomposition of the full discrete Toda map (dFToda) as a sequence of coupled dToda maps. We further establish through Theorem \ref{thm:wp} and Corollary \ref{cor:wp} the well-posedness of the dFToda map under certain positivity conditions, and explicitly describe the evolution on the Lusztig coordinates in terms of $\tau$-functions.

\n As we move on to the ultradiscrete setting, we introduce a full extension of udToda (or BBS) in Definition \ref{definitionofudftodaevol}, in a way that most naturally mirrors the full extensions of classic Toda to full Kostant-Toda, and dToda to dFToda. In some sense, this should be considered the minimal extension that serves this purpose. As an additional benefit of this particular system, we establish a method of fully capturing the Robinson-Schensted-Knuth (RSK) correspondence (a fundamental algorithmic bijection with applications in combinatorics and representation theory \cite{bib:aigner}) in the udFToda dynamics, right down to determining both the insertion and recording tableaux. However, we defer full details of this to an upcoming paper, opting to demonstrate the main ideas through an example which alludes to the general picture. This can be found in Remark \ref{rem:rskasfulludtoda}.\\[3pt]

\n Lastly, in the discrete space, continuous time setting, we establish in Section \ref{sec:return} a Lusztig coordinate description of the full Kostant-Toda lattice, culminating in both matrix and coordinate descriptions of dynamics in terms of Lusztig coordinates in Theorem \ref{thm:matrixandcoordrepoffktoda}. 

 \bigskip

Discussions of bidiagonal representations of the Full Kostant Toda lattice and its discretizations have appeared very recently elsewhere in the literature. In \cite{bib:sik} another Toda-type system, the so-called discrete hungry Toda lattice (dhToda), is presented along with a prescription for a continuum limit that yields FKToda in  bidiagonal form. The approach in \cite{bib:sik} differs from ours in that it goes from dhToda to FKToda while ours proceeds directly, in the opposite way,  from continuous to discrete,
by embedding an integrable discrete system, due to Symes,  in FKToda. 
\medskip

This paper also yields an alternative perspective on and extension of integrable systems aspects of the seminal work of O'Connell and collaborators that relates the classical Toda lattice
to random matrix theory and associated stochastic processes as well as geometric versions of RSK. Extensions of these kinds of connections to the Full Toda systems is an interesting possible future direction for exploration. In Appendix \ref{appendixb} we show through Theorem \ref{thm:oconnellsetup} and Corollary \ref{cor:oconnellsetup} how one can obtain, simply from Theorem \ref{thm:matrixandcoordrepoffktoda}, O'Connell's differential equations in \cite{bib:o}. 
\medskip 

These points of comparison will be expanded on in Section \ref{sec:connectionsliterature}.

\subsection{Outline}

In Section \ref{sec:intro} and Section \ref{sectionDTodaBBS} we provide the historical and technical background, respectively, needed for this paper. For an expert in the classical Toda integrable systems, much of this background could be skipped or skimmed.
The development of our first results takes place in Section \ref{section:dfktodarecdtoda} where dFToda is explicitly described as a coupled system of dToda lattices. This section also provides precise conditions for existence and well-posedness of the dFToda flows. Section \ref{sec:ud} directly reads off, from those previous results, the detailed structure of udFToda in terms of tropicalized Lusztig parameters and relates this, by example, to the RSK algorithm.  In Section \ref{geometry} we change gears somewhat and present the prior results in a fully Lie theoretic framework. This enables us, in Section  \ref{sec:LusDynam}, to intrinsically describe the dynamics on Lusztig parameters in terms of tau functions associated to the fundamental representations of semisimple  Lie algebras. Finally in Section \ref{sec:return} we show how the discrete system analysis we have carried out informs a deeper analysis of the continuous dynamics of FKToda.  Section \ref{sec:conclusions} presents some concluding remarks describing connections to the current literature and directions for future exploration based on what has been done here. Appendix \ref{appA} 
goes into some of the more technical aspects of the relation between tau functions and factorization. Appendix \ref{appendixb} details the derivation of O'Connell's ODEs from our results in Section \ref{sec:return}. 
\section{Background}\label{sectionDTodaBBS}

In this section we present the essential technical background needed to precisely state and prove our results. This includes the factorization method for solving both the continuous and discrete Toda lattices, the coordinate representations for discrete and ultra-discrete Toda, and Lusztig's method for unique bi-diagonal factorization of unipotent matrices.

\subsection{Solution by Factorization} \label{features}

As observed in Section \ref{sec:FKT}, the classical tri-diagonal system is an invariant subsystem of FKToda. The dynamics of FKToda is specified by the Lax equation (\ref{Lax}) {\it extended} to the full phase space of lower Hessenberg matrices $\mathcal{H} $; i.e., to all matrices of the form  
\begin{eqnarray*}
X \in  \left(\begin{array}{ccccc}
* & 1 & & &\\
*& * & 1 & &\\
\vdots & \ddots & \ddots & \ddots &\\
\vdots &  & \ddots & \ddots & 1\\
* & \dots & \dots & * & *
\end{array} \right).
\end{eqnarray*}

Having the form of a Lax equation points the way to constructing explicit solutions of FKToda; effectively, this Lax equation is the infinitesimal version of the statement that solutions advance by conjugating initial data with respect to a group element coming from the lower unipotent projection of $e^{sX}$. This was made precise independently by Adler, Kostant and Symes, in the late 1970's (\cite{bib:adler}, \cite{bib:kostant}, \cite{bib:symes}). In the version applied here, this is based on the so-called $LU$ factorization (equivalent to Gaussian elimination): given an invertible matrix $g \in GL(n, \mathbb{R})$ one seeks a factorization of the form 
$g = L R$ where $L$ is lower unipotent and $R$ is invertible, upper triangular.  When such a factorization exists (and generically it does), it is unique and $L$ will be denoted by $\Pi_-(g)$ and $R$ by $\Pi_+(g)$. 
In this extended setting the following realizes the construction of explicit solutions.

\begin{thm}      \cite{bib:adler, bib:kostant, bib:symes}       \label{factorisationthmbkg}
{\textbf{(The Factorization Theorem)}} To solve the FKToda system, \vspace{0.15cm}
\begin{equation} \label{Lax2}
\dfrac{d}{ds}X=[X,\pi_-(X)],~~~~X(0)=X_0 \in \mathcal{H},
\end{equation}
where now
\begin{eqnarray*}
\pi_- (X) \in  \left(\begin{array}{ccccc}
0 &  & & &\\
*& 0 & & &\\
\vdots & \ddots & \ddots & &\\
\vdots &  & \ddots & 0 &\\
* & \dots & \dots & * & 0
\end{array} \right),
\end{eqnarray*}

factor $e^{s X_0 }=\Pi_-( e^{s X_0})\Pi_+(e^{s X_0})$, if possible (locally it is). Then, the solution is given by\vspace{0.15cm}
\begin{equation}\label{fulltodasolnfactorsconj}
X(s)=\Pi_-^{-1}(e^{s X_0})X_0\Pi_-(e^{s X_0}).
\end{equation}
\end{thm}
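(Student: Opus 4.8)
The plan is to recover the proposed formula as the unique local solution of the initial value problem (\ref{Lax2}) by differentiating the $LU$-factorization directly; this is the classical Adler--Kostant--Symes argument specialized to the present setting. First I would set $g(s) \doteq e^{sX_0}$ and note that since $g(0) = I$ lies in the open dense ``big cell'' of matrices admitting an $LU$-factorization, there is an interval around $s = 0$ on which $g(s) = L(s) R(s)$ with $L(s) = \Pi_-(g(s))$ lower unipotent and $R(s) = \Pi_+(g(s))$ invertible upper triangular, both depending smoothly (indeed real-analytically) on $s$; this justifies the parenthetical ``locally it is''. On that interval I then define $X(s) \doteq L(s)^{-1} X_0 L(s)$, which is exactly the asserted expression $\Pi_-^{-1}(e^{sX_0})\,X_0\,\Pi_-(e^{sX_0})$, and the task reduces to verifying that $X(s)$ solves (\ref{Lax2}) with $X(0)=X_0$.

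The one step that needs an idea is the identification of $\pi_-(X)$. Differentiating $g = LR$ and using $\dot g = X_0 g$ yields $X_0 L R = \dot L R + L \dot R$, hence
\begin{equation*}
L^{-1} X_0 L = L^{-1}\dot L + \dot R\, R^{-1}.
\end{equation*}
Since $L$ is lower unipotent its diagonal is constant, so $\dot L$ is strictly lower triangular and $L^{-1}\dot L \in \nfr_-$ (strictly lower triangular matrices); since $R$ is invertible upper triangular, $\dot R\, R^{-1} \in \bfr_+$ (upper triangular matrices). By uniqueness of the decomposition $\gl(n,\R) = \nfr_- \oplus \bfr_+$ --- which is precisely the decomposition defining the projection $\pi_-$ --- this displays the strictly-lower and upper-triangular parts of $X = L^{-1}X_0 L$, so $\pi_-(X) = L^{-1}\dot L$. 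Differentiating the definition of $X$ directly then gives $\dot X = -(L^{-1}\dot L) X + X (L^{-1}\dot L) = [X,\, L^{-1}\dot L] = [X, \pi_-(X)]$, while $X(0) = X_0$ because $L(0) = I$. By uniqueness of solutions to this polynomial (hence locally Lipschitz) ODE, $X(s)$ is \emph{the} local solution of (\ref{Lax2}). One should also note that the flow keeps $X$ in $\mathcal{H}$: writing $X = \pi_-(X) + D + N$ with $D$ diagonal and $N$ the fixed superdiagonal of ones, the bracket $[X, \pi_-(X)] = [D + N, \pi_-(X)]$ is lower triangular, i.e.\ tangent to the affine space $\mathcal{H}$, so solutions starting in $\mathcal{H}$ remain there; this is consistent with the theorem's formulation and also follows from the invariance discussion in Section \ref{ExtRed}.

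The computation itself is short, so the genuine subtleties are: (i) the factorization exists only locally in $s$ --- it persists exactly as long as the leading principal minors of $e^{sX_0}$ stay nonzero --- which is the source of the well-posedness questions taken up later in the paper, and I regard this as the main point to state carefully; and (ii) getting the projection convention right, namely that $L^{-1}\dot L$ has \emph{zero} diagonal and hence equals $\pi_-(X)$ exactly, which is where the choice of $\Pi_-$ valued in the \emph{unipotent} lower group (not merely the lower triangular group) is used. Everything else is routine bookkeeping.
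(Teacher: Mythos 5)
Your argument is correct and is precisely the standard Adler--Kostant--Symes computation that the paper invokes by citation rather than proving (indeed, the paper reproduces the identical key step --- differentiating $e^{sX}=L(s)R(s)$ to get $X(s)=L^{-1}\dot L+\dot R R^{-1}$ and reading off the projections from $\gl(n,\R)=\nfr_-\oplus\bfr_+$ --- in Section \ref{sec:return} when it verifies (\ref{eqn:findLe})). The points you flag as the genuine content, namely the local-in-$s$ existence of the factorization governed by the principal minors of $e^{sX_0}$ and the fact that $L^{-1}\dot L$ has zero diagonal because $\Pi_-$ is valued in the unipotent group, are exactly the right ones, and your tangency check that $[X,\pi_-(X)]$ is lower triangular correctly accounts for the invariance of $\mathcal{H}$.
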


It is a straightforward application of the product rule using (\ref{Lax2}) to see that 
\begin{equation} \label{Isospectral}
\frac{d}{ds} \tr X^k = \tr \frac{d}{ds} X^k = \tr \left[ X^k , \pi_- (X) \right] = 0.
\end{equation}
This implies the so-called {\it isospectrality} of the Toda lattice: the eigenvalues of $X$ remain invariant under the Toda flow.
Moreover, with respect to an underlying symplectic structure generalizing the standard one for (\ref{hamiltonian}) \cite{bib:efs}, these constants of motion are in involution in the sense of Arnold-Liouville \cite{bib:arnold}. 

\subsection{Discretization}
Symes's form of dToda described in Section \ref{sec:dtoda} has a natural extension to a discrete analogue of FKToda, again based on the $LU$ factorization algorithm.

Symes's dynamics is inductively defined  as a two-step discrete evolution on Hessenberg matrices. If at (discrete) time $t$ one has a matrix $X(t)$, one obtains $X(t+1)$ as follows:

\begin{enumerate}
\item Perform Gaussian elimination to factor $X(t)=L(t)R(t)$, with $L(t)$ lower unipotent and $R(t)$ upper triangular.
\item Permute the factors to define $X(t+1) = R(t)L(t)$.
\end{enumerate}

By construction, one has
\begin{equation} \label{symeseqn}
X(t+1)=R(t)L(t) = (L(t)^{-1}X(t))L(t) = L(t)^{-1}X(t)L(t).
\end{equation}

\n Thus, this discrete evolution is given by conjugating a matrix by its lower unipotent factor. Since the spectrum of a matrix is invariant under conjugation, it follows that the eigenvalues are constants of motion for this discrete evolution; i.e., this discrete flow is isospectral. This is completely analogous to what was seen in Theorem \ref{factorisationthmbkg}.

\n This flow is what we shall henceforth call discrete time full Kostant-Toda (dFToda).

\begin{prop} \cite{bib:symes, bib:dlt, bib:watkins} \label{prop:dft}
To solve the discrete-time Full Toda lattice with initial condition $X(0)=X_0$, factor $e^{t\log X_0}=X_0^t=\Pi_-(X_0^t)\Pi_+(X_0^t)$, if possible. Then, the solution is given by\vspace{0.4cm}
\begin{equation}\label{todasolnfactorsconj}
X(t)=\Pi_-^{-1}(X_0^t)X_0\Pi_-(X_0^t), 
\end{equation}
for all $t\in \mathbb{N} \cup\{0\}$. (See Section \ref{ExtRed} for discussion of extending this flow to all $t \in \mathbb{Z}$.)
\end{prop}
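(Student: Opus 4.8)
The plan is to argue by induction on $t$, tracking how the $LU$ factors of $X(t)$ are related to those of the matrix powers $X_0^t$. Write $P_t \doteq \Pi_-(X_0^t)$ and $Q_t \doteq \Pi_+(X_0^t)$, so that $X_0^t = P_t Q_t$ with $P_t$ lower unipotent and $Q_t$ upper triangular and invertible (using that $X_0$ is invertible). The assertion to be proved is then that $X(t) = P_t^{-1} X_0 P_t$ for all $t$. The base case $t = 0$ is immediate, since $X_0^0 = I$ forces $P_0 = Q_0 = I$ and the formula returns $X(0) = X_0$.

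For the inductive step, assume $X(t) = P_t^{-1} X_0 P_t$. The key observation is that the $LU$ factorization of $X(t)$ can be extracted from the factorizations of $X_0^t$ and $X_0^{t+1}$ by exploiting the multiplicativity $X_0^{t+1} = X_0 \cdot X_0^t$. Indeed, from $P_{t+1}Q_{t+1} = X_0^{t+1} = X_0 P_t Q_t$ one gets
$$P_{t+1}^{-1} X_0 P_t = Q_{t+1} Q_t^{-1},$$
whose right-hand side, being a product of invertible upper triangular matrices, is upper triangular. Hence
$$X(t) = P_t^{-1} X_0 P_t = \bigl(P_t^{-1} P_{t+1}\bigr)\bigl(Q_{t+1} Q_t^{-1}\bigr),$$
where $P_t^{-1}P_{t+1}$ is lower unipotent and $Q_{t+1}Q_t^{-1}$ is upper triangular. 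By the uniqueness of the $LU$ factorization this must coincide with the factorization $X(t) = L(t)R(t)$ produced by Gaussian elimination in the first step of Symes's dynamics; in particular $L(t) = P_t^{-1}P_{t+1}$. Substituting into the recursion $X(t+1) = L(t)^{-1}X(t)L(t)$ of (\ref{symeseqn}) and using $X(t) = P_t^{-1}X_0P_t$, the inner copies of $P_t$ cancel and one obtains $X(t+1) = P_{t+1}^{-1} X_0 P_{t+1}$, which closes the induction. Equivalently, this telescoping shows $L(0)L(1)\cdots L(t-1) = P_t$, so that $X(t) = P_t^{-1} X_0 P_t$ directly.

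I expect the only genuine subtlety to be the bookkeeping around the ``if possible'' hypothesis: one must ensure that whenever the factorization $X_0^{t+1} = \Pi_-(X_0^{t+1})\Pi_+(X_0^{t+1})$ exists, the factorizations of the intermediate powers $X_0^k$ (and hence of $X(k)$) invoked in the induction also exist, so that the dFToda flow is genuinely defined up through time $t+1$. This is precisely where the positivity/genericity conditions (and their extension to $t \in \mathbb{Z}$) discussed in Section \ref{ExtRed} enter; one should state clearly which hypotheses are being assumed. Apart from that, the argument is a direct discrete transcription of the proof of the Factorization Theorem (Theorem \ref{factorisationthmbkg}): the one-parameter subgroup $s \mapsto e^{sX_0}$ is replaced by the cyclic family $t \mapsto X_0^t$ (with $X_0^t = e^{t\log X_0}$ well defined by Remark \ref{rem:symes}), and the identity $e^{(s+s')X_0} = e^{sX_0}e^{s'X_0}$ is replaced by $X_0^{s+t} = X_0^s X_0^t$.
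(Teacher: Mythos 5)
Your argument is correct. Note that the paper itself does not prove Proposition \ref{prop:dft}; it simply cites \cite{bib:symes, bib:dlt, bib:watkins}, so there is no in-text proof to compare against. Your induction --- extracting $L(t)=P_t^{-1}P_{t+1}$ and $R(t)=Q_{t+1}Q_t^{-1}$ from $X_0^{t+1}=X_0\,X_0^tP_tQ_t$ via uniqueness of the $LU$ factorization, then conjugating --- is precisely the standard Symes/Deift--Li--Tomei argument, and the telescoping identity $L(0)\cdots L(t-1)=P_t$ is the clean way to package it. Your caveat about needing the intermediate powers $X_0^k$, $k\le t$, to admit $LU$ factorizations (not merely $X_0^t$ itself) is the right one to flag, and in the paper's intended setting it is discharged by the positivity hypotheses of Theorem \ref{thm:wp} and Corollary \ref{cor:wp}.
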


It is of course not always the case that a given matrix has an LU factorization in which the coefficients of the factors do not become singular. It is possible to continue the dynamics (both continuous and discrete) through these singularities \cite{bib:efs}; however, we will not need to deal with that in this paper since it will be seen in 
Corollary \ref{cor:wp} that under appropriate {\it positivity} conditions on the initial data $X_0$, the discrete flows exist for all forward time without any singularities arising.

\subsection{Coordinate Representations}

Up to this point it has been both sufficient and useful to represent the Toda equations we consider in the compact form of a Lax equation defined on a matrix phase space.  But going forward we will need to work with explicit coordinate representations of these differential and difference equations. The classical Toda lattice discussed in Section \ref{history} first appeared in coordinate form as Hamilton's equations for the Hamiltonian (\ref{hamiltonian}) which are
given by

\begin{align}
\dot{q}_j&=p_j, ~~~~~~~~~~~~~~~~~~~~\,~~~~~~ j=1,\ldots,n,\label{eqoneofhamiltod}\\\notag\\
\dot{p}_j&=\left\{\begin{array}{cl}
-e^{q_1-q_2} & \text{if }j=1\\
e^{q_{j-1}-q_j}-e^{q_{j}-q_{j+1}} & \text{if }1<j<n\\
e^{q_{n-1}-q_n} & \text{if }j=n.
\end{array}\right.\label{eqtwoofhamiltod}
\end{align}

\n In this representation, boundary conditions of $q_0=-\infty$ and $q_{n+1}=\infty$ have been imposed, which result in $e^{q_0-q_1}=e^{q_n-q_{n+1}}=0$.

In the remainder of this subsection we describe the corresponding coordinate forms of the discrete and ultra-discrete Toda systems.

\subsubsection{dToda}

In the tridiagonal case, it is traditional \cite{bib:h} to use $I_1,\ldots,I_n$ to denote the diagonal entries of the upper bidiagonal matrix $R(t)$ and $V_1,\ldots,V_{n-1}$ to denote the subdiagonal entries of the lower bidiagonal matrix $L(t)$:

\begin{equation}\label{ltrtdtoda}
L(t)=\arraycolsep=3.1pt\def\arraystretch{1.5}\left[\begin{array}{cccc}
1\\
V_1^t&1\\&\ddots&\ddots\\
&&V_{n-1}^t&1
\end{array}\right],~~~~\text{and}~~~~R(t)=\arraycolsep=4.4pt\def\arraystretch{1.3}\left[\begin{array}{cccc}
I_1^t & 1\\
&I_2^t&\ddots\\
&&\ddots&1\\
&&&I_n^t
\end{array}\right].
\end{equation}

\n The dynamic evolution $L(t+1)R(t+1)=R(t)L(t)$ can then be written out explicitly, and shown \cite{bib:tokihiro} to be equivalent to the following system:

\begin{equation} \label{explicitDT}
\left\{\renewcommand{\arraystretch}{1.6}\setlength{\tabcolsep}{16pt}
\begin{array}{lcl}
V_0^t=V_n^t=0,\\
I_i^{t+1}=V_i^t+\dfrac{I_i^t\cdots I_{1}^t}{I_{i-1}^{t+1}\cdots I_{1}^{t+1}}&&i=1,\ldots,n,\\
V_i^{t+1}I_i^{t+1}=I_{i+1}^tV_i^t &~~~~~~~& i=1,\ldots,n-1.
\end{array}
\right.
\end{equation}

\subsubsection{Ultra-discretization and Box-Ball Systems}\label{sec:ultradis}

The space-time discretization of the Toda lattice we begin with is derived from dToda by a semiclassical type of limit, known as {\it Maslov dequantization} or {\it tropicalization} \cite{bib:lmrs,bib:era}. The resulting ultradiscrete system is a cellular automaton, first  introduced in 1990 by Takahashi and Satsuma \cite{bib:ts} who referred to it as the soliton box-ball system (BBS) for reasons that will become evident below.\\

\n To perform the ultradiscretization, one pushes forwards the semiring structure on $(\R_{\geq 0},+,\times)$ to $\R\cup\{\infty\}$ by the family of bijections $D_\hbar$, for $\hbar>0$, given by 

\begin{equation}
D_\hbar(x)=\left\{\begin{array}{ccl}
-\hbar\ln x &~~~& \text{if }x\neq 0\\
-\infty && \text{if }x= 0
\end{array}\right.,
\end{equation}
and taking the limit as $\hbar\to 0^+$. The essential result of this is that it replaces the operations $+$ and $\times$ on $\R_{\geq 0}$ by operations $\min$ and $+$, respectively, on $\R\cup\{\infty\}$.\\

\n For dToda, given in the form of Equations \ref{explicitDT}, this process amounts to making the change of variables
\begin{equation}
I_i^t=e^{-\frac{1}{\hbar}Q_i^t(\hbar)},~~~~
V_i^t=e^{-\frac{1}{\hbar}W_i^t(\hbar)},\label{tropvarsforbbstoda}
\end{equation}
and taking the zero $\hbar$-limit. The result of this semiclassical limit yields the equations of Theorem \ref{thmbbscoords2}, describing the (coordinate) evolution of the box-ball system.\\

\n Recall that the (basic) box-ball system consists of a one-dimensional infinite array of boxes with a finite number of the boxes filled with balls, and no more than one ball in each box (see, for example, Figure \ref{firstbbsexfordef}). 
  One refers to a full consecutive sequence of balls (having empty boxes on each side) as a {\it block}. One may think of these blocks as being coherent solitary masses.

\begin{figure}[H]
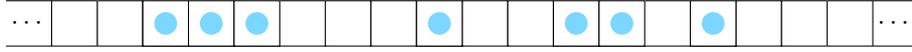

\centering
\tikz[scale=0.6]{
\foreach \x in {0,1,2,3,4,5,6,7,8,9,10,11,12,13,14,15}
{\draw[fill=white]  (\x,3) -- (\x+1,3) -- (\x+1,4) -- (\x,4) -- cycle;			
}
\foreach \x in {1,2,3,7,10,11,13}
{\draw[fill=white]  (\x,3) -- (\x+1,3) -- (\x+1,4) -- (\x,4) -- cycle;			
\fill[cyan] (\x+0.5,3.5) circle (0.25);
}
\foreach \x in {}
{\draw[fill=white]  (\x,3) -- (\x+1,3) -- (\x+1,4) -- (\x,4) -- cycle;			
\fill[red] (\x+0.5,3.5) circle (0.25);
}
\foreach \x in {16}
{\draw[fill=white,white]  (\x,3) -- (\x+2,3) -- (\x+2,4) -- (\x,4) -- cycle;
\draw[-] (\x,3) -- (\x,4);
\draw[-] (\x,3) -- (\x+2,3);
\draw[-] (\x,4) -- (\x+2,4);
\draw[-] (\x+1,3) -- (\x+1,4);
\node at (\x+1.5,3.5) {$\cdots$};
}
\foreach \x in {0}
{\draw[fill=white,white]  (\x,3) -- (\x-2,3) -- (\x-2,4) -- (\x,4) -- cycle;
\draw[-] (\x,3) -- (\x,4);
\draw[-] (\x,3) -- (\x-2,3);
\draw[-] (\x,4) -- (\x-2,4);
\draw[-] (\x-1,3) -- (\x-1,4);
\node at (\x-1.5,3.5) {$\cdots$};
}
}
\caption{A Box-Ball State}\label{firstbbsexfordef}
\end{figure}

Suppose at time $t$, one has $N$ blocks. Let $Q_1^t$, $Q_2^t$, $\ldots$, $Q_N^t$ denote the lengths of these blocks, taken from left to right. Let $W_1^t$, $W_2^t$, $\ldots$, $W_{N-1}^t$ denote the lengths of the sets of empty boxes between the $N$ sets of filled boxes, again taken from left to right. Lastly, let $W_0^t$ and $W_N^t$ be formally defined to be $\infty$, reflecting the fact that the empty boxes continue infinitely in both directions.

\begin{thm} \cite{bib:tokihiro}\label{thmbbscoords2}\index{Box-Ball Coordinate Dynamics}
The coordinates $(W_0^t,Q_1^t,W_1^t,\ldots,Q_N^t,W_N^t)$ evolve under the box ball dynamics according to
\begin{align}
W_0^{t+1}&=W_N^{t+1}=\infty,\\
W_i^{t+1}&=Q_{i+1}^t+W_i^t-Q_i^{t+1},~~~~~~~~~~~~~~~~~~~i=1,\ldots,N-1,\label{Witplusoneeqnbbs}\\
Q_i^{t+1}&=\min\left(W_i^{t},\sum_{j=1}^i Q_j^t-\sum_{j=1}^{i-1}
Q_j^{t+1}\right),~~~~~i=1,\ldots,N.\label{Qitplusoneeqnbbs}
\end{align}
\end{thm}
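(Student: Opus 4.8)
The plan is to recognize that the statement bundles two claims: that ultradiscretizing the dToda system (\ref{explicitDT}) via the substitution (\ref{tropvarsforbbstoda}) produces the min-plus recurrences (\ref{Witplusoneeqnbbs})--(\ref{Qitplusoneeqnbbs}), and that those recurrences govern the literal ball-moving dynamics of the box-ball system. I would handle these separately. The first is essentially mechanical: substituting $I_i^t = e^{-Q_i^t/\hbar}$ and $V_i^t = e^{-W_i^t/\hbar}$ into the three lines of (\ref{explicitDT}) and applying $D_\hbar = -\hbar\ln(\cdot)$, the boundary relation $V_0^t = V_n^t = 0$ becomes $W_0^t = W_N^t = \infty$ (taking the matrix size equal to the number of blocks $N$); the multiplicative relation $V_i^{t+1}I_i^{t+1} = I_{i+1}^t V_i^t$ becomes the additive relation (\ref{Witplusoneeqnbbs}); and in the middle line the product $I_i^t\cdots I_1^t$ tropicalizes to the sum $\sum_{j=1}^i Q_j^t$, division by $I_{i-1}^{t+1}\cdots I_1^{t+1}$ to subtraction of $\sum_{j=1}^{i-1}Q_j^{t+1}$, and the outer sum $V_i^t + (\cdots)$ to $\min$, via the elementary limit $-\hbar\ln(e^{-A/\hbar}+e^{-B/\hbar})\to\min(A,B)$ as $\hbar\to 0^+$. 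Because the middle line is a finite \emph{forward} recursion in $i$ (the unknown $Q_i^{t+1}$ depends only on $Q_1^{t+1},\dots,Q_{i-1}^{t+1}$), the limit passes through it unambiguously and one lands on exactly (\ref{Qitplusoneeqnbbs}); along the way one notes that this recursion determines the $Q_i^{t+1}$ uniquely.

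The substantive part is identifying the resulting cellular automaton with the ball-moving rule. Here I would reformulate the rule (1)--(2) as a single left-to-right \emph{carrier} sweep: a carrier with unlimited capacity traverses the array from the far left, picking up a ball whenever it passes a filled box and depositing a ball whenever it passes an empty box while carrying at least one ball; this reproduces "move each ball once to the nearest empty box on its right," and the equivalence of these two descriptions is the one genuinely combinatorial lemma (it underlies Tokihiro's theorem and the Takahashi--Satsuma automaton, and can be established by a direct induction comparing the two procedures box by box). Granting this, let $h_i^t$ be the carrier's load just after it has traversed the $i$-th old block; then $h_0^t = 0$ and $h_i^t = \sum_{j=1}^i Q_j^t - \sum_{j=1}^{i-1} Q_j^{t+1}$, since the carrier has by then picked up $\sum_{j\le i}Q_j^t$ balls and deposited $\sum_{j\le i-1}Q_j^{t+1}$ into the first $i-1$ new blocks. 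Crossing the $i$-th old gap, which has $W_i^t$ empty boxes, it deposits $\min(W_i^t,h_i^t)$ balls, which is precisely the new block length $Q_i^{t+1}$ of (\ref{Qitplusoneeqnbbs}). Tracking positions then yields (\ref{Witplusoneeqnbbs}) by bookkeeping: when $h_i^t \ge W_i^t$ the old gap is filled and the new $i$-th gap is exactly the stretch just vacated by the old $(i+1)$-st block, of length $Q_{i+1}^t$; when $h_i^t < W_i^t$ the new $i$-th gap consists of the unfilled remainder $W_i^t - Q_i^{t+1}$ of the old gap followed by that same vacated stretch, for a total of $W_i^t + Q_{i+1}^t - Q_i^{t+1}$; both cases give (\ref{Witplusoneeqnbbs}). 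Finally $W_N^{t+1} = \infty$ because $W_N^t = \infty$ forces $Q_N^{t+1} = h_N^t$, so the carrier is empty after the last block and deposits nothing further, while $W_0^{t+1} = \infty$ is immediate.

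I would also record the small well-definedness check that keeps the coordinates meaningful: from $Q_i^{t+1} = \min(W_i^t,h_i^t) \le W_i^t$ one gets $W_i^{t+1} = Q_{i+1}^t + W_i^t - Q_i^{t+1} \ge Q_{i+1}^t \ge 1$, and $h_i^t \ge Q_i^t \ge 1$ gives $Q_i^{t+1} \ge 1$, so all new coordinates are positive, the configuration at time $t+1$ again has exactly $N$ blocks, and no merging or splitting occurs (the regime in which one deliberately permits vanishing coordinates is precisely what leads to the GBBS mentioned in the introduction and lies outside this statement). The one place I expect real work is the carrier-equals-greedy-rule equivalence: stating and proving it carefully is the crux, since the "nearest empty box to the right" of a moved ball may land inside what will become either a later block or a later gap, so the two bookkeeping schemes must be shown to agree position by position rather than block by block; everything after that is substitution and arithmetic.
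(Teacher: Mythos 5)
The paper does not actually prove this theorem: it is quoted from Tokihiro, and the surrounding text only asserts informally that the semiclassical limit of (\ref{explicitDT}) under the substitution (\ref{tropvarsforbbstoda}) yields these equations, so there is no in-paper argument to compare yours against. Your reconstruction follows the standard route and the details check out: the tropicalization of (\ref{explicitDT}) is indeed mechanical (the $W$-equation is exact, and the forward recursion in $i$ lets the $\min$ limit pass through unambiguously), and your carrier bookkeeping correctly yields $Q_i^{t+1}=\min(W_i^t,h_i^t)$ and $W_i^{t+1}=(W_i^t-Q_i^{t+1})+Q_{i+1}^t$ together with the positivity check that the block count $N$ is preserved. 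The one incomplete step is the one you flag yourself: the equivalence of the unlimited-capacity carrier sweep with the greedy rule (1)--(2) is asserted, not proved, and it genuinely is the crux --- without it the argument only shows that the recurrences describe the carrier automaton, not the ball-moving rule as stated. A clean way to close it is to induct on the balls in left-to-right order, showing that after the greedy rule has moved the first $k$ balls, the occupied set to the left of the $(k+1)$-st unmoved ball coincides with what the carrier has produced up to that point and the number of still-airborne balls equals the carrier load; with that lemma in place your proof is complete.
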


\begin{ex}
Starting with the initial state in Figure \ref{firstbbsexfordef} the next iteration is shown in Figure \ref{thmbbscoords}:
\begin{figure}[H]
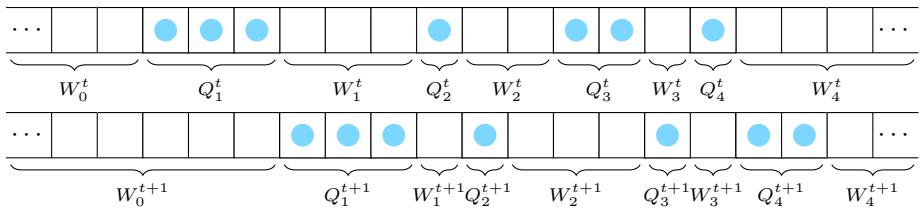

\centering
\tikz[scale=0.6]{
\foreach \x in {0,1,2,3,4,5,6,7,8,9,10,11,12,13,14,15}
{\draw[fill=white]  (\x,3) -- (\x+1,3) -- (\x+1,4) -- (\x,4) -- cycle;			
}
\foreach \x in {1,2,3,7,10,11,13}
{\draw[fill=white]  (\x,3) -- (\x+1,3) -- (\x+1,4) -- (\x,4) -- cycle;			
\fill[cyan] (\x+0.5,3.5) circle (0.25);
}
\foreach \x in {}
{\draw[fill=white]  (\x,3) -- (\x+1,3) -- (\x+1,4) -- (\x,4) -- cycle;			
\fill[red] (\x+0.5,3.5) circle (0.25);
}
\foreach \x in {16}
{\draw[fill=white,white]  (\x,3) -- (\x+2,3) -- (\x+2,4) -- (\x,4) -- cycle;
\draw[-] (\x,3) -- (\x,4);
\draw[-] (\x,3) -- (\x+2,3);
\draw[-] (\x,4) -- (\x+2,4);
\draw[-] (\x+1,3) -- (\x+1,4);
\node at (\x+1.5,3.5) {$\cdots$};
}
\foreach \x in {0}
{\draw[fill=white,white]  (\x,3) -- (\x-2,3) -- (\x-2,4) -- (\x,4) -- cycle;
\draw[-] (\x,3) -- (\x,4);
\draw[-] (\x,3) -- (\x-2,3);
\draw[-] (\x,4) -- (\x-2,4);
\draw[-] (\x-1,3) -- (\x-1,4);
\node at (\x-1.5,3.5) {$\cdots$};
}
\draw [decorate,decoration={brace,amplitude=4pt}] (0.9,2.85) -- (-1.9,2.85) node [black,midway,yshift=-0.4cm] {\scriptsize{$W_0^t$}};
\draw [decorate,decoration={brace,amplitude=4pt}] (3.9,2.85) -- (1.1,2.85) node [black,midway,yshift=-0.4cm] {\scriptsize{$Q_1^t$}};
\draw [decorate,decoration={brace,amplitude=4pt}] (6.9,2.85) -- (4.1,2.85) node [black,midway,yshift=-0.4cm] {\scriptsize{$W_1^t$}};
\draw [decorate,decoration={brace,amplitude=4pt}] (7.9,2.85) -- (7.1,2.85) node [black,midway,yshift=-0.4cm] {\scriptsize{$Q_2^t$}};
\draw [decorate,decoration={brace,amplitude=4pt}] (9.9,2.85) -- (8.1,2.85) node [black,midway,yshift=-0.4cm] {\scriptsize{$W_2^t$}};
\draw [decorate,decoration={brace,amplitude=4pt}] (11.9,2.85) -- (10.1,2.85) node [black,midway,yshift=-0.4cm] {\scriptsize{$Q_3^t$}};
\draw [decorate,decoration={brace,amplitude=4pt}] (12.9,2.85) -- (12.1,2.85) node [black,midway,yshift=-0.4cm] {\scriptsize{$W_3^t$}};
\draw [decorate,decoration={brace,amplitude=4pt}] (13.9,2.85) -- (13.1,2.85) node [black,midway,yshift=-0.4cm] {\scriptsize{$Q_4^t$}};
\draw [decorate,decoration={brace,amplitude=4pt}] (17.9,2.85) -- (14.1,2.85) node [black,midway,yshift=-0.4cm] {\scriptsize{$W_4^t$}};
}
\tikz[scale=0.6]{
\foreach \x in {0,1,2,3,4,5,6,7,8,9,10,11,12,13,14,15}
{\draw[fill=white]  (\x,3) -- (\x+1,3) -- (\x+1,4) -- (\x,4) -- cycle;			
}
\foreach \x in {4,5,6,8,12,14,15}
{\draw[fill=white]  (\x,3) -- (\x+1,3) -- (\x+1,4) -- (\x,4) -- cycle;			
\fill[cyan] (\x+0.5,3.5) circle (0.25);
}
\foreach \x in {}
{\draw[fill=white]  (\x,3) -- (\x+1,3) -- (\x+1,4) -- (\x,4) -- cycle;			
\fill[red] (\x+0.5,3.5) circle (0.25);
}
\foreach \x in {16}
{\draw[fill=white,white]  (\x,3) -- (\x+2,3) -- (\x+2,4) -- (\x,4) -- cycle;
\draw[-] (\x,3) -- (\x,4);
\draw[-] (\x,3) -- (\x+2,3);
\draw[-] (\x,4) -- (\x+2,4);
\draw[-] (\x+1,3) -- (\x+1,4);
\node at (\x+1.5,3.5) {$\cdots$};
}
\foreach \x in {0}
{\draw[fill=white,white]  (\x,3) -- (\x-2,3) -- (\x-2,4) -- (\x,4) -- cycle;
\draw[-] (\x,3) -- (\x,4);
\draw[-] (\x,3) -- (\x-2,3);
\draw[-] (\x,4) -- (\x-2,4);
\draw[-] (\x-1,3) -- (\x-1,4);
\node at (\x-1.5,3.5) {$\cdots$};
}
\draw [decorate,decoration={brace,amplitude=4pt}] (3.9,2.85) -- (-1.9,2.85) node [black,midway,yshift=-0.4cm] {\scriptsize{$W_0^{t+1}$}};
\draw [decorate,decoration={brace,amplitude=4pt}] (6.9,2.85) -- (4.1,2.85) node [black,midway,yshift=-0.4cm] {\scriptsize{$Q_1^{t+1}$}};
\draw [decorate,decoration={brace,amplitude=4pt}] (7.9,2.85) -- (7.1,2.85) node [black,midway,yshift=-0.4cm] {\scriptsize{$W_1^{t+1}$}};
\draw [decorate,decoration={brace,amplitude=4pt}] (8.9,2.85) -- (8.1,2.85) node [black,midway,yshift=-0.4cm] {\scriptsize{$~Q_2^{t+1}$}};
\draw [decorate,decoration={brace,amplitude=4pt}] (11.9,2.85) -- (9.1,2.85) node [black,midway,yshift=-0.4cm] {\scriptsize{$W_2^{t+1}$}};
\draw [decorate,decoration={brace,amplitude=4pt}] (12.9,2.85) -- (12.1,2.85) node [black,midway,yshift=-0.4cm] {\scriptsize{$Q_3^{t+1}$}};
\draw [decorate,decoration={brace,amplitude=4pt}] (13.9,2.85) -- (13.1,2.85) node [black,midway,yshift=-0.4cm] {\scriptsize{$~W_3^{t+1}$}};
\draw [decorate,decoration={brace,amplitude=4pt}] (15.9,2.85) -- (14.1,2.85) node [black,midway,yshift=-0.4cm] {\scriptsize{$Q_4^{t+1}$}};
\draw [decorate,decoration={brace,amplitude=4pt}] (17.9,2.85) -- (16.1,2.85) node [black,midway,yshift=-0.4cm] {\scriptsize{$W_4^{t+1}$}};
}
\caption{The box-ball coordinates on a box-ball system and its time evolution. \label{thmbbscoords}}
\end{figure}
\end{ex}

\subsection{Lusztig Factorization}\label{sec:lusztigfactors}

A matrix  $L \in N_-$ is defined to be {\it totally positive} with respect to $N_-$ if every minor that does not identically vanish on $N_-$ has a positive value. We refer the reader to  \cite{bib:fz} for more background on this. 

We introduce here an elegant parametrization of the totally positive matrices within $N_-$ (denoted $N^{>0}_-$), in terms of negative simple roots of the associated Lie algebra, that is due to Lusztig \cite{bib:lu}. In explicit terms this is given by a factorization of the form 

\begin{eqnarray} \label{LusFac}
L &=& (1 + \alpha_1 f_{h_1}) \cdots (1 + \alpha_M f_{h_M})
\end{eqnarray}
where $L \in N^{>0}_-$, $M = {n \choose 2}$, $h_j \in \{1, \dots, n \}$, $\alpha_j \in \mathbb{R}_{>0}$, 1 denotes the identity matrix and $f_i$ is the elementary lower matrix with 1 in the 
$(i+1, i)$ entry and zero elsewhere. For later use, we set
$$
\ell_i(\alpha) \doteq 1 + \alpha f_i = \left[\begin{array}{ccccccc}
1 \\
& 1 \\
& & \ddots  \\
&&& 1 \\
&&&\alpha&1 \\
&&&&& \ddots \\
&&&&&& 1
\end{array}\right].
$$
Let $w_0$ denote the longest permutation in the symmetric group, $\frak{S}_n$. Let $s_h$ denote the consecutive transposition in $\frak{S}_n$ that interchanges $h$ and $h+1$. Then $s_h$ are generators for $\frak{S}_n$ and so $w_0$ may be written as a word in these generators,

The minimal length of such a word for $w_0$ is $M$ and so such an expression of the form 
$$
w_0 = s_{h_1} \cdots s_{h_M}
$$
will be referred to as a reduced word decomposition of $w_0$. Any other permutation will have representations of shorter length than $M$ and there is only one permutation having minimal length $M$; hence, its reference as the {\it longest} permutation. For further background on these matters we refer the reader to \cite{bib:bfz}.
The permutation matrix corresponding to $w_0$ is

$$\widehat{w}_0 =\renewcommand{\arraystretch}{0.9} \left[\begin{array}{ccccc}
&&&&1\\
&&&1\\
&&1\\
&\iddots\\
1
\end{array}\right],$$
Now let ${\bf h} = (h_1, \dots h_M)$ denote a reduced word decomposition, $w_0 = s_{h_1} \cdots s_{h_M}$. For this article we will fix and always use the following particular reduced word decomposition
\begin{eqnarray}\label{longestwordrepn}
{\bf h^0} &=& (1,2,...,n-1,1,2,...,n-2,..., 3,2,1,2,1).
\end{eqnarray}
One then has
\begin{thm}\label{theorem:bfz} \cite{bib:bfz}
Through the correspondence (\ref{LusFac}), ${\bf h^{0}}$ gives a bijection $\mathbb{R}^M_{>0} \to N^{>0}_-$ between the set of $M$-tuples of positive real numbers, $\left(\alpha_1, \dots, \alpha_M\right)$, and the variety of totally positive lower-unipotent matrices.
\end{thm}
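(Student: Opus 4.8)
The plan is to prove the statement by exhibiting, for the specific reduced word $\mathbf{h}^0$ in (\ref{longestwordrepn}), an explicit inverse to the map $\Phi\colon \mathbb{R}^M_{>0}\to N_-^{>0}$ defined by $(u_1,\dots,u_M)\mapsto \ell_{h_1^0}(u_1)\cdots\ell_{h_M^0}(u_M)$, and then separately checking that $\Phi$ actually lands in $N_-^{>0}$. For the first and main part, I would argue by induction on $n$, exploiting the particular ``staircase'' shape of $\mathbf{h}^0$: reading (\ref{longestwordrepn}), the first $n-1$ letters are $1,2,\dots,n-1$, and what remains is precisely the word $\mathbf{h}^0$ for $\mathfrak{S}_{n-1}$ acting on indices $1,\dots,n-1$ (i.e. on the bottom-right $(n-1)\times(n-1)$ block). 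So I would write $L = \ell_1(u_1)\ell_2(u_2)\cdots\ell_{n-1}(u_{n-1})\cdot L'$ where $L'\in N_-$ is supported on the last $n-1$ rows/columns. The leading product $\ell_1(u_1)\cdots\ell_{n-1}(u_{n-1})$ is a lower-unipotent matrix whose only nonzero off-diagonal entries lie in the first column (entry $u_1$ in position $(2,1)$, then $u_1$-dependent cascades $u_2 u_1$-type products down the column) — in fact this product equals the matrix with first column $(1,u_1,u_1u_2,\dots,u_1u_2\cdots u_{n-1})^{T}$ and identity elsewhere, a ``single bidiagonal-type'' factor. The key observation is that the first column of $L$ is then completely determined by, and invertible in terms of, $u_1,\dots,u_{n-1}$: reading off entries $L_{21},L_{31},\dots,L_{n1}$ recovers $u_1 = L_{21}$, $u_2 = L_{31}/L_{21}$, and generally $u_j = L_{j+1,1}/L_{j1}$. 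Having solved for $u_1,\dots,u_{n-1}$, multiply $L$ on the left by the inverses $\ell_{n-1}(u_{n-1})^{-1}\cdots\ell_1(u_1)^{-1}$ to obtain $L'$, which lies in the copy of $N_-$ for $\mathfrak{S}_{n-1}$, and apply the induction hypothesis.

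The positivity bookkeeping is where the real content sits, so I would organize it carefully. Going forward ($\Phi$ maps into $N_-^{>0}$): each factor $\ell_i(\alpha)$ with $\alpha>0$ is itself totally nonnegative, and a classical fact (Loewner–Whitney, or see \cite{bib:fz}) is that products of totally nonnegative matrices are totally nonnegative; that the resulting $L$ is in fact totally \emph{positive} — i.e. every minor not identically zero on $N_-$ is strictly positive — uses that $\mathbf{h}^0$ is a \emph{reduced} word for $w_0$, which is exactly the hypothesis that lets one realize every such minor as a positive subsum of monomials in the $u_j$'s (this is the planar-network / weighted-path-matrix picture in \cite{bib:fz}, where the reduced word gives a planar network whose path weights are the $u_j$ and whose minors are the Lindström–Gessel–Viennot path sums). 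Going backward (the inverse stays in $\mathbb{R}^M_{>0}$): in the inductive step I need $L_{j+1,1}/L_{j1}>0$ for all $j$, and more generally that the reduced-off matrix $L'$ is again totally positive with respect to the smaller $N_-$. For the first column ratios, total positivity of $L$ gives $L_{j1}>0$ for all $j$ (these are $1\times 1$ minors, not identically zero on $N_-$ since the word is reduced), so each $u_j>0$; and I would verify that left-multiplication by $\ell_i(u_i)^{-1} = \ell_i(-u_i)$ followed by the argument of \cite{bib:fz} preserves total positivity of the relevant block — equivalently, that ``peeling off'' an initial factor of a reduced word is the planar-network operation of deleting the corresponding edge, which manifestly preserves the positivity of all remaining path sums.

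The main obstacle I anticipate is precisely this last point: showing that after subtracting the first-column factor, the remainder $L'$ is totally positive \emph{as an element of the smaller $N_-$} (so that the inductive hypothesis applies), rather than merely totally nonnegative. The cleanest route is the planar-network model: fix the planar network $\Gamma$ associated to $\mathbf{h}^0$, observe that its edge set decomposes as (edges for the initial block $1,2,\dots,n-1$) $\sqcup$ (edges forming the subnetwork for $\mathbf{h}^0$ of $\mathfrak{S}_{n-1}$), and that the two factorizations $L = (\text{first-column factor})\cdot L'$ and $\Gamma = \Gamma_1\cup\Gamma'$ correspond. Then total positivity of $L'$ is just total positivity of the path matrix of $\Gamma'$, which holds for the same reason it holds for $L$ — reducedness of the word for $\mathfrak{S}_{n-1}$. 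An alternative, more computational route avoids planar networks and instead tracks minors directly via the Cauchy–Binet formula applied to $L = (\text{first-column factor})\cdot L'$, but this gets combinatorially heavy; I would only fall back on it if the network dictionary in \cite{bib:bfz} isn't stated in quite the form needed. In either case, the base cases $n=1$ (trivial) and $n=2$ ($L = \ell_1(u_1)$, $u_1 = L_{21}>0$) are immediate, and the induction closes.
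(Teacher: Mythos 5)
The paper does not prove this statement itself; it is quoted from Berenstein--Fomin--Zelevinsky \cite{bib:bfz}, so your proposal can only be judged on its own merits. Unfortunately it has a fatal computational error at its core. You claim that the leading partial product $\ell_1(u_1)\ell_2(u_2)\cdots\ell_{n-1}(u_{n-1})$ equals the matrix with first column $(1,u_1,u_1u_2,\dots,u_1\cdots u_{n-1})^{T}$ and identity elsewhere. It does not. Since $f_if_j\neq 0$ only when $i=j+1$, and the indices in this product strictly increase from left to right, every cross term vanishes and the product is simply $\mathbb{I}+\sum_i u_if_i$, i.e.\ the full lower \emph{bidiagonal} unipotent matrix with subdiagonal $(u_1,\dots,u_{n-1})$ --- this is exactly the factor $T_1(u)$ of (\ref{Tfact}). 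Your inversion formula $u_j=L_{j+1,1}/L_{j,1}$ therefore fails already for $n=3$: with $L=\ell_1(u_1)\ell_2(u_2)\ell_1(u_3)$ one computes $L_{21}=u_1+u_3$, $L_{32}=u_2$, $L_{31}=u_2u_3$, so $L_{21}\neq u_1$ and the correct inverse is $u_2=L_{32}$, $u_3=L_{31}/L_{32}$, $u_1=(L_{21}L_{32}-L_{31})/L_{32}$ --- ratios of \emph{minors}, not of first-column entries (this is the Chamber Ansatz of \cite{bib:bfz}, and establishing that these minor ratios are subtraction-free and positive is where the real work lies). A second structural error compounds the first: after peeling off the first $n-1$ letters, the residual product $T_2(u)\cdots T_{n-1}(u)$ is supported in the \emph{top-left} $(n-1)\times(n-1)$ block (each $T_i$ with $i\geq 2$ carries $\mathbb{I}_{i-1}$ in the bottom right), not the bottom-right block as your induction requires.

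The salvageable part of your proposal is the appeal to the planar-network/Lindstr\"om--Gessel--Viennot dictionary for the forward direction: realizing $L$ as the weighted path matrix of the network attached to the reduced word $\mathbf{h}^0$ does show that every minor of $L$ not identically vanishing on $N_-$ is a positive polynomial in the $u_j$, which gives that $\Phi$ lands in $N_-^{>0}$ and, with a little more care, injectivity. But surjectivity together with positivity of the inverse cannot be reduced to reading off first-column ratios; it requires either the Chamber Ansatz expressing each $u_j$ as a ratio of products of (generalized) minors of $L$, or an inductive peeling in which the factor removed at each stage is a full bidiagonal $T_1$ (so that the entries of $L'=T_1^{-1}L$ must be shown to be subtraction-free expressions in the minors of $L$ --- the step your argument silently skips). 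As written, the induction does not close.
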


\subsection{Loewner-Whitney Theorem}
\label{sec:lw}
Extending what was described in the previous subsection, one says that a square matrix is {\it totally non-negative} (TNN) if all of its minors are non-negative.
We will make use of the following important characterization of TNN  matrices.
\begin{thm} \cite{bib:fz}
Any invertible TNN matrix is a product of elementary  tridiagonal matrices with non-negative matrix entries.
\end{thm}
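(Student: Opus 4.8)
## Proof Proposal for the Loewner-Whitney Theorem

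The plan is to prove the statement by induction on the size $n$ of the matrix, peeling off elementary bidiagonal factors one at a time so as to successively clear out the "corners" of the matrix while preserving total non-negativity at every stage. The key technical tool is the one already in hand: Theorem \ref{theorem:bfz} (after \cite{bib:bfz}) gives a bijective parametrization of totally positive lower-unipotent matrices by products of the elementary factors $\ell_i(\alpha)$, and by transposition the analogous statement holds for upper-unipotent matrices. So the heart of the argument is reducing the general invertible TNN case to the unipotent-triangular case, where Lusztig's factorization applies, together with the observation that a diagonal matrix with positive entries is itself trivially a product of elementary (in fact diagonal) totally non-negative matrices.

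First I would establish the reduction to the strictly totally positive case by a density/limiting argument: any invertible TNN matrix $A$ can be approximated by strictly totally positive matrices $A_\epsilon \to A$ (for instance $A_\epsilon = A \cdot B_\epsilon$ where $B_\epsilon$ is a suitable perturbation, using that the totally positive matrices are dense in the TNN ones and that invertibility is open), so that if the factorization holds for each $A_\epsilon$ with uniformly bounded parameters, a compactness argument lets one pass to the limit; the elementary factors have non-negative (possibly now zero) entries in the limit. Second, for a strictly totally positive $A$, I would perform an $LDU$-type decomposition $A = L D U$ with $L$ lower-unipotent, $U$ upper-unipotent, $D$ diagonal; the strict total positivity of $A$ forces all leading principal minors to be positive, so this decomposition exists and is unique, and one checks (via the standard minor identities, e.g. those in \cite{bib:fz}) that $L$ and $U$ are themselves totally non-negative, indeed in $N^{>0}_-$ and $N^{>0}_+$ respectively. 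Third, I would apply Theorem \ref{theorem:bfz} to write $L$ as a product of factors $\ell_{h_j}(\alpha_j)$ with $\alpha_j > 0$, the transposed statement to write $U$ as a product of upper elementary factors, and note $D = \mathrm{diag}(d_1, \dots, d_n)$ with $d_i > 0$ factors as a product of the elementary diagonal matrices $\mathrm{diag}(1, \dots, d_i, \dots, 1)$, each TNN. Concatenating these three lists of elementary factors gives the desired expression for $A$.

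The main obstacle I anticipate is the limiting step: one must argue that as $A_\epsilon \to A$ the corresponding Lusztig/$LDU$ parameters stay bounded and converge (after passing to a subsequence) rather than escaping to infinity, and that the limiting product of elementary matrices genuinely reproduces $A$ and not some degeneration. This requires controlling the parameters in terms of ratios of minors of $A_\epsilon$ — which is exactly what the explicit formulas behind Theorem \ref{theorem:bfz} and the $LDU$ decomposition provide — and using that the relevant minors of $A$ are non-negative (not necessarily positive) so the limiting ratios are well-defined non-negative numbers. A cleaner alternative that sidesteps the analysis is a purely algorithmic induction directly on the TNN matrix: locate a suitable outermost nonzero entry, left- or right-multiply by a single elementary bidiagonal TNN matrix to annihilate it while checking — through Fekete-type criteria and the identities in \cite{bib:fz} — that TNN is preserved, and recurse; the bookkeeping that no new negative minors are created is then the crux, but it is entirely finite and elementary. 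Either route ultimately rests on the minor identities of \cite{bib:fz} and on Theorem \ref{theorem:bfz}, both available to us.
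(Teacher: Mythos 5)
First, a point of comparison: the paper does not actually prove this statement --- it is quoted from \cite{bib:fz} as the Loewner--Whitney theorem --- so there is no in-paper argument to measure you against. On its own terms, your architecture is a workable route: approximate an invertible TNN matrix $A$ by totally positive ones, take the $LDU$ decomposition (legitimate because an invertible TNN matrix has positive leading principal minors, by the Koteljanskii/Fischer determinantal inequality), apply Theorem \ref{theorem:bfz} to $L$ and its transposed analogue to $U$, split $D$ into elementary diagonal factors, and pass to the limit. Your fallback suggestion --- the direct algorithmic induction that annihilates outer entries one elementary bidiagonal factor at a time while checking preservation of total non-negativity --- is essentially the argument actually given in \cite{bib:fz}, and is the cleaner of your two routes.

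The genuine gap is in the justification you offer for the limiting step. You assert that because the minors of $A$ are non-negative, ``the limiting ratios are well-defined non-negative numbers.'' That is precisely where the TNN case differs from the TP case: the Lusztig parameters of $A_\epsilon$ are ratios of minors whose denominators may tend to $0$ as $\epsilon \to 0$ (this happens exactly when $A$ is TNN but not totally positive), and a ratio whose denominator tends to zero is neither automatically convergent nor even bounded. Since the vanishing-denominator case is the entire content of extending from TP to TNN, this sentence does not close the argument. The repair is to establish boundedness of the parameters without the ratio formulas: in the subtraction-free (planar network) expansion of the product $(1+\alpha_1 f_{h_1})\cdots(1+\alpha_M f_{h_M})$, each $\alpha_k$ is itself the weight of a single source-to-sink path and hence appears as one non-negative summand of the $(h_k+1,h_k)$ entry of $L_\epsilon$; since the entries of $L_\epsilon$ converge (their denominators are leading principal minors of $A_\epsilon$, which stay bounded away from zero), the parameters are uniformly bounded, a subsequence converges, and continuity of the product map identifies the limit with $A$. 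Equivalently, one can argue that the image of $\mathbb{R}_{\ge 0}^{M}$ under the product map is closed and contains the dense subset $N_-^{>0}$ of the TNN unipotent matrices --- but that density (Whitney's theorem) is itself a nontrivial input you are implicitly assuming in your very first step, so it should be flagged as such rather than folded into the word ``suitable.''
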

By {\it elementary tridiagonal} here one means a matrix that differs from the identity $\mathbb{I}$ in a single entry located either on the main diagonal or immediately above or below it.  
This has the following useful refinement
\begin{cor} \label{cor:lw}\cite{bib:fz} (Theorem 14) 
Any TNN matrix can be written as a product of the form $LDU$ where $L, D, U$ are respectively lower unipotent, diagonal, upper unipotent and TNN in their respective senses.
\end{cor}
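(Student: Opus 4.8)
The plan is to refine the elementary factorization furnished by the preceding theorem by \emph{sorting} its factors into a lower block, a diagonal block, and an upper block, keeping every parameter nonnegative along the way. I would carry this out for an invertible TNN matrix $M$, which is the only case used later in this paper; for a general (possibly singular) TNN matrix the statement is exactly \cite{bib:fz}, and it follows from the invertible case by an echelon/limiting argument that I would not reproduce. Concretely: by the cited theorem write $M = t_1 \cdots t_N$ with each $t_k$ an elementary tridiagonal matrix with nonnegative entries, so that each $t_k$ is of exactly one of three kinds --- a lower elementary $y_i(a) := 1 + a f_i$, which lies in the monoid $N_-^{\ge 0}$ of lower-unipotent TNN matrices; an upper elementary $x_i(a) := 1 + a f_i^{\top}$, lying in $N_+^{\ge 0}$; or a diagonal matrix $d_i(a)$ agreeing with $1$ except for an entry $a \ge 0$ in slot $(i,i)$. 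Since $M$ is invertible, so is every $t_k$, hence every diagonal factor appearing has a strictly positive entry. Setting $S := N_-^{\ge 0}\, D^{>0}\, N_+^{\ge 0}$ (with $D^{>0}$ the positive diagonal matrices), I would observe that $1 \in S$ and reduce everything, by induction on $N$, to the single statement that $S$ is closed under right multiplication by each generator $y_i(a)$, $x_i(a)$, $d_i(a)$; once that is known $M \in S$, and the three resulting factors are automatically TNN in their respective senses --- the two unipotent ones because a product of TNN matrices is TNN (Cauchy--Binet), the diagonal one trivially.

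Two of the three closure statements would be immediate. For $g = x_i(a)$ one just absorbs into the upper block, $L D U \cdot g = L D (U g)$ with $Ug \in N_+^{\ge 0}$ by Cauchy--Binet. For a diagonal generator $g = d_i(c)$ with $c > 0$, one writes $L D U \cdot g = L\,(D d_i(c))\,(d_i(c)^{-1} U d_i(c))$ and notes that conjugating an upper-unipotent TNN matrix by a positive diagonal leaves it upper-unipotent and TNN, since each of its minors is merely scaled by a positive factor. The whole difficulty is therefore concentrated in the lower generator $g = y_i(a)$, and for this I would isolate the following lemma: for every $U \in N_+^{\ge 0}$ and every $a \ge 0$ there exist $a' \ge 0$, $D' \in D^{>0}$, $U' \in N_+^{\ge 0}$ with $U\, y_i(a) = y_i(a')\, D'\, U'$. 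Granting the lemma, $L D U\, y_i(a) = L D\, y_i(a')\, D'\, U' = L\,\big(D\, y_i(a')\, D^{-1}\big)\,(D D')\, U'$, and $D\, y_i(a')\, D^{-1} = y_i\!\big(a'\, d_{i+1}/d_i\big)$ is again a lower elementary with a nonnegative parameter, so the right-hand side lies in $S$, as needed.

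To prove the lemma I would first write $U = x_{j_1}(b_1) \cdots x_{j_r}(b_r)$ as a product of elementary upper-unipotent factors with all $b_\ell \ge 0$ --- this is the unipotent case of the Loewner--Whitney theorem, \cite{bib:fz}, a TNN analogue of Theorem \ref{theorem:bfz} --- and then induct on $r$, the case $r = 0$ being trivial. For the inductive step, set $U = x_j(b) V$ and apply the hypothesis to $V$, getting $V y_i(a) = y_i(\hat a)\, \hat D\, \hat U$. If $j \neq i$ the elementary matrices $x_j(b)$ and $y_i(\hat a)$ commute, and after that one carries $x_j(b)$ rightward past the positive diagonal $\hat D$ (which only rescales its parameter, keeping it nonnegative) and absorbs it into $\hat U$. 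If $j = i$ one instead invokes the rank-two identity
\[
x_i(b)\, y_i(\hat a) \;=\; y_i\!\Big(\tfrac{\hat a}{1 + \hat a b}\Big)\;\delta\; x_i\!\Big(\tfrac{b}{1 + \hat a b}\Big),
\]
where $\delta$ is the positive diagonal matrix with $(i,i)$-entry $1 + \hat a b$, $(i+1,i+1)$-entry $(1+\hat a b)^{-1}$, and $1$ elsewhere (the new parameters are nonnegative and $\delta$ is positive precisely because $\hat a, b \ge 0$), and then carries the spawned diagonal $\delta$ rightward past $\hat D$ and absorbs the new upper elementary into $\hat U$. In either case one arrives at the required form $y_i(\cdots)\,(\text{positive diagonal})\,(\text{element of }N_+^{\ge 0})$, completing the induction and the lemma.

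I expect the main obstacle to be exactly the bookkeeping that makes this normalization terminate while keeping all parameters nonnegative. A naive bubble sort would not work: moving an upper factor leftward past a lower factor spawns, through the identity above, a diagonal factor sitting \emph{in the middle} of the word, whose own subsequent transport can spawn further such factors and cascade. The point of the organization above is to make the rank-two $\mathrm{SL}_2$-type identity the \emph{only} non-commuting move, to dispose of its diagonal byproduct at once by carrying it to the right (where it merely rescales the remaining upper elementaries, never the lower ones), and to phrase the whole process as a nested induction --- outer on the number $N$ of original factors, inner on the length $r$ of the current upper block --- so that each move strictly reduces a manifestly well-founded quantity. Everything else (Cauchy--Binet, the positive-diagonal conjugations, and the rank-two computation) is routine.
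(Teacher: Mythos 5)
The paper offers no proof of this corollary at all: it is quoted verbatim from \cite{bib:fz} (Theorem 14), so there is nothing internal to compare your argument against. Judged on its own terms, your reconstruction is correct and is in fact the same normal-form argument that Fomin and Zelevinsky use: start from the elementary (Loewner--Whitney) factorization, and sort the word into $N_-^{\ge 0}\,D^{>0}\,N_+^{\ge 0}$ using the commutation of $x_j$ with $y_i$ for $j\neq i$, conjugation of elementary unipotents by positive diagonals (which only rescales the parameter by a positive factor), and the rank-two identity $x_i(b)\,y_i(\hat a)=y_i\bigl(\hat a/(1+\hat ab)\bigr)\,\delta\,x_i\bigl(b/(1+\hat ab)\bigr)$, which I checked and which is exactly the relation used in \cite{bib:fz}. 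Your organization --- concentrating all the difficulty in the single lemma $U\,y_i(a)=y_i(a')D'U'$ and proving it by induction on the length of an elementary word for $U$ --- correctly defuses the termination worry you raise, since the spawned diagonal is immediately pushed to the right where it can only rescale upper factors. The one restriction is that you treat only invertible $M$ and defer singular TNN matrices to the citation; that is a genuine (acknowledged) gap relative to the literal statement, but it is harmless here because the preceding theorem in the paper is itself stated only for invertible TNN matrices and every later use of the corollary (e.g.\ Remark \ref{tphiflow}) concerns invertible products.
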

In particular, for this paper we will often be interested in the case of matrices having the form stated in the above corollary but where $DU$ is upper bidiagonal with all superdiagonal entries being 1 and all diagonal entries being positive. We will refer to TNN matrices, $LDU$, with $DU$ having this restricted form as {\it TNN Hessenberg matrices}. 

\section{dFToda as Recursive dToda}\label{section:dfktodarecdtoda}

We will now apply Lusztig 
factorization to analyze the dynamics of dFToda. This presumes that the total positivity conditions of Theorem  \ref{theorem:bfz} are preserved under this dynamics. That is true but we defer the proof of this to Section \ref{sec:wp}. 

One may regard Theorem \ref{theorem:bfz} as a unique factorization theorem for elements of $N^{>0}_-$ which may be presented in the form of a bidiagonal factorization
\begin{eqnarray*}
L &=& T_1(u)T_2(u)\cdots T_{n-1}(u)\,\,\, \mbox{where for}\\
u &=& (u_i^j)_{1\leq i\leq j\leq n-1}, \\ 
T_i(u) &=& \ell_1(u_1^i)\ell_2(u_2^{i+1})\cdots \ell_{n-i}(u_{n-i}^{n-1}).
\end{eqnarray*}
$T_i(u)$ has the block bidiagonal form
$$\left[\begin{array}{ccccc|c}
1&&&&\\u_1^i & 1&&&\\&u_2^{i+1}&\ddots&&\\ &&\ddots &1&\\&&&u_{n-i}^{n-1}&1&\\\hline
&&&&&\mathbb{I}_{i-1}
\end{array}\right]$$
where $\mathbb{I}_{i-1}$ is the $(i-1)\x (i-1)$ identity matrix.\\

\n Starting with dFToda,  as in Equation \eqref{symeseqn}, one takes time-dependent Lusztig parameters for $L(t)$ via:
\begin{eqnarray} \label{Tfact}
L(t)=T_1(u(t))T_2(u(t))\cdots T_{n-1}(u(t))
\end{eqnarray}
according to the factorization in Theorem \ref{theorem:bfz}. Then, after a time-step of dFToda, one has 
$$L(t+1)R(t+1)=R(t)L(t)$$
with $L(t+1)$ endowed with its own Lusztig factorization:
$$L(t+1)=T_1(u(t+1))T_2(u(t+1))\cdots T_{n-1}(u(t+1)).$$

\begin{lem}\label{lemma:blockpreservationsymes}
For each $i\in \{1,\ldots,n-1\}$, $\alpha\in \R_{>0}^{i}$ and $d\in \R_{>0}^n$, if $R=(\text{diag}(d)+\epsilon)$ and  $T=\ell_1(\alpha_1)\ell_2(\alpha_2)\cdots \ell_i(\alpha_i)$ are such that $RT$ has an LU-decomposition, then the lower part of this LU-decomposition has the same structure as $T$ and the upper part has the same structure as $R$.
\end{lem}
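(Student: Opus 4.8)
The plan is to reduce the statement to one short matrix multiplication. Write $e_{jk}$ for the matrix units, so that $f_k=e_{k+1,k}$ and $\epsilon=\sum_m e_{m,m+1}$. The first observation is that the product defining $T$ collapses: since $f_kf_\ell=\delta_{k,\ell+1}e_{k+1,\ell}$, any product of two or more of the $f_k$ taken in increasing order of index vanishes, so $T=1+\sum_{k=1}^{i}\alpha_k e_{k+1,k}$ is simply lower bidiagonal, with subdiagonal entries $\alpha_1,\dots,\alpha_i$ in rows $2,\dots,i+1$ and zeros below. Next I would compute $RT$ directly, using $\text{diag}(d)\,e_{k+1,k}=d_{k+1}e_{k+1,k}$ together with the key identity $\epsilon\,e_{k+1,k}=e_{k,k}$, obtaining
$$RT=\text{diag}(d_1+\alpha_1,\dots,d_i+\alpha_i,d_{i+1},\dots,d_n)+\epsilon+\sum_{k=1}^{i}\alpha_k d_{k+1}\,e_{k+1,k}.$$
Thus $RT$ is tridiagonal: its superdiagonal is all $1$'s (as in $R$), its diagonal is strictly positive, and its subdiagonal is nonzero only in positions $(2,1),\dots,(i+1,i)$.

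For the factorisation itself I would use the ansatz $L'=1+\sum_k v_k e_{k+1,k}$, $R'=\text{diag}(r)+\sum_m\rho_m e_{m,m+1}$, bidiagonality of the factors being justified either by the standard fact that Gaussian elimination preserves bandwidth or, a posteriori, by uniqueness of $LU$ once factors of this shape are exhibited. Comparing $L'R'$ with $RT$ entrywise: the superdiagonal comparison forces $\rho_m\equiv 1$, so $R'=\text{diag}(r)+\epsilon$ has the same shape as $R$; the $(k+1,k)$ comparison gives $v_k r_k=(RT)_{k+1,k}$, which vanishes exactly for $k>i$, so $L'$ has nonzero subdiagonal entries only in rows $2,\dots,i+1$, i.e. $L'=\ell_1(v_1)\cdots\ell_i(v_i)$ has the same shape as $T$; and the diagonal comparison, starting from $r_1=d_1+\alpha_1$, recursively determines $r_{k+1}=(RT)_{k+1,k+1}-v_k$ with $v_k=\alpha_k d_{k+1}/r_k$. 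The hypothesis that $RT$ admits an $LU$ factorisation is exactly what makes this recursion legitimate: it is equivalent to all leading principal minors of $RT$ being nonzero, hence to every pivot $r_k$ being nonzero, so the divisions defining the $v_k$ make sense; uniqueness of $LU$ then identifies the pair $(L',R')$ just built with $(\Pi_-(RT),\Pi_+(RT))$, which is the assertion of the lemma.

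I do not foresee a real difficulty; the whole content is the two bookkeeping identities $f_kf_\ell=\delta_{k,\ell+1}e_{k+1,\ell}$ and $\epsilon\,e_{k+1,k}=e_{k,k}$, after which tridiagonality of $RT$ makes the structure of its $LU$ factors immediate. The one point that requires care is the logical use of the hypothesis: one must not simply assume the pivots $r_k$ are nonzero but derive this from the assumed existence of the $LU$ factorisation (equivalently from nonvanishing of the leading principal minors), so that the bidiagonal factors produced by the recursion are genuinely well-defined and, by uniqueness, genuinely the $LU$ factors of $RT$. (In fact, for the specific matrix $RT$ above the leading principal minors turn out to be positive automatically, so the hypothesis is vacuous here; but establishing that is not needed for the lemma.)
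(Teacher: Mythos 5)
Your proposal is correct and follows essentially the same route as the paper: both arguments reduce to observing that $RT$ is tridiagonal with ones on the superdiagonal and with vanishing subdiagonal entries in positions $(k+1,k)$ for $k>i$, and then using the nonvanishing of the pivots (guaranteed by the assumed existence of the $LU$ factorization) to force the lower factor to inherit that zero pattern. Your version merely makes the paper's block-decoupling argument explicit via the entrywise pivot recursion; there is no substantive difference.
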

\begin{proof}
One has
$$T=\left[\begin{array}{ccccc|c}
1&&&&\\\alpha_1 & 1&&&\\&\alpha_2&\ddots&&\\ &&\ddots &1&\\&&&\alpha_i&1&\\\hline
&&&&&\mathbb{I}_{n-i-1}
\end{array}\right],~~~\text{and}~~~R=\left[\begin{array}{cccc|cccc}
d_1 & 1 & &&&\\
&d_2 & \ddots &  & &&\\
&&\ddots & 1 &  & &\\
&&&d_{i+1} & 1 &  &\\
\hline
&&&&d_{i+2} & 1\\
&&&&&d_{i+3} & \ddots\\
&&&&&&\ddots & 1\\
&&&&&& & d_n
\end{array}\right].$$
The bottom-left block of $RT$ is the $(n-i-1)\times (i+1)$ zero matrix, and the bottom-right block is the same as the bottom right block of $R$. Since the product $RT$ is evidently a tridiagonal matrix with ones on its superdiagonal, its LU decomposition, if it exists, is a product of a lower bidiagonal unipotent matrix and an upper bidiagonal matrix with ones on the superdiagonal. The zero in the $(i+2,i+1)$ entry of $RT$ and the fact that the diagonal entries of the upper factor of $RT$ must be nonzero (because the determinant does not vanish) forces the lower factor of $RT$  to decouple into a block diagonal matrix of two lower bidiagonal unipotent matrices. This then forces the lower-right block of the lower part of $RT$ to be the identity matrix.
\end{proof}

\begin{defn} \label{def:dft}
Define a sequence of upper bidiagonal matrices $(R^{(i)}(t))_{i=0}^{n-1}$ associated to the pair $R(t)$ and $L(t)$ via
\begin{enumerate}[(i)]
\item $R^{(0)}(t)=R(t)$
\item $T_i(u(t+1))R^{(i)}(t)=R^{(i-1)}(t)T_i(u(t))$ for $i=1,\ldots,n-1$.
\end{enumerate}
For convenience, let $H^{(i)}(t)$ denote the tridiagonal Hessenberg matrix of interest in (ii), \textit{i.e.},
\begin{equation}\label{defnhessenbergtruncs}
H^{(i)}(t):=R^{(i-1)}(t)T_i(t).\end{equation}
\end{defn}

\begin{thm}\label{remark:substepsdtoda}
Defining $L(t)=T_1(u(t))T_2(u(t))\cdots T_{n-1}(u(t))$, $R^{(0)}(t)=R(t)$ and $R^{(n-1)}(t)=R(t+1)$, Definition \ref{def:dft} gives the single time-step of dFToda:
$$L(t+1)R(t+1)=R(t)L(t),$$
with each application of Definition \ref{def:dft} (ii) reducing to a dToda time step of decreasing dimension.
\end{thm}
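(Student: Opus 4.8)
The plan is to verify the telescoping identity first and then identify each intermediate step with a genuine dToda map. Starting from the definitions in Definition \ref{def:dft}, rewrite condition (ii) as $R^{(i-1)}(t)T_i(u(t)) = T_i(u(t+1))R^{(i)}(t)$ and multiply these together for $i = 1, \dots, n-1$. On the left one obtains $R^{(0)}(t)T_1(u(t))T_2(u(t))\cdots$ — but one must be careful: the factors do not simply concatenate, since each equation moves one $T_i$ from the right of an $R$ to the left of the next $R$. The cleaner way is to argue by a finite induction: set $P_0 = R^{(0)}(t) L(t) = R(t)L(t)$ and show inductively that $P_k := T_1(u(t+1))\cdots T_k(u(t+1))\, R^{(k)}(t)\, T_{k+1}(u(t))\cdots T_{n-1}(u(t))$ equals $P_0$ for every $k$, using (ii) to pass from $P_{k-1}$ to $P_k$. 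At $k = n-1$ this reads $T_1(u(t+1))\cdots T_{n-1}(u(t+1)) R^{(n-1)}(t) = L(t+1)R(t+1)$, which is exactly the dFToda time step. So the telescoping is essentially bookkeeping once the indices are lined up correctly.

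The substantive content is the claim that each application of (ii) ``reduces to a dToda time step of decreasing dimension.'' For this I would invoke Lemma \ref{lemma:blockpreservationsymes}: given that $R^{(i-1)}(t)$ has the form $\mathrm{diag}(d) + \epsilon$ (i.e. upper bidiagonal with unit superdiagonal and positive diagonal) and $T_i(u(t))$ has the bidiagonal form $\ell_1 \ell_2 \cdots \ell_{n-i}$ displayed above, the product $H^{(i)}(t) = R^{(i-1)}(t)T_i(t)$ is a tridiagonal Hessenberg matrix whose LU factorization, when it exists, has lower factor of the same shape as $T_i$ and upper factor of the same shape as $R^{(i-1)}$. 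The equation in (ii), $T_i(u(t+1))R^{(i)}(t) = R^{(i-1)}(t)T_i(u(t))$, is then precisely the statement ``factor $H^{(i)}(t)$ as $L' R'$ and swap the factors to get $R' L' = R^{(i)}(t) T_i(u(t+1))$'' — which is one step of Symes's dToda map (cf. Section \ref{sec:dtoda}), read off in the coordinate form \ref{explicitDT}. Because Lemma \ref{lemma:blockpreservationsymes} guarantees the lower-right $(i-1)\times(i-1)$ block of the lower factor is the identity, the nontrivial action takes place on an $(n-i+1)\times(n-i+1)$ principal block, so the effective dimension of the dToda step decreases by one at each successive $i$; this is the ``decreasing dimension'' assertion.

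I would also need to confirm the hypotheses of Lemma \ref{lemma:blockpreservationsymes} propagate: that each $R^{(i)}(t)$ really is of the form $\mathrm{diag}(d) + \epsilon$ with positive diagonal, so that the lemma can be applied at the next stage. This is immediate from the conclusion of the lemma itself — the upper factor $R^{(i)}(t)$ inherits the structure of $R^{(i-1)}(t)$, and $R^{(0)}(t) = R(t)$ has this structure because $L(t) \in N_-^{>0}$ forces its $LU$-partner $R(t)$ to be TNN Hessenberg in the sense of Corollary \ref{cor:lw} — with the positivity of the diagonal entries being exactly what is deferred to Section \ref{sec:wp} and may be assumed here per the remark opening this section.

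The main obstacle I anticipate is not any single computation but getting the index conventions consistent: keeping straight that in the telescoping product the ``new'' $T$'s ($u(t+1)$) accumulate on the left while the ``old'' $T$'s ($u(t)$) are consumed from the left of the remaining block, and simultaneously checking that the shape of $T_i$ (which is $\ell_1 \cdots \ell_{n-i}$, occupying the \emph{top} $(n-i+1)\times(n-i+1)$ block, per the displayed block form of $T_i(u)$) is compatible with the block statement of Lemma \ref{lemma:blockpreservationsymes} (which is phrased with the trivial block at the \emph{bottom right}). Once the identification of blocks is pinned down, the rest is a direct appeal to the lemma plus the elementary telescoping induction, and the coordinate form \ref{explicitDT} can be cited to make the phrase ``dToda time step'' literal rather than merely structural.
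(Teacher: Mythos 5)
Your proposal is correct and follows essentially the same route as the paper: telescope the recursion of Definition \ref{def:dft}(ii) to obtain $L(t+1)R(t+1)=R(t)L(t)$, and use Lemma \ref{lemma:blockpreservationsymes} to identify each sub-step as a dToda map acting on the top-left $(n-i+1)\times(n-i+1)$ block, with the effective dimension dropping by one at each stage. The only point the paper makes that you elide is the explicit appeal to uniqueness of the Lusztig factorization (Theorem \ref{theorem:bfz}) to justify that the parameters produced in the intermediate lower factors are genuinely $u(t+1)$ rather than some other tuple $v$.
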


\begin{proof}
By Lemma \ref{lemma:blockpreservationsymes}, one has a factorization $T_i(u(t+1))$ in part (ii) with $u(t+1)$ replaced with some other tuple $v$ since the block structure is preserved by the (tridiagonal) dToda map. This definition would then give $$R(t)L(t)=T_1(v)T_2(v)\cdots T_{n-1}(v)R^{(n-1)}(t).$$
Thus, $T_1(v)T_2(v)\cdots T_{n-1}(v)$ and $R^{(n-1)}(t)$ are the lower and upper parts of the $LU$ decomposition of $R(t)L(t)$. Hence, $L(t+1)=T_1(v)T_2(v)\cdots T_{n-1}(v)$ and $R(t+1)=R^{(n-1)}(t)$. However, by the uniqueness of Lusztig factorization (Theorem \ref{theorem:bfz}), one must then have $v=u(t+1)$, which is why (ii) is presented in this form. This means that the recursive dToda map applications on pairs $R^{(i-1)}(t)$ and $T_i(u(t))$ combine precisely to give the dFToda evolution on the Lusztig parameters.\\[3pt]
Now we show that Definition \ref{def:dft} (ii) reduces to a dToda on the top-left $(n-i+1)\x (n-i+1)$ block. One sees this by defining $(I_j^{t,i})_{j=1}^n$ to be the diagonal entries of $R^{(i)}(t)$ for $i=0,\ldots,n-1$. Thus, 
$$I_j^{t,0}=I_j^t,~~~~I_j^{t,n-1}=I_j^{t+1}$$
in the usual notation of dToda, and one then has 
$$T_i(u(t+1))=\left[\begin{array}{ccccc|c}
1&&&&\\u_1^i(t+1) & 1&&&\\&u_2^{i+1}(t+1)&\ddots&&\\ &&\ddots &1&\\&&&u_{n-i}^{n-1}(t+1)&1&\\\hline
&&&&&\mathbb{I}_{i-1}
\end{array}\right]$$
and
$$R^{(i)}(t)=
\left[\begin{array}{cccc|cccc}
I_1^{t,i} & 1&&&&&&\\
&I_2^{t,i} & \ddots&&&&&\\
&&\ddots & 1&&\\
&& & I_{n-i+1}^{t,i}&1&\\\hline
&&& & I_{n-i+2}^{t,i-1}&1&\\
&&&& & I_{n-i+3}^{t,i-1}&\ddots&\\
&&&&& & \ddots&1\\
&&&&& & &I_{n}^{t,i-1}\\
\end{array}\right],$$
observing that the lower-right block of $R^{(i)}(t)$ is equal to the lower-right block of $R^{(i-1)}(t)$ by Lemma \ref{lemma:blockpreservationsymes}. Furthermore, because of this block structure, the top-left blocks are determined precisely by one step of the $(n-i+1)\x (n-i+1)$ dToda map:
\begin{equation}\label{eqn:dfktodasubstep}
    \left(R^{(i)}(t)\right)_{[n-i+1],[n-i+1]}\left(T_i(u(t+1))\right)_{[n-i+1],[n-i+1]}
=
\left(T_i(u(t))\right)_{[n-i+1],[n-i+1]}
\left(R^{(i-1)}(t)\right)_{[n-i+1],[n-i+1]}
\end{equation}where, for $S_1,S_2\subset [n]$, the notation $(M)_{S_1,S_2}$ picks out the submatrix of $M$ with row indices in $S_1$ and column indices in $S_2$. \textit{i.e.}
$$
\left[\begin{array}{cccc}
I_1^{t,i-1} & 1\\
& I_2^{t,i-1} & \ddots\\
& & \ddots & 1\\
& &  & I_{n-i+1}^{t,i-1}\\
\end{array}\right]
\left[\begin{array}{ccccc}
1\\
u_1^i(t) & \ddots\\
&\ddots & 1\\
&&u_{n-i}^{n-1}(t) & 1\\
\end{array}\right]$$
$$=$$
$$
\left[\begin{array}{ccccc}
1\\
u_1^i(t+1) & \ddots\\
&\ddots & 1\\
&&u_{n-i}^{n-1}(t+1) & 1\\
\end{array}\right]
\left[\begin{array}{cccc}
I_1^{t,i} & 1\\
& I_2^{t,i} & \ddots\\
& & \ddots & 1\\
& &  & I_{n-i+1}^{t,i}\\
\end{array}\right]
$$
which shows that each of the sub-steps of the algorithm in Definition \ref{def:dft} is indeed a dToda map, each time of lower dimension than the last.
\end{proof}

\subsection{Well-posedness of dFToda} \label{sec:wp}

The prior discussion in this section assumes that it is always possible to carry out the required factorizations.We now show that under some reasonable positivity assumptions these factorizations do always exist. 

\begin{thm} \label{thm:wp}
Assume that the principal minors of $X(t)$ are all positive. Then $X(t)$ necessarily has an $LU$ factorization, $X(t) = L(t) R(t)$. Assume further that $L(t)$ is totally positive with factorization (\ref{Tfact}). Then the principal minors of $X(t+1)$ are also positive, so that the factorization $X(t+1) = L(t+1) R(t+1) $ is defined. Furthermore, $L(t+1)$ is totally positive with a unique factorization of the form (\ref{Tfact}) whose coefficients are explicitly given by 
\begin{eqnarray} \label{tauf}
u_j^{i+j-1}(t+1) &=& u_j^{i+j-1}(t) \frac{\tau_{j+1}^{i-1}(t) \tau_{j-1}^i(t)}{\tau_{j}^{i-1}(t) \tau_{j}^i(t)} \,\,\, \text{with} \\ \label{taus}
\tau_{j}^i(t) & = &  \tau_j(H^{(i)}(t))=~~ \tau_j\left( R(t) T_1(u(t)) T_2(u(t)) \cdots T_i(u(t)) \right)\label{taufundefnithhess}
\end{eqnarray}
where $\tau_j(M)$ denotes the $j^{th}$ principal minor of the square matrix $M$. (For the sake of brevity, unless it is unclear from context, the time dependence of the $\tau$ functions may not be explicitly stated.)
\end{thm}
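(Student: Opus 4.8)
The plan is to run the whole argument through Theorem \ref{remark:substepsdtoda}, which already displays a single step of dFToda as a chain of $n-1$ ordinary (tridiagonal) dToda steps of decreasing size; what then remains is to propagate positivity along this chain and to match the global $\tau$-functions $\tau_j^i$ with the pivots of the local dToda steps.

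First I would dispose of the existence of the $LU$ factorization of $X(t)$: a square matrix all of whose leading principal minors are nonzero has a unique factorization $LR$ with $L$ lower unipotent and $R$ upper triangular invertible, and positivity of the principal minors of $X(t)$ supplies this, with $\operatorname{diag}R(t)=(I_1^t,\dots,I_n^t)$ where $I_j^t=\tau_j(X(t))/\tau_{j-1}(X(t))>0$. Next, recall from the proof of Theorem \ref{remark:substepsdtoda} that, with $m_i:=n-i+1$, the $i$-th substep $T_i(u(t+1))R^{(i)}(t)=R^{(i-1)}(t)T_i(u(t))$ restricts, on the top-left $m_i\times m_i$ block, to one step of the $m_i$-dimensional dToda map carrying the bidiagonal data $(\text{$i$-th block of }u(t),\ \operatorname{diag}R^{(i-1)}(t))$ to $(\text{$i$-th block of }u(t+1),\ \operatorname{diag}R^{(i)}(t))$, the lower-right blocks of $R^{(i-1)}(t)$ and $R^{(i)}(t)$ being equal by Lemma \ref{lemma:blockpreservationsymes}. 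I would then induct on $i$. Assuming $R^{(i-1)}(t)$ has positive diagonal and the $i$-th block of $u(t)$ is positive (the case $i=1$ being the preceding sentence together with the standing hypothesis that $L(t)=T_1(u(t))\cdots T_{n-1}(u(t))$ is totally positive), I read off the dToda coordinate equations (\ref{explicitDT}) for this block: they express the $i$-th block of $u(t+1)$ and $\operatorname{diag}R^{(i)}(t)$ rationally in the assumed positive data, and a short induction on the coordinate index shows every output is again positive. Crucially, these coordinate equations are equivalent to the matrix identity defining the substep (Tokihiro's computation behind (\ref{explicitDT})); hence they also \emph{realize} the factorization, so the substep — and thus the whole dFToda step — is well posed. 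Carrying this through Theorem \ref{remark:substepsdtoda} yields $L(t+1)=T_1(u(t+1))\cdots T_{n-1}(u(t+1))$ with all entries positive and $R(t+1)=R^{(n-1)}(t)$ with positive diagonal, so the principal minors of $X(t+1)=L(t+1)R(t+1)$ are positive; uniqueness of the factorization of $L(t+1)$ in the form (\ref{Tfact}) is then immediate from the bijection in Theorem \ref{theorem:bfz}.

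It remains to produce (\ref{tauf}). The lever is the elementary fact that left multiplication by a lower-unipotent matrix (and right multiplication by an upper-unipotent one) fixes every leading principal minor: if $L$ is lower triangular then $(LM)_{[k],[k]}=L_{[k],[k]}M_{[k],[k]}$, so $\tau_k(LM)=\tau_k(M)$, and symmetrically on the other side. Iterating the substep relations gives $R^{(i-1)}(t)=T_{i-1}(u(t+1))^{-1}\cdots T_1(u(t+1))^{-1}\,R(t)\,T_1(u(t))\cdots T_{i-1}(u(t))$, so $H^{(i)}(t)=R^{(i-1)}(t)T_i(u(t))$ differs from $R(t)T_1(u(t))\cdots T_i(u(t))$ only by left multiplication by lower-unipotent factors; this reconciles the two expressions in (\ref{taus}) and shows $\tau_j^i(t)=\tau_j(H^{(i)}(t))$. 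On the other hand $H^{(i)}(t)=T_i(u(t+1))R^{(i)}(t)$ with $T_i(u(t+1))$ lower unipotent, hence $\tau_j^i(t)=\tau_j(R^{(i)}(t))$, which for $j\le m_i$ is the pivot product $I_1^{t,i}\cdots I_j^{t,i}$; so $I_j^{t,i}=\tau_j^i(t)/\tau_{j-1}^i(t)$ and likewise $I_{j+1}^{t,i-1}=\tau_{j+1}^{i-1}(t)/\tau_j^{i-1}(t)$. Substituting these into the dToda relation $u_j^{i+j-1}(t+1)\,I_j^{t,i}=I_{j+1}^{t,i-1}\,u_j^{i+j-1}(t)$ from (\ref{explicitDT}) gives exactly (\ref{tauf}).

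The step I expect to be the main obstacle is this last identification: seeing that the global object $\tau_j^i(t)$ — a principal minor of the long matrix product $R(t)T_1(u(t))\cdots T_i(u(t))$ — is really the partial pivot product of the $i$-th tridiagonal dToda substep, and, entangled with it, making the positivity induction rigorous, in particular confirming via Lemma \ref{lemma:blockpreservationsymes} that each substep genuinely \emph{is} a lower-dimensional dToda step so that the coordinate formulas (\ref{explicitDT}) apply without modification. Once those are secured, the update law (\ref{tauf}) is a single substitution and well-posedness reduces to the trivial positivity of the dToda coordinate map on the positive orthant.
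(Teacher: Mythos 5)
Your proposal is correct and follows essentially the same route as the paper: decompose the dFToda step into the chain of tridiagonal dToda substeps of Theorem \ref{remark:substepsdtoda}, propagate positivity inductively through the subtraction-free coordinate equations (\ref{explicitDT}), and obtain (\ref{tauf}) by identifying the pivots $I_j^{t,i}=\tau_j^i(t)/\tau_{j-1}^i(t)$ via the invariance of principal minors under lower-unipotent left multiplication. The only (cosmetic) difference is that the paper packages the $\tau$-function bookkeeping into Lemma \ref{LemmaSemiInfiniteFactorisationTau} of Appendix \ref{appA}, whereas you perform the equivalent substitution directly in the dToda relation $V_j^{t+1}I_j^{t+1}=I_{j+1}^tV_j^t$.
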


\begin{proof}
The idea of the proof will be to decompose a time step of the dFToda dynamics into a coupled sequence of dToda time steps and then inductively apply  (\ref{explicitDT}) at each of the dToda sub-steps. For the base step of the induction consider, from Definition \ref{def:dft}, the first stage
\begin{eqnarray*}
T_1(u(t+1))R^{(1)}(t) &=& R^{(0)}(t)T_1(u(t))\\
&=& R(t)T_1(u(t)).
\end{eqnarray*}
By the assumption that the principal submatrices of $X(t)$ have positive determinants (denoted by $\tau^0_j(t)$ for the $j^{th}$ minor -- notation consistent with (\ref{taus}) -- with $\tau^0_0(t) \equiv 1$ ) it follows directly \cite{bib:strang} that 
\begin{eqnarray} \label{R-repn}
R(t) &=& \left[\begin{array}{cccc}
\tau^0_1/\tau^0_0 & 1\\
&\tau^0_2/\tau^0_1  & \ddots\\
&&\ddots & 1\\
&&&\tau^0_n/\tau^0_{n-1} 
\end{array}\right]
\end{eqnarray}
is well defined and comparing to (\ref{ltrtdtoda}) we set 
\begin{eqnarray}
I^t_j = \tau^0_j/\tau^0_{j-1}
\end{eqnarray}
and 
$$
T_1(t) = \arraycolsep=3.1pt\def\arraystretch{1.5}\left[\begin{array}{cccc}
1\\
V_1^t&1\\&\ddots&\ddots\\
&&V_{n-1}^t&1
\end{array}\right],
$$
so that
\begin{eqnarray}
V^t_j = u^j_j.
\end{eqnarray}
The assumptions of the theorem imply that 
$$
I^t_j > 0,\,\,\, j = 1, \dots, n \qquad V^t_j > 0,\,\,\,  j = 1, \dots, n-1 .
$$
One may then conclude that for 

\begin{equation}
T_1(u(t+1))=\arraycolsep=3.1pt\def\arraystretch{1.5}\left[\begin{array}{cccc}
1\\
V_1^{t+1}&1\\&\ddots&\ddots\\
&&V_{n-1}^{t+1}&1
\end{array}\right],~~~~\text{and}~~~~R^{(1)}(t)=\arraycolsep=4.4pt\def\arraystretch{1.3}\left[\begin{array}{cccc}
I_1^{t+1} & 1\\
&I_2^{t+1}&\ddots\\
&&\ddots&1\\
&&&I_n^{t+1}
\end{array}\right],
\end{equation}
due to the subtraction free form of \eqref{explicitDT}, one has that

\begin{eqnarray*}
I^{t+1}_1 &=& V^t_1 + I^{t}_1 > 0\\
I^{t+1}_2 &=& V^t_2 + \frac{I^{t}_2 I^{t}_1}{I^{t+1}_1} > 0\\
& \vdots & \\
I^{t+1}_{n-1} &=& V^t_{n-1} + \frac{I^{t}_{n-1} \cdots I^{t}_2}{I^{t+1}_{n-2} \cdots I^{t+1}_1} > 0\\
I^{t+1}_{n} &=& \frac{I^{t}_{n} \cdots I^{t}_2}{I^{t+1}_{n-1} \cdots I^{t+1}_1} > 0\\
V^{t+1}_i &=& \frac{I^{t}_{i+1}}{I^{t+1}_i} V^{t}_i > 0\\
V^{t+1}_n &=& 0.
\end{eqnarray*}
In summary, one has

$$
I^{t+1}_j > 0,\,\,\, j = 1, \dots, n \qquad V^{t+1}_j > 0,\,\,\,  j = 1, \dots, n-1 .
$$

\n Hence, the principal submatrices of $T_1(u(t+1))R^{(1)}(t)$ all have positive determinant and $T_1(u(t+1))$ is a product of elementary positive Lusztig factors in standard form. 

We are now in a position to claim the induction step; namely, that if the principal minors of $R^{(i-1)}(t)$ are all positive, and if $T_i(u(t))$ is a product of elementary positive Lusztig factors in standard form, then the same holds for $R^{(i)}(t)$ and $T_i(u(t+1))$, respectively.  The argument for this proceeds exactly as for the base case just discussed except that, as a consequence of Lemma \ref{lemma:blockpreservationsymes}, at the $i^{th}$ stage (\ref{explicitDT}) is replaced by

\begin{equation*} 
\left\{\renewcommand{\arraystretch}{1.6}\setlength{\tabcolsep}{16pt}
\begin{array}{lcl}
V_0^t= V^{t}_{n-i+1}= \cdots = V_n^t=0,\\
I_j^{t+1}=V_j^t+\dfrac{I_j^t\cdots I_{1}^t}{I_{j-1}^{t+1}\cdots I_{1}^{t+1}}&&j=1,\ldots,n,\\
V_j^{t+1}I_j^{t+1}=I_{j+1}^tV_j^t &~~~~~~~& j=1,\ldots,n-i.
\end{array}
\right.
\end{equation*}
Consequently at the $i^{th}$ stage one has

$$
I^{t+1}_j > 0,\,\,\, j = 1, \dots, n \qquad V^{t+1}_j > 0,\,\,\,  j = 1, \dots, n-i .
$$

 At the end of this inductive process one has determined an $L(t+1)$ with standard Lusztig factorization which must therefore be totally positive by Lusztig's theorem (Theorem \ref{theorem:bfz}) as well as
$R(t+1) = R^{(n-1)}(t)$ with positive diagonal entries. Hence, $X(t+1)$ has principal minors with positive determinant.

The derivation of formulae (\ref{tauf}) and (\ref{taus}) along with the explicit representation

\begin{eqnarray*}
R^{(i)}(t) &=& \left[\begin{array}{cccc}
\tau^{i}_1/\tau^{i}_0 & 1\\
&\tau^{i}_2/\tau^{i}_1  & \ddots\\
&&\ddots & 1\\
&&&\tau^{i}_n/\tau^{i}_{n-1} 
\end{array}\right]
\end{eqnarray*}
is deferred to Appendix \ref{appA}. 
\end{proof}

\begin{cor} \label{cor:wp}
Let $\mathcal{B}$ denote the subvariety of $\mathcal{H}$ comprised  of upper bidiagonal matrices of the form 
\begin{eqnarray*}
\left[\begin{array}{cccc}
* & 1\\
& *  & \ddots\\
&&\ddots & 1\\
&&& * 
\end{array}\right]
\end{eqnarray*}
and let $\mathcal{B}^{>0}$ denote the submanifold in which all diagonal entries are positive. The dFToda map is defined everywhere on the restricted phase space \begin{eqnarray} \label{resphase}
(N_-^{>0} \times \mathcal{B}^{>0}~) \subset \mathcal{H}
\end{eqnarray}
and this space is preserved under all forward iterates $ (t \in \mathbb{N} \cup\{0\}) $ of dFToda. 
\end{cor}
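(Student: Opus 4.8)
The plan is to observe that Corollary \ref{cor:wp} is essentially a repackaging of Theorem \ref{thm:wp}, once we match up the phase space $(N_-^{>0} \times \mathcal{B}^{>0})$ with the hypotheses of that theorem. First I would unwind the identification: a point of the restricted phase space is a pair $(L, R)$ with $L \in N_-^{>0}$ and $R \in \mathcal{B}^{>0}$, and to such a pair we associate the Hessenberg matrix $X = LR \in \mathcal{H}$. Conversely, I must check that this association is a bijection onto its image and that the image consists exactly of those $X \in \mathcal{H}$ all of whose principal minors are positive with totally positive lower unipotent factor. One direction is the content of (\ref{R-repn}) together with the cited result of \cite{bib:strang}: if all principal minors $\tau_j^0$ of $X$ are positive then $X$ has a (unique) $LU$ factorization with $R$ of the stated bidiagonal form having positive diagonal entries $\tau_j^0/\tau_{j-1}^0$, i.e. $R \in \mathcal{B}^{>0}$; the remaining requirement, that $L \in N_-^{>0}$, is imposed as the extra hypothesis, matching the second assumption of Theorem \ref{thm:wp} via the factorization (\ref{Tfact}) and Theorem \ref{theorem:bfz}. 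Hence the restricted phase space is precisely the set of admissible initial data for the dFToda map in the sense of Theorem \ref{thm:wp}, and in particular the map is well-defined there.

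Next I would invoke Theorem \ref{thm:wp} directly for the forward-invariance claim: under exactly these hypotheses the theorem asserts that the principal minors of $X(t+1)$ are again all positive and that $L(t+1)$ is totally positive with a factorization of the form (\ref{Tfact}). Translating back through the identification above, this says $X(t+1)$ again lies in $(N_-^{>0} \times \mathcal{B}^{>0})$: its $R$-factor $R(t+1) = R^{(n-1)}(t)$ is upper bidiagonal with positive diagonal (the explicit $\tau$-quotient form displayed at the end of the proof of Theorem \ref{thm:wp}), hence in $\mathcal{B}^{>0}$, and its $L$-factor is in $N_-^{>0}$ by Lusztig's theorem. Iterating, one step of dFToda carries the restricted phase space into itself, so by induction on $t$ the space is preserved under all forward iterates $t \in \mathbb{N} \cup \{0\}$, and the map is defined at every such iterate.

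The only genuine content beyond citing Theorem \ref{thm:wp} is the bookkeeping that the two descriptions of the phase space agree — that "$L \in N_-^{>0}$ and $R \in \mathcal{B}^{>0}$" is the same constraint as "$X$ has positive principal minors and totally positive lower factor." I expect this to be the main (though mild) obstacle: one must be careful that the $LU$ factorization is genuinely unique on this locus (it is, since $X$ is invertible with nonvanishing principal minors), so that viewing a point either as a pair $(L,R)$ or as the matrix $X=LR$ is harmless, and one must note that $\mathcal{B}^{>0} \subset \mathcal{B}$ forces the superdiagonal entries to be exactly $1$, which is what makes $X = LR$ a genuine Hessenberg matrix in $\mathcal{H}$ rather than a general matrix. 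Once this dictionary is in place, the corollary follows immediately. I would also remark in passing that the extension to all $t \in \mathbb{Z}$ alluded to in Proposition \ref{prop:dft} is \emph{not} claimed here — forward invariance is all that the positivity argument of Theorem \ref{thm:wp}, which relies on the subtraction-free form of (\ref{explicitDT}), delivers — and that this is consistent with the statement of the corollary, which restricts to $t \in \mathbb{N} \cup \{0\}$.
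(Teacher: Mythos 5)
Your proposal is correct and follows the same route the paper intends: the corollary carries no separate proof there and is meant as an immediate consequence of Theorem \ref{thm:wp}, obtained exactly as you describe by identifying the pair $(L,R)\in N_-^{>0}\times\mathcal{B}^{>0}$ with a Hessenberg matrix $X=LR$ having positive principal minors and totally positive unipotent factor, and then inducting on $t$. Your additional bookkeeping about uniqueness of the $LU$ factorization and the restriction to forward time is accurate and consistent with the paper.
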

\n In the remainder of this paper we denote $(N_-^{>0} \times \mathcal{B}^{>0})$ by 
$\mathcal{H}^{>0}$ and refer to it as the subvariety of totally positive Hessenberg matrices. Indeed, (\ref{resphase}) is totally positive with respect to the space of Hessenberg matrices in that every minor that does not identically vanish on $\mathcal{H}$ must be positive. Moreover, this space is dense in the space of TNN Hessenberg matrices as defined in Section \ref{sec:lw}. Going forward we will denote this latter space by $\mathcal{H}^{\geq0}$. 

\subsection{Extensions and Reductions} \label{ExtRed}

We conclude this section by describing an extension of the dFToda dynamics and generalizations of its phase space. 

\begin{prop}
The dFToda dynamics can be defined for all time $t \in \mathbb{Z}$. The extension to negative time, $t \in \{ -1, -2, -3, \dots \}$ is given by UL factorization: if $X(t) = R(t)L(t)$ then define
\begin{eqnarray} \nonumber
X(t - 1) &=& L(t)R(t)\\ \label{reverse}
&=& L(t) X(t) L(t)^{-1}.
\end{eqnarray}
This is well-defined for all negative time.
\end{prop}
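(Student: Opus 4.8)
The plan is to mirror the argument already used for forward time (Proposition \ref{prop:dft} and Corollary \ref{cor:wp}), but now running the Symes step in reverse. The key observation is that if $X(t)=L(t)R(t)$ is the $LU$ factorization defining forward dFToda, then passing to $X(t+1)=R(t)L(t)$ is exactly a $UL$ factorization of $X(t+1)$: the factors $R(t)$ (upper triangular) and $L(t)$ (lower unipotent) are a valid $UL$ decomposition of $X(t+1)$, and by the uniqueness of such a factorization (when it exists) this is \emph{the} $UL$ factorization. Hence the prescription ``$UL$-factor $X(t)=R(t)L(t)$, then set $X(t-1)=L(t)R(t)$'' is genuinely inverse to the forward map wherever both are defined, and the two agree with the equivalent conjugation formula $X(t-1)=L(t)X(t)L(t)^{-1}$ in \eqref{reverse} (this is the analogue of \eqref{symeseqn}, since $L(t)R(t)=L(t)(L(t)^{-1}X(t))$).

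The main steps, in order, are as follows. First I would establish existence and uniqueness of the $UL$ factorization $X(t)=R(t)L(t)$ on the relevant phase space. This is the transpose/anti-transpose counterpart of the $LU$ statement: one conjugates by the longest permutation matrix $\widehat w_0$ (equivalently applies $M\mapsto \widehat w_0 M^T \widehat w_0$) to turn $UL$ factorizations into $LU$ factorizations, so existence is governed by nonvanishing of the \emph{anti}-principal minors (the trailing principal minors) of $X(t)$. On $\mathcal H^{>0}$ these are positive because the matrix is totally positive, so the factorization exists. Second, I would check that the reverse map preserves the phase space $\mathcal H^{>0}=N_-^{>0}\times\mathcal B^{>0}$: since the reverse map is the set-theoretic inverse of the forward dFToda map, and Corollary \ref{cor:wp} shows the forward map is a well-defined self-map of $\mathcal H^{>0}$, it suffices to argue the forward map is a bijection of $\mathcal H^{>0}$ onto itself — or, more directly, to rerun the positivity bookkeeping of Theorem \ref{thm:wp} in reverse, noting that the dToda substep relations \eqref{explicitDT} can be solved backward in a subtraction-free way (one recovers $I^t_j,V^t_j$ from $I^{t+1}_j,V^{t+1}_j$ rationally with positive coefficients). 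Third, I would note isospectrality is automatic since \eqref{reverse} is again a conjugation, so the backward orbit stays on the same isospectral leaf, and the factorization-theorem formula \eqref{todasolnfactorsconj} extends verbatim to $t\in\mathbb Z$ once one interprets $X_0^t$ for negative $t$ via the inverse (legitimate here because $X_0$ has positive eigenvalues, cf.\ Remark \ref{rem:symes}).

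The step I expect to be the main obstacle is the clean identification of ``$UL$-factorability of $X(t)$'' with a condition that is manifestly satisfied on the phase space we care about — that is, making precise \emph{which} minors must be nonzero and checking they are. The forward direction used positivity of the leading principal minors (Gaussian elimination from the top-left); the reverse needs the trailing principal minors, and one must verify that the conjugation-by-$\widehat w_0$ dictionary sends one to the other correctly and that total positivity of $X(t)\in\mathcal H^{>0}$ indeed forces all of these to be strictly positive (not merely nonnegative), so that no singularity is encountered for any $t\le 0$. Once that dictionary is set up, the remaining assertions — that \eqref{reverse} inverts the forward map, preserves $\mathcal H^{>0}$, and is defined for all negative time — follow formally from the forward results already proved.
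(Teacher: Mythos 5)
Your proposal is correct and rests on the same mechanism as the paper's proof: the backward Symes step is governed by a subtraction-free recursion (the paper writes it out explicitly as the analogue of \eqref{explicitDT}, recovering $I_i^{t-1},V_i^{t-1}$ from the time-$t$ data with only additions, multiplications and divisions of positive quantities), so the positivity bookkeeping of Theorem \ref{thm:wp} and Corollary \ref{cor:wp} carries over verbatim to all negative time. Your additional discussion of $UL$-factorability via trailing principal minors and conjugation by $\widehat{w}_0$ is correct but is detail the paper elides, not a different route.
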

\begin{proof}
This UL factorization clearly constitutes a reverse flow consistent with that of the Symes evolution. This flow exists for all $t$ by the same arguments applied in Theorem \ref{thm:wp} and Corollary \ref{cor:wp}. This follows because the equations for the backwards time flow (\ref{reverse}) have an entirely similar subtraction-free form:
\begin{equation} \label{backwardDT}
\left\{\renewcommand{\arraystretch}{1.6}\setlength{\tabcolsep}{16pt}
\begin{array}{lcl}
V_0^t=V_n^t=0,\\
I_i^{t-1}=V_i^t+\dfrac{I_n^tI_{n-1}^t\cdots I_{i}^t}{I_{n}^{t-1}I_{n-1}^{t-1}\cdots I_{i+1}^{t-1}}&&i=1,\ldots,n,\\
V_i^{t-1}=\dfrac{I_{i}^tV_i^t}{I_{i+1}^{t-1}} &~~~~~~~& i=1,\ldots,n-1.
\end{array}
\right.
\end{equation}
\end{proof}


\begin{defn}
For $d\in\{1,2,\ldots,n\}$, let $\mathcal{H}^d$ be the subset of $\mathcal{H}$ for which entries $h_{ij}=0$ whenever $i-j>d$, \textit{i.e.}, this is the set of lower Hessenberg matrices that are zero below the first $d$ sub-diagonals. ($\mathcal{H}^1$ is the subspace we have referred to as tridiagonal Hessenberg matrices.)  Let $N_-^d$ denote elements of $N_-$ for which $n_{ij}=0$ whenever $i-j>d$ and $N_-^{d, > 0}$ the set of totally positive elements in $N_-^d$, \textit{i.e.}, those elements whose minors that do not identically vanish on $N_-^d$ have a positive value.
\end{defn}

\begin{prop}
The dFToda map is defined everywhere on the subspace $(N_-^{d,>0} \times \mathcal{B}^{>0}) \subset \mathcal{H}^d$ and this space is preserved under all iterations, for $t \in \mathbb{Z}$, of the map. For $d<n$ we may refer to this as a {\rm band-limited} sub-system of dFToda. 
\end{prop}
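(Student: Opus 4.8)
The claim is that the band-limited phase space $(N_-^{d,>0}\times\mathcal{B}^{>0})\subset\mathcal{H}^d$ is invariant under dFToda for all $t\in\mathbb{Z}$ and that the map is everywhere defined there. The natural approach is to mimic, with minimal changes, the structure already established for the full case in Lemma~\ref{lemma:blockpreservationsymes}, Theorem~\ref{thm:wp}, Corollary~\ref{cor:wp}, and the backward-time Proposition. Since $\mathcal{H}^1$ is exactly the tridiagonal case and $\mathcal{H}^n=\mathcal{H}$ recovers the unrestricted result, this should be seen as an interpolation between those two; the key structural fact to isolate is that the band-width of the Hessenberg matrix is a conserved quantity of the flow.

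First I would show that $\mathcal{H}^d$ is preserved. If $X(t)=L(t)R(t)$ with $X(t)\in\mathcal{H}^d$, then $L(t)\in N_-^d$: the lower-unipotent factor of a matrix vanishing below the $d$-th subdiagonal itself vanishes below the $d$-th subdiagonal (this is immediate from the recursive formulae of Gaussian elimination, or from the fact that the Lusztig factorization (\ref{Tfact}) of $L(t)$, restricted to $N_-^{d,>0}$, only involves the factors $T_1,\dots,T_d$ — since $T_{d+1},\dots,T_{n-1}$ contribute exactly the subdiagonals beyond the $d$-th). Then $X(t+1)=R(t)L(t)$ is a product of an upper bidiagonal matrix (from $\mathcal{B}^{>0}$, contributing at most one superdiagonal band upward) with an element of $N_-^d$, so it lies in $\mathcal{H}^d\cap\{\text{Hessenberg}\}$; the band-width is not increased. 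The analogous statement for the backward (UL) flow follows the same way from (\ref{reverse}).

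Next, for well-posedness and preservation of positivity, I would run the recursive dToda decomposition of Theorem~\ref{remark:substepsdtoda} but truncated: because $L(t)=T_1(u(t))\cdots T_d(u(t))$ in the band-limited case, Definition~\ref{def:dft} produces only the stages $i=1,\dots,d$, and each stage is, by Lemma~\ref{lemma:blockpreservationsymes}, a genuine dToda step on a block of size $n-i+1$. The subtraction-free recursions (\ref{explicitDT}) then propagate positivity of the $I_j$'s and of the relevant $V_j$'s exactly as in the proof of Theorem~\ref{thm:wp}; the $\tau$-function update formula (\ref{tauf}) holds verbatim with the index $i$ now ranging only up to $d$. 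So the principal minors of $X(t+1)$ stay positive, $L(t+1)\in N_-^{d,>0}$, $R(t+1)\in\mathcal{B}^{>0}$, and the whole argument runs in reverse using (\ref{backwardDT}), giving invariance for all $t\in\mathbb{Z}$.

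The only point that needs genuine care — and which I expect to be the main obstacle — is verifying that total positivity \emph{relative to $N_-^d$} (rather than relative to $N_-$) is the right notion and is actually preserved: one must check that the minors of $L(t+1)$ that do not identically vanish on $N_-^d$ are positive, using that $L(t+1)$ admits a standard Lusztig-type factorization into the elementary factors $\ell_1,\dots,\ell_d$ with positive parameters. This requires the analogue of Theorem~\ref{theorem:bfz} for the sub-word of $\mathbf{h^0}$ that generates the relevant parabolic-type subgroup $N_-^d$; one should either cite the appropriate Berenstein--Fomin--Zelevinsky result for reduced words of sub-permutations, or observe directly that $N_-^{d,>0}$ is the image of a subtraction-free rational (in fact polynomial) parametrization and invoke the Loewner--Whitney/Fomin--Zelevinsky machinery of Section~\ref{sec:lw} that a matrix built from elementary tridiagonal factors with positive entries is TNN, with the non-vanishing minors strictly positive. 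Once that identification is in hand, the rest is a routine transcription of the $d=n$ arguments.
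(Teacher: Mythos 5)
Your proof is correct, and its core — truncating the recursive dToda decomposition at stage $d$ and propagating positivity through the subtraction-free recursions, with the backward flow handled by (\ref{backwardDT}) — is exactly the paper's argument. The one place you genuinely diverge is in how the invariance of the band-limited subspace itself is established. The paper gets it from the dynamics of the Lusztig parameters: since the update (\ref{tauf}) is multiplicative, $u_j^{i+j-1}(t+1)=u_j^{i+j-1}(t)\cdot(\text{ratio of nonvanishing }\tau\text{'s})$, the parameters in columns $i>d$ are zero initially and remain zero forever, while those in columns $i\le d$ remain positive; hence $L(t+1)=T_1(u(t+1))\cdots T_d(u(t+1))$ and the iterate stays in $\mathcal{H}^d$. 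You instead argue directly at the matrix level (the lower factor of an LU decomposition of a matrix with lower bandwidth $d$ has lower bandwidth $d$, and $R(t)L(t)$ does not increase the bandwidth), which is more elementary and does not presuppose that the factorization machinery survives on the boundary stratum; the paper's route has the advantage of simultaneously delivering the positivity of the surviving parameters in one stroke. Your closing concern — that one must verify total positivity \emph{relative to} $N_-^d$ is the right preserved notion, via the sub-word analogue of Theorem \ref{theorem:bfz} or the Loewner--Whitney machinery — is a legitimate point that the paper passes over silently with ``the other arguments \ldots continue to hold''; flagging it is fair, and resolving it the way you propose (positive parameters in the factors $T_1,\dots,T_d$ force positivity of every minor not identically vanishing on $N_-^d$) is the intended reading.
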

 \begin{proof} If an initial condition lies in $(N_-^{d,>0} \times \mathcal{B}^{>0})$ then (\ref{tauf}) still holds and it follows from this that iterates remain in this band-limited subspace. 
(Note that from (\ref{taus}), by induction the tau factors appearing in (\ref{tauf}) are non-vanishing.) With this subspace invariance in place, the other arguments that were applied for $d = n$ continue to hold, but terminate 
at the $d^{th}$ stage.
 \end{proof}

\section{Ultradiscrete Full Toda: tropicalizing Lusztig Parameters} \label{sec:ud}
We now begin the process of ultradiscretising the recursive dToda description of dFToda given in Section \ref{section:dfktodarecdtoda}.
\begin{defn}
Define variables $(Q_j^{t,i}(\epsilon))_{j=1}^{n}$ and $W_j^i(t)(\epsilon)$ via
$$I_j^{t,i}=e^{-\frac{1}{\epsilon}Q_j^{t,i}(\epsilon)},~~~~~~~j=1,\ldots,n,~~i=0,\ldots,n-1,$$
and
$$u_j^i(t)=e^{-\frac{1}{\epsilon}W_j^i(t)(\epsilon)},~~~~~~~1\leq j\leq i\leq n-1.$$
\end{defn}

\n For each $1\leq i\leq n-1$, Equation (\ref{eqn:dfktodasubstep}), 
$$    \left(R^{(i)}(t)\right)_{[n-i+1],[n-i+1]}\left(T_i(u(t+1))\right)_{[n-i+1],[n-i+1]}
=
\left(T_i(u(t))\right)_{[n-i+1],[n-i+1]}
\left(R^{(i-1)}(t)\right)_{[n-i+1],[n-i+1]}$$
gives, via ultradiscretization (see Section \ref{sec:ultradis}), a map from a tuple 
$$(W_1^i(t),W_2^{i+1}(t),\ldots,W_{n-i}^{n-1}(t),\ldots Q_1^{t,i-1},Q_2^{t,i-1},\ldots,Q_{n-i+1}^{t,i-1})$$ 
to 
$$(W_1^i(t+1),W_2^{i+1}(t+1),\ldots,W_{n-i}^{n-1}(t+1),\ldots Q_1^{t,i},Q_2^{t,i},\ldots,Q_{n-i+1}^{t,i}),$$
which, by Theorem \ref{remark:substepsdtoda}, is the tropicalization of a dToda evolution, hence it is interpreted as an $(n-i+1)$-soliton box-ball evolution. The precise correspondence here is that for each step $i$:

\begin{itemize}
    \item Prior to the BBS evolution, the $j$-th block of balls consists of $Q_j^{t,i-1}$ balls, and the gap between the $j$-th and $(j+1)$-st block of balls is of length $W_j^{i+j-1}(t)$.
    \item After the BBS evolution, the $j$-th block of balls consists of $Q_j^{t,i}$ balls, and the gap between the $j$-th and $(j+1)$-st block of balls is of length $W_j^{i+j-1}(t+1)$. 
\end{itemize}

\n Thus, one sees that the overall dFToda  evolution
$$
\left(
(I_j^{t,0})_{j=1}^n,~(u_i^j(t))_{1\leq i\leq j \leq n-1}
\right)
~~\mapsto~~
\left(
(I_j^{t,n-1})_{j=1}^n,~(u_i^j(t+1))_{1\leq i\leq j \leq n-1}
\right)
$$
produces an overall ultradiscretised evolution (which we will call udFToda):
$$
\left(
(Q_j^{t,0})_{j=1}^n,~(W_i^j(t))_{1\leq i\leq j \leq n-1}
\right)
~~\mapsto~~
\left(
(Q_j^{t,n-1})_{j=1}^n,~(W_i^j(t+1))_{1\leq i\leq j \leq n-1}
\right)
$$
whose BBS coordinate interpretation is presented as follows.

\begin{defn}\label{definitionofudftodaevol}
The udFToda system is an evolution on pairs $(\mathbf{Q}(t),\mathbf{W}(t))$ where  $$\mathbf{Q}(t)=(Q_j^t)_{j=1}^n,~~~~~\mathbf{W}(t)=(W_i^j(t))_{1\leq i\leq j\leq n-1},$$
with $Q_j^t\in \N$ and $W_i^j(t)\in \N$. The evolution is given in $n-1$ steps:
\begin{itemize}
    \item The first step is to implement the BBS evolution on $$(\infty,Q_1^t,W_1^1(t),Q_2^t,W_2^2(t),\ldots,W_{n-1}^{n-1}(t),Q_{n}^t,\infty)$$
    and define $W_j^j(t+1)$ to be the length of the $j$-th gap and $Q_{n}^{t+1}$ to be the length of the $(n-1)$-st block of balls after this application of the BBS evolution.
    \item For the $i$-th step (with $i>1$), take the first $n-i+1$ blocks of balls (\textit{i.e.}, all but the last) and separate them with the following sequence of gaps
    $$W_1^i(t),~W_2^{i+1}(t),~\ldots,~W_{n-i}^{n-1}(t).$$
    Perform the BBS evolution on this box-ball configuration and define $W_j^{i+j-1}(t+1)$ to be the length of the $j$-th gap and $Q_{n-i+1}^{t+1}$ to be the length of the last block of balls after this application of the BBS evolution.
\end{itemize}
After completing steps $1$ through $(n-1)$, all of the time $t+1$ variables 
$$\mathbf{Q}(t+1)=(Q_j^{t+1})_{j=1}^n,~~~~~\mathbf{W}(t+1)=(W_i^j(t+1))_{1\leq i\leq j\leq n-1}$$
have been defined.
\end{defn}

\begin{ex}\label{udftodaexample}
Let us compute a single time step of the udFToda flow for $n=4$:
$$\mathbf{Q}(t)=(2,1,3,1),~~~\mathbf{W}(t)=\begin{array}{ccccc}
&&3&&\\&1&&1&\\2&&1&&2
\end{array}.$$
For convenience, we write $\mathbf{W}(t)$ as compactly as a triangular array whose bottom (first) row is $(W_1^1(t),W_2^2(t),W_3^3(t))$, whose second row is $(W_1^2(t),W_2^3(t))$, and whose top (third) row is $(W_1^3(t))$.\\
To determine the first row of $\mathbf{W}(t+1)$ and last entry of $\mathbf{Q}(t+1)$, one takes $\mathbf{Q}(t)$ as the block sizes, with gaps given by the bottom row of $\mathbf{W}(t)$, performing a BBS evolution to obtain:
\begin{figure}[H]
\centering
\tikz[scale=0.39]{
\foreach \x in {0,1,2,3,4,5,6,7,8,9,10,11,12,13,14,15,16,17,18,19,20,21,22,23,24,25,26,27}
{\draw[fill=white]  (\x,3) -- (\x+1,3) -- (\x+1,4) -- (\x,4) -- cycle;			
}
\foreach \x in {1,2,5,7,8,9,12}
{\draw[fill=white]  (\x,3) -- (\x+1,3) -- (\x+1,4) -- (\x,4) -- cycle;			
\fill[cyan] (\x+0.5,3.5) circle (0.25);
}
\foreach \x in {}
{\draw[fill=white]  (\x,3) -- (\x+1,3) -- (\x+1,4) -- (\x,4) -- cycle;			
\fill[red] (\x+0.5,3.5) circle (0.25);
}
\foreach \x in {28}
{\draw[fill=white,white]  (\x,3) -- (\x+2,3) -- (\x+2,4) -- (\x,4) -- cycle;
\draw[-] (\x,3) -- (\x,4);
\draw[-] (\x,3) -- (\x+2,3);
\draw[-] (\x,4) -- (\x+2,4);
\draw[-] (\x+1,3) -- (\x+1,4);
\node at (\x+1.5,3.5) {~\scriptsize{$\cdots$}};
}
\foreach \x in {0}
{\draw[fill=white,white]  (\x,3) -- (\x-2,3) -- (\x-2,4) -- (\x,4) -- cycle;
\draw[-] (\x,3) -- (\x,4);
\draw[-] (\x,3) -- (\x-2,3);
\draw[-] (\x,4) -- (\x-2,4);
\draw[-] (\x-1,3) -- (\x-1,4);
\node at (\x-1.5,3.5) {\scriptsize{$\cdots$}};
}
}
\\[3pt]
\tikz[scale=0.39]{
\foreach \x in {0,1,2,3,4,5,6,7,8,9,10,11,12,13,14,15,16,17,18,19,20,21,22,23,24,25,26,27}
{\draw[fill=white]  (\x,3) -- (\x+1,3) -- (\x+1,4) -- (\x,4) -- cycle;			
}
\foreach \x in {3,4,6,10,11,13,14}
{\draw[fill=white]  (\x,3) -- (\x+1,3) -- (\x+1,4) -- (\x,4) -- cycle;			
\fill[cyan] (\x+0.5,3.5) circle (0.25);
}
\foreach \x in {}
{\draw[fill=white]  (\x,3) -- (\x+1,3) -- (\x+1,4) -- (\x,4) -- cycle;			
\fill[red] (\x+0.5,3.5) circle (0.25);
}
\foreach \x in {28}
{\draw[fill=white,white]  (\x,3) -- (\x+2,3) -- (\x+2,4) -- (\x,4) -- cycle;
\draw[-] (\x,3) -- (\x,4);
\draw[-] (\x,3) -- (\x+2,3);
\draw[-] (\x,4) -- (\x+2,4);
\draw[-] (\x+1,3) -- (\x+1,4);
\node at (\x+1.5,3.5) {~\scriptsize{$\cdots$}};
}
\foreach \x in {0}
{\draw[fill=white,white]  (\x,3) -- (\x-2,3) -- (\x-2,4) -- (\x,4) -- cycle;
\draw[-] (\x,3) -- (\x,4);
\draw[-] (\x,3) -- (\x-2,3);
\draw[-] (\x,4) -- (\x-2,4);
\draw[-] (\x-1,3) -- (\x-1,4);
\node at (\x-1.5,3.5) {\scriptsize{$\cdots$}};
}
}
\end{figure}
\n From this, we read off the following:
$$\mathbf{Q}(t+1)=(?,?,?,2),~~~\mathbf{W}(t+1)=\begin{array}{ccccc}
&&?&&\\&?&&?&\\1&&3&&1
\end{array}.$$
Next, to determine the second row of $\mathbf{W}(t+1)$ and the penultimate entry of $\mathbf{Q}(t+1)$, one uses all but the last blocks of the previously obtained BBS (i.e., block lengths of $2$, $1$ and $2$) and separates them with gaps given by the second row of $\mathbf{W}(t)$ to obtain:
\begin{figure}[H]
\centering
\tikz[scale=0.39]{
\foreach \x in {0,1,2,3,4,5,6,7,8,9,10,11,12,13,14,15,16,17,18,19,20,21,22,23,24,25,26,27}
{\draw[fill=white]  (\x,3) -- (\x+1,3) -- (\x+1,4) -- (\x,4) -- cycle;			
}
\foreach \x in {1,2,4,6,7}
{\draw[fill=white]  (\x,3) -- (\x+1,3) -- (\x+1,4) -- (\x,4) -- cycle;			
\fill[cyan] (\x+0.5,3.5) circle (0.25);
}
\foreach \x in {}
{\draw[fill=white]  (\x,3) -- (\x+1,3) -- (\x+1,4) -- (\x,4) -- cycle;			
\fill[red] (\x+0.5,3.5) circle (0.25);
}
\foreach \x in {28}
{\draw[fill=white,white]  (\x,3) -- (\x+2,3) -- (\x+2,4) -- (\x,4) -- cycle;
\draw[-] (\x,3) -- (\x,4);
\draw[-] (\x,3) -- (\x+2,3);
\draw[-] (\x,4) -- (\x+2,4);
\draw[-] (\x+1,3) -- (\x+1,4);
\node at (\x+1.5,3.5) {~\scriptsize{$\cdots$}};
}
\foreach \x in {0}
{\draw[fill=white,white]  (\x,3) -- (\x-2,3) -- (\x-2,4) -- (\x,4) -- cycle;
\draw[-] (\x,3) -- (\x,4);
\draw[-] (\x,3) -- (\x-2,3);
\draw[-] (\x,4) -- (\x-2,4);
\draw[-] (\x-1,3) -- (\x-1,4);
\node at (\x-1.5,3.5) {\scriptsize{$\cdots$}};
}
}
\\[3pt]
\tikz[scale=0.39]{
\foreach \x in {0,1,2,3,4,5,6,7,8,9,10,11,12,13,14,15,16,17,18,19,20,21,22,23,24,25,26,27}
{\draw[fill=white]  (\x,3) -- (\x+1,3) -- (\x+1,4) -- (\x,4) -- cycle;			
}
\foreach \x in {3,5,8,9,10}
{\draw[fill=white]  (\x,3) -- (\x+1,3) -- (\x+1,4) -- (\x,4) -- cycle;			
\fill[cyan] (\x+0.5,3.5) circle (0.25);
}
\foreach \x in {}
{\draw[fill=white]  (\x,3) -- (\x+1,3) -- (\x+1,4) -- (\x,4) -- cycle;			
\fill[red] (\x+0.5,3.5) circle (0.25);
}
\foreach \x in {28}
{\draw[fill=white,white]  (\x,3) -- (\x+2,3) -- (\x+2,4) -- (\x,4) -- cycle;
\draw[-] (\x,3) -- (\x,4);
\draw[-] (\x,3) -- (\x+2,3);
\draw[-] (\x,4) -- (\x+2,4);
\draw[-] (\x+1,3) -- (\x+1,4);
\node at (\x+1.5,3.5) {~\scriptsize{$\cdots$}};
}
\foreach \x in {0}
{\draw[fill=white,white]  (\x,3) -- (\x-2,3) -- (\x-2,4) -- (\x,4) -- cycle;
\draw[-] (\x,3) -- (\x,4);
\draw[-] (\x,3) -- (\x-2,3);
\draw[-] (\x,4) -- (\x-2,4);
\draw[-] (\x-1,3) -- (\x-1,4);
\node at (\x-1.5,3.5) {\scriptsize{$\cdots$}};
}
}
\end{figure}
\n From this, we read off the following:
$$\mathbf{Q}(t+1)=(?,?,3,2),~~~\mathbf{W}(t+1)=\begin{array}{ccccc}
&&?&&\\&1&&2&\\1&&3&&1
\end{array}.$$
Finally, we can determine the third (and final) row of $\mathbf{W}(t+1)$ as well as the second (and first) entries of $\mathbf{Q}(t+1)$ by taking all but the last blocks of the previously obtained BBS (i.e., block lengths of $1$ and $1$) and separates them with gaps given by the third row of $\mathbf{W}(t)$ to obtain:
\begin{figure}[H]
\centering
\tikz[scale=0.39]{
\foreach \x in {0,1,2,3,4,5,6,7,8,9,10,11,12,13,14,15,16,17,18,19,20,21,22,23,24,25,26,27}
{\draw[fill=white]  (\x,3) -- (\x+1,3) -- (\x+1,4) -- (\x,4) -- cycle;			
}
\foreach \x in {1,5}
{\draw[fill=white]  (\x,3) -- (\x+1,3) -- (\x+1,4) -- (\x,4) -- cycle;			
\fill[cyan] (\x+0.5,3.5) circle (0.25);
}
\foreach \x in {}
{\draw[fill=white]  (\x,3) -- (\x+1,3) -- (\x+1,4) -- (\x,4) -- cycle;			
\fill[red] (\x+0.5,3.5) circle (0.25);
}
\foreach \x in {28}
{\draw[fill=white,white]  (\x,3) -- (\x+2,3) -- (\x+2,4) -- (\x,4) -- cycle;
\draw[-] (\x,3) -- (\x,4);
\draw[-] (\x,3) -- (\x+2,3);
\draw[-] (\x,4) -- (\x+2,4);
\draw[-] (\x+1,3) -- (\x+1,4);
\node at (\x+1.5,3.5) {~\scriptsize{$\cdots$}};
}
\foreach \x in {0}
{\draw[fill=white,white]  (\x,3) -- (\x-2,3) -- (\x-2,4) -- (\x,4) -- cycle;
\draw[-] (\x,3) -- (\x,4);
\draw[-] (\x,3) -- (\x-2,3);
\draw[-] (\x,4) -- (\x-2,4);
\draw[-] (\x-1,3) -- (\x-1,4);
\node at (\x-1.5,3.5) {\scriptsize{$\cdots$}};
}
}
\\[3pt]
\tikz[scale=0.39]{
\foreach \x in {0,1,2,3,4,5,6,7,8,9,10,11,12,13,14,15,16,17,18,19,20,21,22,23,24,25,26,27}
{\draw[fill=white]  (\x,3) -- (\x+1,3) -- (\x+1,4) -- (\x,4) -- cycle;			
}
\foreach \x in {2,6}
{\draw[fill=white]  (\x,3) -- (\x+1,3) -- (\x+1,4) -- (\x,4) -- cycle;			
\fill[cyan] (\x+0.5,3.5) circle (0.25);
}
\foreach \x in {}
{\draw[fill=white]  (\x,3) -- (\x+1,3) -- (\x+1,4) -- (\x,4) -- cycle;			
\fill[red] (\x+0.5,3.5) circle (0.25);
}
\foreach \x in {28}
{\draw[fill=white,white]  (\x,3) -- (\x+2,3) -- (\x+2,4) -- (\x,4) -- cycle;
\draw[-] (\x,3) -- (\x,4);
\draw[-] (\x,3) -- (\x+2,3);
\draw[-] (\x,4) -- (\x+2,4);
\draw[-] (\x+1,3) -- (\x+1,4);
\node at (\x+1.5,3.5) {~\scriptsize{$\cdots$}};
}
\foreach \x in {0}
{\draw[fill=white,white]  (\x,3) -- (\x-2,3) -- (\x-2,4) -- (\x,4) -- cycle;
\draw[-] (\x,3) -- (\x,4);
\draw[-] (\x,3) -- (\x-2,3);
\draw[-] (\x,4) -- (\x-2,4);
\draw[-] (\x-1,3) -- (\x-1,4);
\node at (\x-1.5,3.5) {\scriptsize{$\cdots$}};
}
}
\end{figure}
\n From this final calculation, we can now write down result of applying the udFToda evolution to $\mathbf{Q}(t)$ and $\mathbf{W}(t)$:
$$\mathbf{Q}(t+1)=(1,1,3,2),~~~\mathbf{W}(t+1)=\begin{array}{ccccc}
&&3&&\\&1&&2&\\1&&3&&1
\end{array}.$$
\end{ex}

\subsection{Diagrammatic Representation of udFToda}
To aid in visualising the udFToda process, we employ a diagrammatic representation of a BBS (udToda) evolution. We will represent a BBS evolution
$$(Q_1^t,W_1^t,Q_2^t,\ldots,W_{n-1}^t,Q_n^t)\mapsto
(Q_1^{t+1},W_1^{t+1},Q_2^{t+1},\ldots,W_{n-1}^{t+1},Q_n^{t+1})$$
(where we drop the $\infty$'s on either end) in the following diagram:

\begin{figure}[H]
\centering
\scalebox{0.75}{
\tikz[scale=5]{
\foreach \s in {0.10}{
\foreach \x in {1,...,5}{
\foreach \n in {6}{
\ifnum \x<4
\node at (\x,\n-2) {$Q_{\x}^{t}$};
\fi
}
}
\foreach \x in {1,...,5}{
\foreach \n in {5}{
\ifnum \x<4
\node at (\x,\n) {$Q_{\x}^{t+1}$};
\fi
\draw[->](\x,\n-1+\s)--(\x,\n-\s);
\ifnum \x<5
\draw[-](\x+\s,\n) -- (\x+0.5-\s*1.5,\n-\s);
\draw[-](\x+1-\s,\n) -- (\x+0.5+\s*1.5,\n-\s);
\ifnum \x<3
\node at (\x+0.5,\n-\s) {{$W_{\x}^{t+1}$}};
\fi
\draw[-](\x+\s,\n-1) -- (\x+0.5-\s*1.5,\n+\s-1);
\draw[-](\x+1-\s,\n-1) -- (\x+0.5+\s*1.5,\n+\s-1);
\ifnum \x<3
\node at (\x+0.5,\n+\s-1) {{$W_{\x}^{t}$}};
\fi
\fi}
}
\node at (3+0.5,5-\s) {\tiny{$\cdots$}};
\node at (3+0.5,5+\s-1) {\tiny{$\cdots$}};
\node at (4,5) {$Q_{n-1}^{t+1}$};
\node at (4,4) {$Q_{n-1}^{t}$};
\node at (4+0.5,5-\s) {{$W_{n-1}^{t+1}$}};
\node at (4+0.5,5+\s-1) {{$W_{n-1}^{t}$}};
\node at (5,5) {$Q_{n}^{t+1}$};
\node at (5,4) {$Q_{n}^{t}$};
}
}
}
\end{figure}

\begin{ex}
Consider the following BBS evolution:
\begin{figure}[H]
\centering
\tikz[scale=0.44]{
\foreach \s in {-1}{
\node (1) at (-4,3.5) {$t$: };
\node (2) at (-4,3.5+\s) {$t+1$: };
\foreach \x in {0,1,2,3,4,5,6,7,8,9,10,11,12,13,14,15}
{
\draw[fill=white]  (\x,3+\s) -- (\x+1,3+\s) -- (\x+1,4+\s) -- (\x,4+\s) -- cycle;			
}
\foreach \x in {4,5,6,8,12,14,15}
{\draw[fill=white]  (\x,3+\s) -- (\x+1,3+\s) -- (\x+1,4+\s) -- (\x,4+\s) -- cycle;			
\fill[cyan] (\x+0.5,3.5+\s) circle (0.25);
}
\foreach \x in {}
{\draw[fill=white]  (\x,3+\s) -- (\x+1,3+\s) -- (\x+1,4+\s) -- (\x,4+\s) -- cycle;			
\fill[red] (\x+0.5,3.5+\s) circle (0.25);
}
\foreach \x in {16}
{\draw[fill=white,white]  (\x,3+\s) -- (\x+2,3+\s) -- (\x+2,4+\s) -- (\x,4+\s) -- cycle;
\draw[-] (\x,3+\s) -- (\x,4+\s);
\draw[-] (\x,3+\s) -- (\x+2,3+\s);
\draw[-] (\x,4+\s) -- (\x+2,4+\s);
\draw[-] (\x+1,3+\s) -- (\x+1,4+\s);
\node at (\x+1.5,3.5+\s) {$\cdots$};
}
\foreach \x in {0}
{\draw[fill=white,white]  (\x,3) -- (\x-2,3) -- (\x-2,4) -- (\x,4) -- cycle;
\draw[-] (\x,3+\s) -- (\x,4+\s);
\draw[-] (\x,3+\s) -- (\x-2,3+\s);
\draw[-] (\x,4+\s) -- (\x-2,4+\s);
\draw[-] (\x-1,3+\s) -- (\x-1,4+\s);
\node at (\x-1.5,3.5+\s) {$\cdots$};
}
}
\foreach \x in {0,1,2,3,4,5,6,7,8,9,10,11,12,13,14,15}
{\draw[fill=white]  (\x,3) -- (\x+1,3) -- (\x+1,4) -- (\x,4) -- cycle;			
}
\foreach \x in {1,2,3,7,10,11,13}
{\draw[fill=white]  (\x,3) -- (\x+1,3) -- (\x+1,4) -- (\x,4) -- cycle;			
\fill[cyan] (\x+0.5,3.5) circle (0.25);
}
\foreach \x in {}
{\draw[fill=white]  (\x,3) -- (\x+1,3) -- (\x+1,4) -- (\x,4) -- cycle;			
\fill[red] (\x+0.5,3.5) circle (0.25);
}
\foreach \x in {16}
{\draw[fill=white,white]  (\x,3) -- (\x+2,3) -- (\x+2,4) -- (\x,4) -- cycle;
\draw[-] (\x,3) -- (\x,4);
\draw[-] (\x,3) -- (\x+2,3);
\draw[-] (\x,4) -- (\x+2,4);
\draw[-] (\x+1,3) -- (\x+1,4);
\node at (\x+1.5,3.5) {$\cdots$};
}
\foreach \x in {0}
{\draw[fill=white,white]  (\x,3) -- (\x-2,3) -- (\x-2,4) -- (\x,4) -- cycle;
\draw[-] (\x,3) -- (\x,4);
\draw[-] (\x,3) -- (\x-2,3);
\draw[-] (\x,4) -- (\x-2,4);
\draw[-] (\x-1,3) -- (\x-1,4);
\node at (\x-1.5,3.5) {$\cdots$};
}
}
\end{figure}
\n which is represented diagrammatically by
\begin{figure}[H]
\centering
\tikz[scale=0.54]{
\foreach \s in {12}{
\node (A) at (-3,3+\s-\s*0.2) {3};
\node (B) at (3,3+\s-\s*0.2) {1};
\node (C) at (9,3+\s-\s*0.2) {1};
\node (D) at (15,3+\s-\s*0.2) {2};
\node (E) at (0,3+\s-\s*0.2-\s*0.1) {1};
\node (F) at (6,3+\s-\s*0.2-\s*0.1) {3};
\node (G) at (12,3+\s-\s*0.2-\s*0.1) {1};
\node (H) at (0,3+\s*0.4) {3};
\node (I) at (6,3+\s*0.4) {2};
\node (J) at (12,3+\s*0.4) {1};
\node (K) at (-3,3+\s*0.3) {3};
\node (L) at (3,3+\s*0.3) {1};
\node (M) at (9,3+\s*0.3) {2};
\node (N) at (15,3+\s*0.3) {1};
\draw[->] (K) -- (A);
\draw[->] (L) -- (B);
\draw[->] (M) -- (C);
\draw[->] (N) -- (D);
\draw[-] (A) -- (E);
\draw[-] (B) -- (E);
\draw[-] (B) -- (F);
\draw[-] (C) -- (F);
\draw[-] (C) -- (G);
\draw[-] (D) -- (G);
\draw[-] (K) -- (H);
\draw[-] (L) -- (H);
\draw[-] (L) -- (I);
\draw[-] (M) -- (I);
\draw[-] (M) -- (J);
\draw[-] (N) -- (J);
}
}
\end{figure}
\end{ex}

\n Using these basic BBS building blocks, we can now stack them into a representation of the full udFToda evolution. To illustrate the point, the following is the resulting diagram for the udFToda evolution when $n=5$:

\begin{figure}[H]
\centering
\scalebox{1}{
\tikz[scale=2.8]{
\foreach \s in {0.14}{
\foreach \x in {1,...,2}{
\foreach \n [evaluate=\n as \neval using int(\n+\x-1)] in {4}{
\node at (\x,\n) {$Q^{(\n)}_{\x}$};
\draw[->](\x,\n-1+\s)--(\x,\n-\s);
\ifnum \x<2
\draw[-](\x+\s,\n) -- (\x+0.5-\s*1.5,\n-\s);
\draw[-](\x+1-\s,\n) -- (\x+0.5+\s*1.5,\n-\s);
\node at (\x+0.5,\n-\s) {\tiny{$W_{\x}^{\neval}(t+1)$}};
\draw[-](\x+\s,\n-1) -- (\x+0.5-\s*1.5,\n+\s-1);
\draw[-](\x+1-\s,\n-1) -- (\x+0.5+\s*1.5,\n+\s-1);
\node at (\x+0.5,\n+\s-1) {\tiny{$W_{\x}^{\neval}(t)$}};
\fi}
}
\foreach \x in {1,...,3}{
\foreach \n [evaluate=\n as \neval using int(\n+\x-1)] in {3}{
\node at (\x,\n) {$Q^{(\n)}_{\x}$};
\draw[->](\x,\n-1+\s)--(\x,\n-\s);
\ifnum \x<3
\draw[-](\x+\s,\n) -- (\x+0.5-\s*1.5,\n-\s);
\draw[-](\x+1-\s,\n) -- (\x+0.5+\s*1.5,\n-\s);
\node at (\x+0.5,\n-\s) {\tiny{$W_{\x}^{\neval}(t+1)$}};
\draw[-](\x+\s,\n-1) -- (\x+0.5-\s*1.5,\n+\s-1);
\draw[-](\x+1-\s,\n-1) -- (\x+0.5+\s*1.5,\n+\s-1);
\node at (\x+0.5,\n+\s-1) {\tiny{$W_{\x}^{\neval}(t)$}};
\fi}
}
\foreach \x in {1,...,4}{
\foreach \n [evaluate=\n as \neval using int(\n+\x-1)] in {2}{
\node at (\x,\n) {$Q^{(\n)}_{\x}$};
\draw[->](\x,\n-1+\s)--(\x,\n-\s);
\ifnum \x<4
\draw[-](\x+\s,\n) -- (\x+0.5-\s*1.5,\n-\s);
\draw[-](\x+1-\s,\n) -- (\x+0.5+\s*1.5,\n-\s);
\node at (\x+0.5,\n-\s) {\tiny{$W_{\x}^{\neval}(t+1)$}};
\draw[-](\x+\s,\n-1) -- (\x+0.5-\s*1.5,\n+\s-1);
\draw[-](\x+1-\s,\n-1) -- (\x+0.5+\s*1.5,\n+\s-1);
\node at (\x+0.5,\n+\s-1) {\tiny{$W_{\x}^{\neval}(t)$}};
\fi}
}
\foreach \x in {1,...,5}{
\foreach \n [evaluate=\n as \neval using int(\n+\x-1)] in {1}{
\node at (\x,\n) {$Q^{(\n)}_{\x}$};
\draw[->](\x,\n-1+\s)--(\x,\n-\s);
\ifnum \x<5
\draw[-](\x+\s,\n) -- (\x+0.5-\s*1.5,\n-\s);
\draw[-](\x+1-\s,\n) -- (\x+0.5+\s*1.5,\n-\s);
\node at (\x+0.5,\n-\s) {\tiny{$W_{\x}^{\neval}(t+1)$}};
\draw[-](\x+\s,\n-1) -- (\x+0.5-\s*1.5,\n+\s-1);
\draw[-](\x+1-\s,\n-1) -- (\x+0.5+\s*1.5,\n+\s-1);
\node at (\x+0.5,\n+\s-1) {\tiny{$W_{\x}^{\neval}(t)$}};
\fi}
}
\foreach \x in {1,...,5}{
\foreach \n in {0}{
\node at (\x,\n) {$Q^{(\n)}_{\x}$};
}
}
\foreach \x in {1,...,5}{
\node at (\x,0-\s) {\rotatebox{90}{=}};
\node at (\x,-2*\s) {$Q_{\x}^t$};
}
\foreach \x in {2,...,5}{
\node at (\x,7-\x+\s-1) {\rotatebox{90}{=}};
\node at (\x,7-\x+2*\s-1) {$~~~Q_{\x}^{t+1}$};
}
\node at (1,5+\s-1) {\rotatebox{90}{=}};
\node at (1,5+2*\s-1) {$~~~Q_{1}^{t+1}$};
}
\foreach \x in {1,...,4}{
\node at (0.4,\x-0.5) {Step \x ~~~$\rightarrow$ };
}
}
}
\end{figure}

\n Observe that one does indeed have four steps, each of which determines the rows of $\mathbf{W}(t+1)$ row-by-row (from the bottom, up) and the entries of $\mathbf{Q}(t+1)$ from right-to-left.\\[4pt]


\begin{ex} The following is the all-in-one diagrammatic representation of the single time step evolution in Example \ref{udftodaexample}:
\begin{figure}[H]
\centering
\tikz[scale=0.7]{
\node (A1) at (0,0) {2};
\node (B1) at (3,0) {1};
\node (C1) at (6,0) {3};
\node (D1) at (9,0) {1};
\node (A2) at (0,3) {2};
\node (B2) at (3,3) {1};
\node (C2) at (6,3) {2};
\node (D2) at (9,3) {\textbf{\color{blue}{2}}};
\node (A3) at (0,6) {1};
\node (B3) at (3,6) {1};
\node (C3) at (6,6) {\textbf{\color{blue}{3}}};
\node (A4) at (0,9) {\textbf{\color{blue}{1}}};
\node (B4) at (3,9) {\textbf{\color{blue}{1}}};
\node (E1) at (1.5,0.6) {2};
\node (F1) at (4.5,0.6) {1};
\node (G1) at (7.5,0.6) {2};
\node (E2) at (1.5,2.4) {\textbf{\color{red}{1}}};
\node (F2) at (4.5,2.4) {\textbf{\color{red}{3}}};
\node (G2) at (7.5,2.4) {\textbf{\color{red}{1}}};
\node (E3) at (1.5,3.6) {1};
\node (F3) at (4.5,3.6) {1};
\node (E4) at (1.5,5.4) {\textbf{\color{red}{1}}};
\node (F4) at (4.5,5.4) {\textbf{\color{red}{2}}};
\node (E5) at (1.5,6.6) {3};
\node (E6) at (1.5,8.4) {\textbf{\color{red}{3}}};
\draw [-] (A1) -- (E1);
\draw [-] (B1) -- (E1);
\draw [-] (B1) -- (F1);
\draw [-] (C1) -- (F1);
\draw [-] (C1) -- (G1);
\draw [-] (D1) -- (G1);
\draw [-] (A2) -- (E2);
\draw [-] (B2) -- (E2);
\draw [-] (B2) -- (F2);
\draw [-] (C2) -- (F2);
\draw [-] (C2) -- (G2);
\draw [-] (D2) -- (G2);
\draw [-] (A2) -- (E3);
\draw [-] (B2) -- (E3);
\draw [-] (B2) -- (F3);
\draw [-] (C2) -- (F3);
\draw [-] (A3) -- (E4);
\draw [-] (B3) -- (E4);
\draw [-] (B3) -- (F4);
\draw [-] (C3) -- (F4);
\draw [-] (A3) -- (E5);
\draw [-] (B3) -- (E5);
\draw [-] (A4) -- (E6);
\draw [-] (B4) -- (E6);
\draw [->] (A1) -- (A2);
\draw [->] (A2) -- (A3);
\draw [->] (A3) -- (A4);
\draw [->] (B1) -- (B2);
\draw [->] (B2) -- (B3);
\draw [->] (B3) -- (B4);
\draw [->] (C1) -- (C2);
\draw [->] (C2) -- (C3);
\draw [->] (D1) -- (D2);
}
\end{figure}
\n In the above, for ease of comprehension, we use bold blue numbers to represent the entries of $\mathbf{Q}(t+1)$ and bold red numbers for the entries of the triangular array $\mathbf{W}(t+1)$. The bold numbers, collectively, provide all of the time $t+1$ information.
\end{ex}

\subsection{RSK}
In \cite{bib:era}, we demonstrated how the classical box-ball system can be extended to a cellular automaton that captures the udToda dynamics when coordinates are allowed to degenerate to zero. A particular application of this was realised through the encoding of the basic version of a combinatorial algorithm known as \textit{Schensted insertion}. In that paper, the version realised was of word insertion into a word. The full version of Schensted insertion, used in proving the famous Robinson-Schensted-Knuth correspondence, involves word insertion into a so-called \textit{semistandard Young tableau}. It is known that Schensted insertion is an iterated process of coupled Schensted word insertions \cite{bib:ny}. At the level of coordinates, therefore, one should expect the full Schensted insertion process to be captured by udFToda, with zeroes allowed in coordinates. The following example illustrates this.

\begin{ex}\label{schenstedudftoda} The process of Schensted inserting \cite{bib:aigner} the word $1112334$ (or $1^32^13^24^1$) into the tableau 
$$\begin{ytableau}
1&1&2&3&3\\
2&2&4\\3&3\\4&4
\end{ytableau}$$
results in the following tableau:

$$\begin{ytableau}
1&1&1&1&1&2&3&3&4\\2&2&2&3&3\\3&3&4\\4&4
\end{ytableau}.$$
We initialise udFToda with $\mathbf{Q}(t)=(0,3,1,2,1)$\footnote{The initial 0 in the first entry of $\mathbf{Q}(t)$, which propagates for all time $t$, is a necessary inclusion for the encoding of an insertion word \cite{bib:era}. Also, as an implicit extension of \cite[Equation 4.35]{bib:era} is the interpretation of $\mathbf{Q}(t+1)$ as an accumulation of row growths.} and $\mathbf{W}(t)=\begin{array}{ccccccc}
&&&2&&&\\
&&2&&0&&\\
&2&&0&&1&\\
2&&1&&2&&0
\end{array}$, so that $\mathbf{Q}(t)$ encodes the insertion word and $\mathbf{W}(t)$ encodes the initial tableau. The encoding is given by counting 1's, 2's, 3's and 4's. For the tableau, one counts in the rows. After performing the udFToda time evolution, as described in Definition \ref{definitionofudftodaevol}, one obtains $\mathbf{Q}(t+1)=(0,0,1,2,4)$ and $\mathbf{W}(t+1)=\begin{array}{ccccccc}
&&&2&&&\\
&&2&&1&&\\
&3&&2&&0&\\
5&&1&&2&&1
\end{array}$. The triangle $\mathbf{W}(t+1)$ captures the result of Schensted inserting the insertion word into the initial tableau. For reference, we include the udFToda diagram for the time evolution $(\mathbf{Q}(t),\mathbf{W}(t))\mapsto (\mathbf{Q}(t+1),\mathbf{W}(t+1))$.

\begin{figure}[H]
\centering
\tikz[scale=0.7]{
\node (H1) at (-3,0) {0};
\node (H2) at (-3,3) {0};
\node (H3) at (-3,6) {0};
\node (H4) at (-3,9) {0};
\node (H5) at (-3,12) {0};
\node (H6) at (0,12) {0};
\node (A1) at (0,0) {3};
\node (B1) at (3,0) {1};
\node (C1) at (6,0) {2};
\node (D1) at (9,0) {1};
\node (A2) at (0,3) {1};
\node (B2) at (3,3) {2};
\node (C2) at (6,3) {0};
\node (D2) at (9,3) {4};
\node (A3) at (0,6) {0};
\node (B3) at (3,6) {1};
\node (C3) at (6,6) {2};
\node (A4) at (0,9) {0};
\node (B4) at (3,9) {1};
\node (J1) at (-1.5,0.6) {2};
\node (E1) at (1.5,0.6) {1};
\node (F1) at (4.5,0.6) {2};
\node (G1) at (7.5,0.6) {0};
\node (J2) at (-1.5,2.4) {5};
\node (E2) at (1.5,2.4) {1};
\node (F2) at (4.5,2.4) {2};
\node (G2) at (7.5,2.4) {1};
\node (J3) at (-1.5,3.6) {2};
\node (E3) at (1.5,3.6) {0};
\node (F3) at (4.5,3.6) {1};
\node (J4) at (-1.5,5.4) {3};
\node (E4) at (1.5,5.4) {2};
\node (F4) at (4.5,5.4) {0};
\node (J5) at (-1.5,6.6) {2};
\node (E5) at (1.5,6.6) {0};
\node (J6) at (-1.5,8.4) {2};
\node (E6) at (1.5,8.4) {1};
\node (J7) at (-1.5,9.6) {2};
\node (E7) at (-1.5,11.4) {2};
\draw [-] (A1) -- (E1);
\draw [-] (B1) -- (E1);
\draw [-] (B1) -- (F1);
\draw [-] (C1) -- (F1);
\draw [-] (C1) -- (G1);
\draw [-] (D1) -- (G1);
\draw [-] (A2) -- (E2);
\draw [-] (B2) -- (E2);
\draw [-] (B2) -- (F2);
\draw [-] (C2) -- (F2);
\draw [-] (C2) -- (G2);
\draw [-] (D2) -- (G2);
\draw [-] (A2) -- (E3);
\draw [-] (B2) -- (E3);
\draw [-] (B2) -- (F3);
\draw [-] (C2) -- (F3);
\draw [-] (A3) -- (E4);
\draw [-] (B3) -- (E4);
\draw [-] (B3) -- (F4);
\draw [-] (C3) -- (F4);
\draw [-] (A3) -- (E5);
\draw [-] (B3) -- (E5);
\draw [-] (A4) -- (E6);
\draw [-] (B4) -- (E6);
\draw [->] (A1) -- (A2);
\draw [->] (A2) -- (A3);
\draw [->] (A3) -- (A4);
\draw [->] (B1) -- (B2);
\draw [->] (B2) -- (B3);
\draw [->] (B3) -- (B4);
\draw [->] (C1) -- (C2);
\draw [->] (C2) -- (C3);
\draw [->] (D1) -- (D2);
\draw [->] (H1) -- (H2);
\draw [->] (H2) -- (H3);
\draw [->] (H3) -- (H4);
\draw [->] (H4) -- (H5);
\draw [->] (A4) -- (H6);
\draw [-] (H1) -- (J1);
\draw [-] (A1) -- (J1);
\draw [-] (H2) -- (J2);
\draw [-] (A2) -- (J2);
\draw [-] (H2) -- (J3);
\draw [-] (A2) -- (J3);
\draw [-] (H3) -- (J5);
\draw [-] (A3) -- (J5);
\draw [-] (H3) -- (J4);
\draw [-] (A3) -- (J4);
\draw [-] (H4) -- (J6);
\draw [-] (A4) -- (J6);
\draw [-] (H4) -- (J7);
\draw [-] (A4) -- (J7);
\draw [-] (H5) -- (E7);
\draw [-] (H6) -- (E7);
}
\end{figure}
\end{ex}

\begin{rem}\label{rem:rskasfulludtoda}
We observe further that the vector $\mathbf{Q}(t+1)$ captures precisely the amount by which each row of the tableau has grown. In Example \ref{schenstedudftoda}, one sees that $\mathbf{Q}(t+1)=(0,0,1,2,4)$, when read from right to left, is understood as a growth of 4 boxes in the first row, a growth of 2 boxes in the second row, and a growth of 1 box in the third row. The fourth row does not grow. We state this now, referring the reader to Equation 4.53 of \cite{bib:era} for the key component required to prove this observation inductively. The referenced equation says that a word insertion into a row of a tableau grows the row by the last coordinate of $\mathbf{Q}(t+1)$. The recursive applications for lower dimensions, corresponding to the subsequent rows, yields the full result.\\[-8pt] 

\n The significance of this observation is that a recursive application of udFToda, driven inhomogeneously by a sequence of words, captures the full RSK correspondence (the insertion and recording tableaux, for those familiar with this combinatorial bijection) and will be explored further by the authors in an upcoming paper. 
\end{rem}

\section{The Geometry of the Full Kostant Toda Lattice and its Discretizations} \label{geometry}
 
 As mentioned at the end of Section \ref{history}, the matrix reformulation of the Toda lattice in Flaschka's variables leads to natural Lie-theoretic interpretations and extensions. While we continue to remain in the phase space setting of Hessenberg matrices, we introduce, in this section, a modicum of concepts/terminology from the theory of Lie groups and their representations that will facilitate a natural geometric re-interpretation of our results. 

\subsection{Kostant's Theorem, a Flag Manifold Embedding and Linearization} \label{sec:kostant}
 
 \n We consider the Lie algebra decomposition of $n \times n$ matrices
 
 \begin{eqnarray} \label{frak}
 \frak{g} = \frak{gl}(n, \mathbb{R}) & = & \frak{n}_- \oplus \frak{b}_+  
 \end{eqnarray}
 where $\frak{n}_-$ is the lower triangular nilpotent sub-algebra (nilradical) and $\frak{b}_+$ is a maximal solvable sub-algebra, referred to as a {\it Borel sub-algebra}:
 \begin{eqnarray*}
 \frak{n}_- =  \left(\begin{array}{ccccc}
0 &  & & &\\
*& 0 &  & &\\
\vdots & \ddots & \ddots & \ddots &\\
\vdots &  & \ddots & \ddots & \\
* & \dots & \dots & * & 0
\end{array} \right), &&
\frak{b}_+ = \left(\begin{array}{ccccc}
* & * &\dots &\dots &*\\
& * & * & &\vdots\\
& & \ddots & \ddots &\vdots\\
& & & \ddots & *\\
& & & & *
\end{array} \right).
\end{eqnarray*}
We will also use $\frak{b}_-$ to denote the transpose of $\frak{b}_+$. Employing the {\it principal nilpotent} element, 
 \begin{eqnarray*}
\epsilon &=& \left(\begin{array}{ccccc}
0 & 1 & & &\\
& 0 & 1 & &\\
& & \ddots & \ddots &\\
& & & \ddots & 1\\
& & & & 0
\end{array} \right)
\end{eqnarray*}
one defines an extended Toda phase space
\begin{eqnarray*}
\epsilon + \frak{b}_- &=& \left(\begin{array}{ccccc}
* & 1 & & &\\
*& * & 1 & &\\
\vdots & \ddots & \ddots & \ddots &\\
\vdots &  & \ddots & \ddots & 1\\
* & \dots & \dots & * & *
\end{array} \right), \\
\end{eqnarray*}
 (which is the space of all lower Hessenberg matrices, $\mathcal{H}$, introduced earlier) on which the Toda Lax equation (\ref{Lax2}) as well as the discrete time Toda equation (\ref{symeseqn}) and their extensions are defined.  It will be useful to introduce the following algebra and group projections,
\begin{eqnarray*}
 \pi_- : \frak{g} \to  \frak{n}_-, \qquad
  \Pi_- : G \to  N_- \\
 \pi_+ : \frak{g} \to  \frak{b}_+, \qquad
 \Pi_+ : G \to  B_+
\end{eqnarray*}
where $G = GL(n, \mathbb{R})$, $N_-$, the lower unipotent matrices, is the exponential group of the algebra $\frak{n}_-$ and $B_+$, the invertible upper triangular matrices, is the exponential group of the algebra $\frak{b}_+$. $\Pi_\pm$ are defined on the open dense subset of $G$ where there is an $LU$ factorization.

 These factorizations provide the basis for effective linearization in both continuous and discrete time Toda. This is facilitated by a key theorem due to Kostant that provides a natural embedding of the Toda dynamics into a flag manifold that plays a role analogous to action-angle variables in classical integrable systems theory.\\
 \smallskip
 
 Before getting to the statement of Kostant's theorem we introduce the following distinguished bidiagonal element of $\mathcal{H}$,
 
\begin{equation} \label{epslam}
\epsilon_\Lambda = \left[\begin{array}{cccc}
\lambda_1 & 1\\
&\lambda_2 & \ddots\\
&&\ddots & 1\\
&&&\lambda_n
\end{array}\right],
\end{equation}
where $ \lambda_1 > \cdots > \lambda_n $ are the eigenvalues of $X_0$ which, for simplicity of exposition, we assume to be distinct. By isospectrality, these are the eigenvalues of $X(t)$ for all $t$. It is immediate from (\ref{Lax2}) that 
$\epsilon_\Lambda$ is a fixed point of the Toda flow.

\begin{thm} \cite{bib:kostant} \label{KThm} For each $X \in \epsilon + \frak{b}_-$ there exists a {\bf unique} lower unipotent $L \in N_-$ such that 
$$
X = L^{-1} \epsilon_\Lambda L.
$$
\end{thm}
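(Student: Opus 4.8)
The plan is to exhibit $L$ by a direct construction: I will show that every $X \in \epsilon + \bfr_-$ is conjugated to the companion matrix of its characteristic polynomial by an \emph{explicit} lower-unipotent matrix, and since $\epsilon_\Lambda$ shares both the form $\epsilon + \bfr_-$ and (by isospectrality) that characteristic polynomial, the two conjugations can be spliced together. Existence and uniqueness are then argued separately, existence being the substantive half.

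\emph{Existence.} To $X \in \epsilon + \bfr_-$ associate the matrix $C_X$ whose $k$-th row is $e_1^\top X^{\,k-1}$, $k = 1, \dots, n$. Because $X$ carries $1$'s on its superdiagonal and vanishes strictly above it, a one-line induction shows that $e_1^\top X^{\,m} = (\text{a covector supported on coordinates } 1,\dots,m) + e_{m+1}^\top$; hence $C_X$ is lower triangular with all diagonal entries equal to $1$, i.e. $C_X \in N_-$, and in particular $C_X$ is invertible. Writing the characteristic polynomial of $X$ as $p(x) = x^n - c_{n-1} x^{n-1} - \dots - c_0$ and applying Cayley--Hamilton to compute the image of the last row, one obtains $C_X X = S_p\, C_X$, where $S_p$ is the companion matrix of $p$ and depends only on $\lambda_1,\dots,\lambda_n$. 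Since $\epsilon_\Lambda$ also lies in $\epsilon + \bfr_-$, the same reasoning gives $C_{\epsilon_\Lambda} \in N_-$, and since $\epsilon_\Lambda$ is isospectral to $X$ one gets $C_{\epsilon_\Lambda}\,\epsilon_\Lambda = S_p\, C_{\epsilon_\Lambda}$ with the \emph{same} $S_p$. Eliminating $S_p$ from the two relations yields $L X L^{-1} = \epsilon_\Lambda$, equivalently $X = L^{-1}\epsilon_\Lambda L$, with $L \doteq C_{\epsilon_\Lambda}^{-1} C_X$; and $L \in N_-$ because $N_-$ is a group.

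\emph{Uniqueness.} If $L_1, L_2 \in N_-$ both satisfy the conclusion, then $N \doteq L_2^{-1} L_1 \in N_-$ commutes with $X$. As $X$ has $n$ distinct eigenvalues it is diagonalizable, and every matrix commuting with it is diagonalized in the same basis; thus $N$ is diagonalizable. Being lower unipotent, $N$ has all eigenvalues equal to $1$, so $N = I$ and $L_1 = L_2$.

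The step demanding the most care is the assertion that $C_X$ is lower \emph{unipotent}: this is precisely where the normalization built into $\epsilon + \bfr_-$ (superdiagonal entries equal to $1$, nothing above the superdiagonal) is indispensable, and it is what forces $L = C_{\epsilon_\Lambda}^{-1} C_X$ to land in $N_-$ rather than merely in $GL(n,\R)$. Everything else --- the Cayley--Hamilton identity $C_X X = S_p C_X$ and the commutation argument for uniqueness --- is routine bookkeeping, provided one stays consistent about conjugation directions, noting that $X = L^{-1}\epsilon_\Lambda L$ is the same as $L X L^{-1} = \epsilon_\Lambda$.
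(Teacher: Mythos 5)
Your argument is correct. Note first that the paper does not prove this statement at all --- it is quoted from Kostant's work \cite{bib:kostant} --- so there is no in-paper proof to compare against; judged on its own, your construction is a clean, elementary, and fully self-contained alternative to Kostant's Lie-theoretic treatment. The existence half is essentially a Krylov/cyclic-covector argument: the normalization of $\epsilon+\frak{b}_-$ forces the matrix $C_X$ with rows $e_1^\top X^{k-1}$ to be lower unipotent, Cayley--Hamilton gives $C_X X C_X^{-1}=S_p$ with $S_p$ depending only on the characteristic polynomial, and splicing this with the identical relation for $\epsilon_\Lambda$ produces the explicit conjugator $L=C_{\epsilon_\Lambda}^{-1}C_X\in N_-$; all steps check out. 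Two small remarks. First, the statement only makes sense for $X$ with spectrum $\Lambda$ (i.e.\ on $\mathcal{H}_\Lambda$), which your existence step uses implicitly when asserting that the two companion matrices coincide; this matches the paper's standing convention. Second, your uniqueness argument leans on the distinctness of the eigenvalues via diagonalizability. That is legitimate here, since the paper assumes distinct eigenvalues, but it is worth observing that you have already done the work needed for the general case: the unipotence of $C_X$ shows $e_1^\top$ is a cyclic covector for $X$, so the centralizer of $X$ consists of polynomials $p(X)$, and since $X^k$ has a nonzero entry in position $(1,k+1)$, a lower-triangular polynomial in $X$ must be scalar, hence a unipotent one is the identity. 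That variant removes the spectral hypothesis from the uniqueness half and recovers Kostant's theorem in full generality.
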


Set 
$$
\mathcal{H}_\Lambda = \{ X \in \epsilon + \frak{b}_- : \sigma(X) = \lambda_1 > \cdots > \lambda_n\}, 
$$
the isospectral manifold. It is immediate from the uniqueness statement in Theorem \ref{KThm} that the following defines an embedding into a compact, homogeneous space, known as a 
{\it flag manifold}. 
\begin{eqnarray} \label{eq:principal}
\kappa_\Lambda : \mathcal{H}_\Lambda &\to & G/B_+\\ \nonumber
X &\mapsto& L \mod B_+
\end{eqnarray}
where $X = L^{-1} \epsilon_\Lambda L.$ 

This mapping simultaneously linearizes and completes the Toda flows 
\cite{bib:efh, bib:efs}.  We will refer to $\kappa_\Lambda$ as the {\it principal embedding} because of its relation to the principal nilpotent element $\epsilon$. 

\n If $X_0 = L_0^{-1} \epsilon_\Lambda L_0$ then 
\begin{eqnarray} \nonumber
X(s) &=& \Pi_-^{-1}(e^{s X_0})X_0 \Pi_-(e^{s X_0})\\ \nonumber
     &=& \Pi_-^{-1}(e^{s X_0})L_0^{-1}\epsilon_\Lambda L_0 \Pi_-(e^{s X_0})\\ \nonumber
\kappa_\Lambda(X(s)) &=& L_0 \Pi_-(e^{s X_0}) \mod B_+\\ \nonumber 
&=& L_0 (e^{s X_0}) \mod B_+\\ \label{linprinc}
&=& e^{\epsilon_\Lambda s} L_0 \mod B_+.
\end{eqnarray} 
Thus, one sees that under the embedding $\kappa_\Lambda$, the Toda flow maps to a linear semigroup action by $e^{\epsilon_\Lambda s}$ through the image of the initial value,
$\kappa_\Lambda(X_0) = L_0$ which exists for all time.

The continuous Toda flow commutes with the discrete time Toda evolution. In fact, there is a hierarchy of continuous flows commuting with dFToda and with one another given by Lax equations of the form 
\begin{eqnarray} \label{hierarchy}
\frac{d}{ds_m} X = \left[ X, \pi_- X^m\right].
\end{eqnarray}
On $\mathcal{H}_\Lambda$ the first $n-1$ of these flows are locally independent and their respective images under the principal embedding has the form

\begin{eqnarray} \label{hiflow}
e^{\epsilon^m_\Lambda s_m} L_0 \mod B_+
\end{eqnarray}
and together they generate a torus action (see Section \ref{sec:toremb}) that spans the closure of $\kappa_\Lambda(\mathcal{H}_\Lambda)$ in $G/B_+$ \cite{bib:efs}.
\medskip{}

\begin{rem} \label{rem:Lie} The language we have used in this section to describe the Toda phase space and dynamics (Borel and nilradical sub-algebras, principal nilpotent elements) have immediate extensions to the setting of general real semisimple Lie algebras to define {\it full} versions \cite{bib:efs} of the so-called {\it generalized Toda lattices} \cite{bib:kostant2}. Kostant also introduced an analogue of $\epsilon_\Lambda$ in Lemma 3.52 of \cite{bib:kostant2}.
This element is distinguished by the fact that it lies in the intersection 
$\mathcal{H}_\Lambda  \cap \frak{b_+}$, and in that regard it is essentially unique. (There are $n!$ elements in this intersection corresponding to permutations of the distinct eigenvalues.)
\end{rem}

\subsection{Representation Theory and tau-functions} \label{sec:repn}

Let $G$ denote the group $GL(\ell +1)$ and let $(\rho_n, V_n)$ denote the $n^{th}$ fundamental representation of $G$. $\rho_1$ is the 
{\it birth representation} with $V_1 = \mathbb{C}^{\ell + 1}$ defined by 
$$
\rho_1(g) v = g v
$$
for $v \in V_1$. This induces a representation on the exterior algebra of $V_1$ that defines the remaining fundamental representations, respectively, on 
$V_n = \bigwedge^n \mathbb{C}^{\ell + 1}$ given by 
$$
\rho_n(g) v_1 \wedge \dots \wedge v_n = g v_1 \wedge \dots \wedge g v_n.
$$
With respect to the standard basis $e_1, \dots, e_{\ell + 1}$ of $\mathbb{C}^{\ell + 1}$
one defines a Hermitian inner product on $V_n$ by 
$$
\langle e_{i_1} \wedge \dots \wedge e_{i_n}, e_{j_1} \wedge \dots \wedge e_{j_n}\rangle = \delta_{i_1, j_1} \cdots \delta_{i_n, j_n}.
$$
Set $v^{(n)} = e_1 \wedge \dots \wedge e_n$ and $v_{(n)} = e_{\ell - n + 2} \wedge \dots \wedge e_{\ell + 1}$. These are, respectively, the highest and lowest weight vectors, with respect to lexicographic order, of the representation $\rho_n$. \cite{bib:fuha}

We can now define the $n^{th}$ {\it $\tau$-function} to be
$$
\tau_n(s) = \langle \exp(s X_0) v^{(n)}, v^{(n)}\rangle
$$
which is just the $n^{th}$ principal minor of $\exp{s X_0}$.


\subsection{Homogeneous Space Representations of dFToda} \label{sec:homog}

 Invoking Proposition \ref{prop:dft}, the principal embedding defined for Full Toda may be dynamically extended to its discretization as

\begin{eqnarray} \nonumber
\kappa_\Lambda: \mathcal{H}^{>0}_\Lambda & \to & G/B_+\\
X(t) &\mapsto& \epsilon^t_\Lambda L_0 \mod B_+  \label{kappaSymes}
\end{eqnarray}
for $t \in \mathbb{Z}$. If $X(0)$ is totally positive then, by Corollary \ref{cor:wp}, $X(t)$ remains totally positive and so defined for all time. Thus, $\kappa_\Lambda(X(t))$ is defined and by (\ref{linprinc}) and Proposition \ref{prop:dft} it takes the value stated in (\ref{kappaSymes}). Moreover, if $L_0$ is totally positive, then $[\epsilon^t_\Lambda L_0]_-$ is totally positive as well (see Remark \ref{tphiflow}).

However, it is important to note that the total positivity of $X$ does not necessarily imply that $L = \kappa_\Lambda(X)$ is totally positive. Nonetheless, the Symes factorization algorithm may be directly related to the principal embedding as follows. Begin by applying Kostant's theorem to a totally positive initial condition $X(0)$ followed by its LU factorization, and then by the Symes commutation action and continue to iterate this
\begin{eqnarray*}
X(0) &=& L_0^{-1} \epsilon_\Lambda L_0\\
&=& L(0) R(0)\\
X(1) &=& L(0)^{-1} X(0) L(0)\\
&=& L(1) R(1)\\
X(2) &=& L(1)^{-1} X(1) L(1)\\
&=& L(2) R(2) \cdots.
\end{eqnarray*}
By Corollary \ref{cor:wp}, the sequence $L(0), L(1), L(2), \dots $ is well-defined (for all discrete time) and totally positive. This sequence defines an orbit under another flag manifold embedding
\begin{eqnarray} \label{Symesemb}
S_\Lambda: \mathcal{H}^{>0}_\Lambda &\to& (G/B_+)^{>0} \\
X &\to & \,\,\,\, L 
\end{eqnarray}
where $X = LR$. This will be referred to as the {\it Symes embedding}. 

\begin{rem} \label{TPFlag}
For a complete introduction to the totally positive flag manifold $(G/B_+)^{>0}$ and relations to FKToda we refer the reader to \cite{bib:kw}. For our purposes here we will only need to consider $N_-^{>0}$ which is an open dense cell in $(G/B_+)^{>0}$. By Corollary \ref{cor:wp}, dFToda preserves $N_-^{>0}$; so we slightly abuse nomenclature when we speak of the Lusztig parametrization as providing a parametrization of $(G/B_+)^{>0}$.
\end{rem}

\begin{rem} \label{tphiflow}
We note that since $e^{\epsilon^m_\Lambda s_m}$ is manifestly totally positive and invertible, if $L_0$ is also totally positive then by the Loewner-Whitney theorem the product in (\ref{hiflow}) is TNN and so 
by Corollary \ref{cor:lw} 
the corresponding flow induced through $\kappa_\Lambda$ on the flag manifold
preserves $(G/B_+)^{>0}$. The same holds for the discrete flow (\ref{kappaSymes}) as long as the eigenvalues are all positive. 
\end{rem}
\bigskip

To relate the Symes orbit to 
(\ref{kappaSymes}) define the sequence $L_t \in N^{>0}_-$ coming from that principal embedding by
\begin{eqnarray*}
\epsilon^t_\Lambda L_0 &=& L_t R_t \,\,\,\, \mbox{or} \\
L_t &=& \left[ \epsilon^t_\Lambda L_0 \right]_- \left( \doteq \epsilon^t_\Lambda L_0 \mod B_+ \right).
\end{eqnarray*}
 Then
\begin{eqnarray*}
L(0) &=& \left[ L_0^{-1} \epsilon_\Lambda L_0 \right]_-\\
&=& L_0^{-1} \left[ \epsilon_\Lambda L_0 \right]_-\\
&=& L_0^{-1} L_1 \\
L(1) &=& \left[ L(0)^{-1} X(0) L(0) \right]_- = \left[ L(0)^{-1} X(0) L(0) R(0) \right]_-\\
&=& \left[ L(0)^{-1} X(0)^2\right]_-\\
&=& \left[L(0)^{-1} L_0^{-1} \epsilon^2_\Lambda L_0 \right]_-\\
&=& L(0)^{-1} L_0^{-1} \left[\epsilon^2_\Lambda L_0 \right]_-\\
&=& L_1^{-1} L_0 L_0^{-1} L_2\\
&=& L_1^{-1} L_2\\
&\vdots&\\
L(t) &=& L_t^{-1} L_{t+1}.
\end{eqnarray*}
Thus the Symes orbit in $G/B_+$ is seen to be a discrete, matrix ``logarithmic derivative" of the linear semigroup flow (\ref{kappaSymes}).

\subsection{Torus Embedding} \label{sec:toremb}

This section describes  a modification of the principal embedding that represents the collective hierarchy of Toda flows, (\ref{hierarchy}), as a canonical torus action on a related flag manifold. (This action is the analogue of angle variables in classical integrable systems theory.) 
We make use of the following basic lemma (see \cite{bib:er}, Lemma 4.4 for a derivation):

\begin{lem}\label{uppertodiagefhexpl}
If $\lambda_1,\ldots,\lambda_n$ are distinct, then one has $\epsilon_\Lambda=U D_\Lambda U^{-1}$, where $U=(u_{ij})$ is the upper triangular matrix given by
$$u_{ij}=\prod_{k=1}^{i-1}(\lambda_j-\lambda_k),~~~1\leq i\leq j\leq n$$
and $D_\Lambda=\epsilon_\Lambda-\epsilon=\text{diag}(\lambda_1,\ldots,\lambda_n)$.
\end{lem}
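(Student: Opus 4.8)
The plan is to verify the claimed factorization $\epsilon_\Lambda = U D_\Lambda U^{-1}$ directly by checking that the columns of $U$ are eigenvectors of $\epsilon_\Lambda$. First I would observe that $\epsilon_\Lambda = D_\Lambda + \epsilon$, where $\epsilon$ is the principal nilpotent (the matrix with $1$'s on the superdiagonal), so the eigenvalue equation $\epsilon_\Lambda v = \lambda_j v$ becomes $(D_\Lambda + \epsilon) v = \lambda_j v$, i.e. componentwise $\lambda_i v_i + v_{i+1} = \lambda_j v_i$ for $i < n$ and $\lambda_n v_n = \lambda_j v_n$. This recursion reads $v_{i+1} = (\lambda_j - \lambda_i) v_i$, which telescopes to $v_i = \prod_{k=1}^{i-1}(\lambda_j - \lambda_k)$ (with the empty product giving $v_1 = 1$); the bottom equation $\lambda_n v_n = \lambda_j v_n$ forces $v_n = 0$ unless $j = n$, and indeed $\prod_{k=1}^{n-1}(\lambda_j-\lambda_k) = 0$ precisely when $j \le n-1$ and is nonzero for $j = n$, so the recursion is consistent. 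Thus the $j$-th column of the matrix $U$ with entries $u_{ij} = \prod_{k=1}^{i-1}(\lambda_j - \lambda_k)$ is exactly the $\lambda_j$-eigenvector of $\epsilon_\Lambda$.

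Next I would note that $U$ is upper triangular: for $i > j$ the product $\prod_{k=1}^{i-1}(\lambda_j - \lambda_k)$ includes the factor $(\lambda_j - \lambda_j) = 0$ since $j \in \{1, \dots, i-1\}$, so $u_{ij} = 0$. The diagonal entries are $u_{jj} = \prod_{k=1}^{j-1}(\lambda_j - \lambda_k)$, which are all nonzero because the $\lambda_k$ are distinct; hence $U$ is invertible. Stacking the $n$ eigenvector equations $\epsilon_\Lambda U e_j = \lambda_j U e_j$ as columns gives $\epsilon_\Lambda U = U D_\Lambda$, and multiplying on the right by $U^{-1}$ yields $\epsilon_\Lambda = U D_\Lambda U^{-1}$, as desired. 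The identity $D_\Lambda = \epsilon_\Lambda - \epsilon = \operatorname{diag}(\lambda_1, \dots, \lambda_n)$ is immediate from the definition of $\epsilon_\Lambda$ in (\ref{epslam}).

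There is essentially no serious obstacle here; the only point requiring a little care is bookkeeping of the index ranges in the telescoping product and confirming that the bottom-row equation does not produce a contradiction — that is, checking that the natural eigenvector recursion, which would a priori overdetermine $v$, is automatically consistent thanks to the vanishing $\prod_{k=1}^{n-1}(\lambda_j - \lambda_k) = 0$ for $j < n$. Once that consistency is observed, the rest is routine linear algebra. (Alternatively, one could cite \cite{bib:er}, Lemma 4.4, but the direct eigenvector computation above is short and self-contained.)
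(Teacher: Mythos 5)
Your proof is correct and complete. The paper itself does not supply a proof of this lemma---it defers to Lemma 4.4 of the reference \cite{bib:er}---so there is no in-paper argument to compare against; your direct verification is the natural self-contained derivation. Specifically: writing $\epsilon_\Lambda = D_\Lambda + \epsilon$ and imposing $\epsilon_\Lambda v = \lambda_j v$ does give the recursion $v_{i+1} = (\lambda_j - \lambda_i)v_i$, which telescopes to $v_i = \prod_{k=1}^{i-1}(\lambda_j - \lambda_k)$, and you correctly identify the one point needing care, namely that the bottom-row equation $\lambda_n v_n = \lambda_j v_n$ is automatically consistent because $\prod_{k=1}^{n-1}(\lambda_j - \lambda_k) = 0$ whenever $j < n$. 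The remaining checks---that $u_{ij}$ vanishes for $i > j$ (the factor $\lambda_j - \lambda_j$ appears), that the diagonal entries $u_{jj} = \prod_{k=1}^{j-1}(\lambda_j - \lambda_k)$ are nonzero by distinctness of the eigenvalues, and that stacking the $n$ eigenvector equations gives $\epsilon_\Lambda U = U D_\Lambda$ and hence $\epsilon_\Lambda = U D_\Lambda U^{-1}$---are all handled correctly.
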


\n Based on this one is led to consider a variation on the principal embedding,
\begin{eqnarray*}
\mbox{tor}_\Lambda : \mathcal{H}_\Lambda &\to & G/B_+\\
X &\mapsto& U^{-1}L \mod B_+
\end{eqnarray*}
from $X$ to its matrix of left eigenvectors 
$\left([U^{-1}L] X = D_\Lambda [U^{-1}L]\right)$. As before one can track the dFToda dynamics through this embedding and find that
\begin{eqnarray*}
\mbox{tor}_\Lambda(X(t)) &=&  D^t_\Lambda U^{-1}L \mod B_+. 
\end{eqnarray*}
Also, as before, this can be extended to the commuting hierarchy of Toda flows (\ref{hierarchy}),
\begin{eqnarray*}
\mbox{tor}_\Lambda(X(t_1, \dots, t_n)) &=& 
\left({\sum_{m=1}^{n-1}  D^{m t_m}_\Lambda}\right)U^{-1}L \mod B_+, 
\end{eqnarray*}
$t_j \in \mathbb{Z}$.
This amounts to the discretization of the (split) torus action,
\begin{eqnarray} \label{torusact}
(\mathbb{R}^*)^n \times G/B_+ &\to & G/B_+\\ \nonumber
(r_1, \dots, r_n) \times U^{-1} L &\to & \mbox{diag} (r_1, \cdots, r_n) U^{-1} L,
\end{eqnarray}
explaining why $\mbox{tor}_\Lambda$ is referred to as the \textit{torus embedding}. Since $\mbox{tor}_\Lambda(\mathcal{H}_\Lambda)$ is a Zariski open subset of $G/B_+$, the closures of the orbits of 
(\ref{torusact}) stratify $G/B_+$. These {\it torus strata} have interesting combinatorial applications 
described in \cite{bib:ggms}. Their relevance for the Full Toda Lattice is considered in more detail in 
\cite{bib:efs, bib:kw, bib:ks}. Here we will simply briefly elaborate on the feature that dFToda is the stroboscope of the completely integrable system described in \cite{bib:efs}. 

The flag manifold has its own natural embedding into a large projective space whose coordinates are comprised of the minors of $\ell$-tuples of column vectors of a representative matrix $g \in G$. For $g$ in the image of $\mbox{tor}_\Lambda$, these are minors within $\ell$-tuples of column vectors of $U^{-1} L$ (equivalently, $\ell$-tuples of eigenvectors of $X$). As projective coordinates these minors are referred to as {Pl\"ucker coordinates} and, for those of interest to us, we may index them as follows. For an $(n-k)$-set $J \subset \{ 1, \dots, n\}$, $J_0 = \{ 1, \dots, n-k\}$, $\pi_J$ will denote the $k \times k$ minor of the first $k$ columns of $U^{-1}L$ with row indices in 
$\{ 1, \dots, n\} \backslash J$ and $\pi^*_J$ is the $(n-k) \times (n-k)$ minor of the first $n-k$ columns of $U^{-1}L$ and rows from $J$. A key observation here is that under the torus action (\ref{torusact}) the product $\pi_J \pi^*_J$  transforms just by the factor $\prod_{i=1}^n r_i$, independent of $J$. Hence the  rational function $\pi_J \pi^*_J/\pi_{J_0} \pi^*_{J_0}$ is invariant under the torus action which is consistent with the restriction of the dynamics to the torus strata. Moreover,  $\pi_J \pi^*_J/\pi_{J_0} \pi^*_{J_0} \circ \mbox{tor}_\Lambda$ is a rational function in the coordinates of $X$ which is invariant under full Toda and dFToda.

\section{Tau functions and the Discrete Dynamics of Lusztig parameters}
\label{sec:LusDynam}

One can now ask if or how the torus orbit stratification under the torus embedding passes over to the Symes embedding. One can at least start by revisiting the relation between the principal and Symes embedding discussed in Section \ref{sec:homog}. This can be summarized as

\begin{eqnarray*}
\kappa_\Lambda: \mathcal{H}_\Lambda &\to& G/B_+\\
|| && \,\,\,\,\, \downarrow \mathcal{I}\\
S_\Lambda: \mathcal{H}_\Lambda &\to& G/B_+
\end{eqnarray*}
where
$$
\mathcal{I}(L_0) = L(0) = L_0^{-1} [\epsilon_\Lambda L_0]_-.
$$
The iterated dFToda dynamics is well-defined, making this diagram commute, if one starts with an initial matrix $ X(0) \in \mathcal{H}^{>0}_\Lambda $. One can now examine how the dynamics of dFToda distributes itself across the recursive       
dToda structure described in Section \ref{section:dfktodarecdtoda}. Recall that the recursive dToda structure is given in terms of a sequence of time-dependent tridiagonal Hessenberg matrices, $H^{(i)}(t)$
defined in (\ref{defnhessenbergtruncs}).
The evolution is then given by a coupled dToda evolution,

\begin{align*}
    H^{(i)}(t)&=T_i(t+1)R^{(i)}(t)=R^{(i-1)}(t)T_i(t)\\
    H^{(i)}(t+1)&=R^{(i-1)}(t+1)T_i(t+1)\\
    &=R^{(i-1)}(t+1)H^{(i)}(t)(R^{(i)}(t))^{-1}\\
    &=R^{(i-1)}(t+1)R^{(i-1)}(t)H^{(i)}(t-1)(R^{(i)}(t-1))^{-1}(R^{(i)}(t))^{-1}\\
    & \vdots \\
    &= \prod^{1}_{s=t+1}R^{(i-1)}(s) H^{(i)}(0) \prod^{t}_{s=0}(R^{(i)}(s))^{-1}\\
    &= \prod^{0}_{s=t+1}R^{(i-1)}(s) T_i(0) \prod^{t}_{s=0}(R^{(i)}(s))^{-1}.
\end{align*}
From this it is possible to inductively define the dynamics on the $R^{(i)}$ directly through the $\tau$-function; i.e., independently of the dynamics on the $T_i$ or factorization.

\begin{eqnarray*}
\tau^i_j(t+1) &=&  \Big\langle \prod^{0}_{s=t+1}R^{(i-1)}(s) T_i(0) \prod^{t}_{s=0}(R^{(i)}(s))^{-1} v^{(j)}, v^{(j)}\Big\rangle\\
&=&  \prod^{t}_{s=0} (\tau^i_j(s))^{-1}\Big\langle \prod^{0}_{s=t+1}R^{(i-1)}(s) T_i(0)  v^{(j)}, v^{(j)}\Big\rangle.
\end{eqnarray*}

One can also describe this dynamics in terms of coordinates. As mentioned after Definition \ref{TPFlag},  $N^{>0}$ is an open dense cell in $\left( G/B_+\right)^{>0}$, invariant under dFToda
and so the Lusztig parameters provide coordinates on this cell in terms of which the dynamics under the Symes embedding, $S_\Lambda$, may be described. This is already apparent from
(\ref{tauf}) which we will now expand on. 
\medskip

Let $\mathcal{T}$ denote the triangular array of Lusztig parameters which we display, in Figure \ref{lusztignetwork}, as edge weights on a directed graph (edges are oriented in the downward and right directions and unlabelled edges will be taken to have weight 1). 

\begin{rem} Figure \ref{lusztignetwork} provides a diagrammatic representation for the coordinatization of $N_-^{>0}$ by Lusztig parameters. The $a_{ij}$ entry of the corresponding lower unipotent matrix is given as the sum of multiplicative path weights from {\rm source} i on the left to {\rm sink} j on the right. Similarly the $D_{ii}$ give the diagonal entries of a general lower triangular matrix. This diagrammatic approach may also be used to calculate minors of lower triangular matrices \cite{bib:fz}.
\end{rem}
Now we describe the dFToda dynamics in terms of the evolution of these Lusztig parameters. This can be presented in terms of the tridiagonal decomposition into the $H^{(i)}$ as described in Definition \ref{def:dft}. In these terms, here is the sequence of moves that gives a time-step in the evolution of the Lusztig parrameters:
\begin{itemize}
    \item Stage 0: One begins with the initial Hessenberg matrix $X(0) = L(0)R(0)$ where by Theorem \ref{theorem:bfz} one has the unique factorization 
    $L(0) = T_1(u(0))T_2(u(0))\cdots T_{n-1}(u(0))$ and $R(0) = R^{(0)}(t=0)$ is determined by the principal minors, $\tau^0_j(0)$, of $X(0)$ as in (\ref{R-repn}). 
    \item Stage 1: One has 
    $H^{(i)}(t):=R^{(i-1)}(t)T_i(t)$ where the Lusztig parameters $\vec{u}(i,t)$ of $T_i(t)$ are those appearing in the $i^{th}$ ``column" of Figure \ref{lusztignetwork}. As in Stage 0, the principal minors, $\tau^i_j(t)$, of $H^{(i)}(t)$ determine $R^{(i)}(t)$. 
    \item Stage 2: Then, by Definition \ref{def:dft}(ii), $T_i(t+1) = H^{(i)}(t) (R^{(i)}(t))^{-1}$ which updates the Lusztig parameters in the $i^{th}$ column. Explicitly, in the $i^{th}$ column, the update is given by 
    (\ref{tauf})
    $$u_j^{i+j-1}(t+1) = u_j^{i+j-1}(t) \frac{\tau_{j+1}^{i-1}(t) \tau_{j-1}^i(t)}{\tau_{j}^{i-1}(t) \tau_{j}^i(t)}.$$
    \item Stage 3: Finally, applying (\ref{taus}), one has
    $\tau_{j}^{i-1}(t+1)  =  \tau_j\left( R(t+1) T_1(u(t+1)) T_2(u(t+1)) \cdots T_{i-1}(u(t+1)) \right)$ which determines $R^{(i-1)}(t+1)$.
     $R(t+1)$ is determined {\em{ab initio}} for all $t$ since $\tau^0_j(t)$ is determined by the minors of $X(0)$ and $\epsilon_\Lambda$ \cite{bib:er}. Thus we have inductively determined $$H^{(i)}(t+1):=R^{(i-1)}(t+1)T_i(t+1)$$
    completing the step from $t$ to $t+1$ and setting the stage for the subsequent update.
\end{itemize}

\begin{figure}[H]
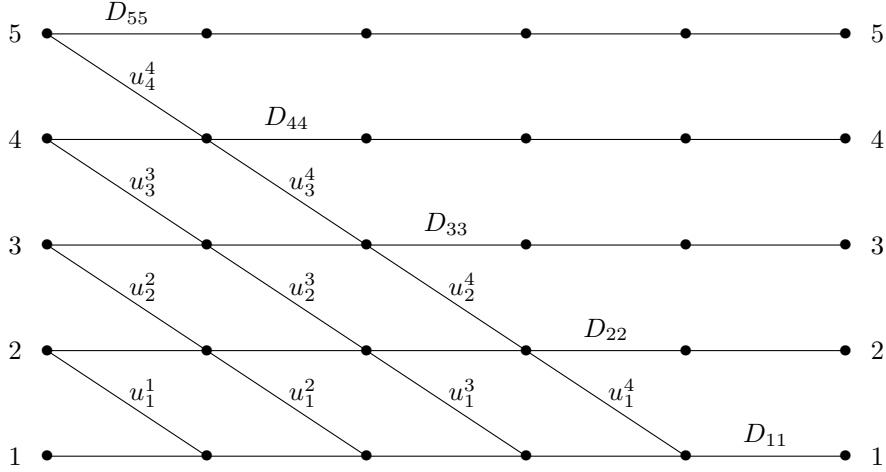

\centering
\tikz[scale=1.4]{
\foreach \y in {1,2,3,4,5}{
\draw[-] (0,\y) -- (7.5,\y);
\node at (-0.3,\y) {\y};
\node at (7.8,\y) {\y};
\foreach \x in {0,1,2,3,4,5}{
\node at (1.5*\x,\y) {$\bullet$};
}
}
\draw[-] (0,2) -- (1.5,1);
\draw[-] (0,3) -- (3,1);
\draw[-] (0,4) -- (4.5,1);
\draw[-] (0,5) -- (6,1);
\node at (0.75,5.2) {$D_{55}$};
\node at (0.9,4.6) {$u_4^4$};
\node at (0.9,3.6) {$u_3^3$};
\node at (0.9,2.6) {$u_2^2$};
\node at (0.9,1.6) {$u_1^1$};
\node at (2.25,4.2) {$D_{44}$};
\node at (2.4,3.6) {$u_3^4$};
\node at (2.4,2.6) {$u_2^3$};
\node at (2.4,1.6) {$u_1^2$};
\node at (3.75,3.2) {$D_{33}$};
\node at (3.9,2.6) {$u_2^4$};
\node at (3.9,1.6) {$u_1^3$};
\node at (5.25,2.2) {$D_{22}$};
\node at (5.4,1.6) {$u_1^4$};
\node at (6.75,1.2) {$D_{11}$};
}
\caption{A network diagram for Lusztig parameters}\label{lusztignetwork}
\end{figure}

\section{Return to Full Kostant Toda and Continuous Dynamics} \label{sec:return}

Having completed our analysis of the discrete and ultradiscrete Toda systems it is natural to ask if that analysis provides any new insights into the continuous FKToda system,
(\ref{Lax2}), from which these were derived. That is indeed the case and that is what will be described in this section. In particular we derive a direct representation of the FKToda flow in terms of Lusztig parameters which in some sense is the analogue of the Lusztig parameter dynamics for dFToda described in Section \ref{sec:LusDynam}. The reader should recall the group and algebra projections, $\Pi_\pm$ and $\pi_\pm$ respectively, defined in
Section \ref{sec:kostant} which will be used throughout this section. 
\medskip

Recall from Remark \ref{tphiflow} that if $L_0$ is totally positive then

$$
L_s \doteq \Pi_-(e^{s \epsilon_\Lambda} L_{0})
$$ remains totally positive for all (continuous) $s$. This is the FKToda flow represented under the principal embedding, (\ref{eq:principal}) and (\ref{linprinc}), as a flow on $(G/B_+)^{>0}$. This implicitly defines the FKToda dynamics as a dynamics on Lusztig parameters. We will now make this parameter dynamics explicit.

We work with the equations
\begin{eqnarray} \label{OC1}
e^{s\epsilon_\Lambda} L_{0} &=& L_s R_s\\ \label{OC2}
\epsilon_\Lambda e^{s\epsilon_\Lambda} L_{0} &=& \frac{d}{ds}\left(L_s R_s\right).
\end{eqnarray}

\n Noting that
\begin{eqnarray} \label{OC3}
L_s &=& T_1(s)\cdots T_{n-1}(s)
\end{eqnarray}
is the unique Lusztig 
factorization of $L_s$, and that the left-hand side of (\ref{OC2}) is $\epsilon_\Lambda L_sR_s$, and substituting the Lusztig factorization in for $L_s$, we obtain the following equation:
$$\epsilon_\Lambda T_1(s)\cdots T_{n-1}(s) R_s = \dfrac{d}{ds}(T_1(s)\cdots T_{n-1}(s)R_s).$$
Applying the product rule, multiplying on the left by the inverse of $L_s=T_1(s)\cdots T_{n-1}(s)$ and on the right by $R_s^{-1}$ yields
\begin{eqnarray} \label{eqn:dKLe}
T^{-1}_{n-1}\cdots T^{-1}_1 \epsilon_\Lambda T_1\cdots T_{n-1}  &=& 
\sum_{j=1}^{n-1}\left(T^{-1}_{n-1}\cdots T^{-1}_{j+1} \left(T^{-1}_{j} \frac{d}{ds}T_j\right) T_{j+1}\cdots T_{n-1}\right) + \frac{d}{ds}R.
\end{eqnarray}
Now assume there are upper bidiagonal matrices $R^{(i)}, i = 0,\dots , n-1 $, with $R^{(0)} = \epsilon_\Lambda$ satisfying
\begin{eqnarray} \label{eqn:dLe}
\frac{d}{ds}T_j &=& R^{(j-1)} T_j 
- T_j R^{(j)}\\ \label{eqn:findLe}
\frac{d}{ds}R &=& R^{(n-1)}. 
\end{eqnarray}
The equations (\ref{eqn:dLe}) are equivalent to
\begin{eqnarray} \label{eqn:split}
T^{-1}_{j}\frac{d}{ds}T_j = T^{-1}_{j}R^{(j-1)} T_j 
-  R^{(j)}.
\end{eqnarray}
Substituting these latter, inductively on $j$, into (\ref{eqn:dKLe}), reduces
(\ref{eqn:dKLe}) to
(\ref{eqn:findLe}). (Note that $R(s)$ is not necessarily bidiagonal Hessenberg.)

It follows from (\ref{eqn:split}) that the $R^{(j)}, j>0,$ must, inductively, satisfy

\begin{eqnarray} \label{Rjandjminusone}
R^{(j)} &=& \pi_+ \left( T^{-1}_{j}R^{(j-1)} T_j \right)\\
\label{Rj}
&=& \pi_+ \left( T^{-1}_{j} \cdots T_1^{-1} \epsilon_\Lambda T_1 \cdots T_j \right)
\end{eqnarray}
where the $T_j(s)$ are given by (\ref{OC3}).
To then verify that 
(\ref{eqn:findLe}) holds for $R^{(j)}$ defined by \eqref{Rj}, first observe that
\begin{eqnarray} \nonumber
R^{(n-1)} &=& \pi_+ \left( T^{-1}_{n-1} \cdots T_1^{-1} \epsilon_\Lambda T_1 \cdots T_{n-1} \right)\\ \nonumber
&=&\pi_+ \left( L_s^{-1} \epsilon_\Lambda L_s\right)\\
\label{plusproj}
&=& \pi_+ \left( X(s) \right).
\end{eqnarray}
where, by Theorem \ref{factorisationthmbkg}, $X(s)$ solves FKToda with initial condition 
$X(0) = L_0^{-1} \epsilon_\Lambda L_0$. 
We recall from that theorem the construction of the factorization solution of FKToda, we set 
$L(s) =  L_{0}^{-1}L_s$ and
differentiate the relation 
\begin{eqnarray*} 
e^{sX} &=& L(s) R(s)\\
X e^{sX} &=& \frac{dL}{ds} R + L \frac{dR}{ds}\\
X L(s) R(s) &=& \frac{dL}{ds} R + L \frac{dR}{ds}\\
L^{-1}(s) X L(s) 
&=& L^{-1}(s) \frac{dL}{ds}  +  \frac{dR}{ds} R^{-1}(s)\\
X(s) &=& L^{-1}(s) \frac{dL}{ds}  +  \frac{dR}{ds} R^{-1}(s).
\end{eqnarray*}
From this it is clear that
$$
\pi_+ \left( X( s)\right) = \frac{dR}{ds} R^{-1}(s).
$$
Comparing this to (\ref{plusproj})
verifies (\ref{eqn:findLe}).
\begin{rem}
It is worth noting that $L(s)$ represents the continuous analogue of 
the relation between the discrete principal and Symes embeddings described in Section \ref{sec:homog}. This suggests a relation between the two in the nature of the latter corresponding to B\"acklund transformations for the former.
\end{rem}
 
We now turn to the derivation based on 
Equations (\ref{eqn:dLe}) and (\ref{eqn:findLe}) of explicit ODEs for the evolution of the Lusztig parameters. Starting with $j=1$,
equation (\ref{eqn:split}) becomes
\begin{eqnarray*}
T^{-1}_{1}\frac{d}{ds}T_1 &=& T^{-1}_{1} \epsilon_\Lambda T_1 
-  R^{(1)}\\
&=& T^{-1}_{1} \epsilon_\Lambda T_1 
- \pi_+(T^{-1}_{1} \epsilon_\Lambda T_1)\\
&=& \pi_-(T^{-1}_{1} \epsilon_\Lambda T_1)\\
\frac{d}{ds}T_1 &=& T_{1}\pi_-(T^{-1}_{1}\epsilon_\Lambda T_1).
\end{eqnarray*}

In general one knows from (\ref{Rj}) and Kostant's theorem that $R^{(i)}$ is upper Hessenberg. Furthermore, from the lower bidiagonal form of $T_j$,  one has that the only non-zero entries of $\frac{d}{ds} T_j$ are those of height $-1$.\\
Hence the equation 
\begin{eqnarray}
\frac{d}{ds}T_j = T_{j}\pi_-(T^{-1}_{j}R^{(j-1)} T_j)\label{TjandRjminusone}
\end{eqnarray}
only yields non-trivial equations 
from the principal sub-diagonals of each side, which is a set of $n-j$ scalar ODEs:
\begin{eqnarray*}
\left(\frac{d}{ds}T_j\right)_{i+1,i} &=& \left(T_{j}\pi_-(T^{-1}_{j}R^{(j-1)} T_j)\right)_{i+1,i}\\
&=& \left(\pi_-(T^{-1}_{j}R^{(j-1)} T_j)_-\right)_{i+1,i}; \,\,\,\, i = 1, \dots, n-j.
\end{eqnarray*}
The second equality follows since $T_j$ is lower unipotent; hence, multiplying by it on the left does not alter the entries on the principal sub-diagonal of  $\pi_-(T^{-1}_{j}R^{(j-1)} T_j)$.

\begin{rem}\label{rem:extlusztig}
In what follows, for convenience, we extend the Lusztig coordinates $(u_i^j)_{1\leq i\leq j\leq n-1}$ to $(u_i^j)_{i,j\in\Z}$ by declaring $u_i^j=0$ for $i$ and $j$ not satisfying $1\leq i\leq j\leq n-1$.
\end{rem}

\begin{thm}\label{thm:matrixandcoordrepoffktoda}
Equations (\ref{Rjandjminusone}) and (\ref{TjandRjminusone}), along with the initialization $R^{(0)}=\epsilon_\Lambda$, are equivalent to the following equations:
\begin{equation}\frac{d}{ds}T_j=[T_j,T_j\epsilon-R^{(j-1)}],~~~~R^{(j)}=\epsilon_\Lambda+\left[\epsilon,\sum_{i=1}^j T_i\right]\label{eq:commutatorsfulltoda}\end{equation}
which are equivalent to the following equations for the Lusztig coordinates:
\begin{eqnarray}
    \dfrac{\dot{u}_i^{j+i-1}}{u_i^{j+i-1}} &=& R_{i+1,i+1}^{(j-1)}-R_{i,i}^{(j-1)}+u_{i-1}^{j+i-2}-u_i^{j+i-1},~~~1\leq j\leq n-1,~~~1\leq i\leq n-j\label{inductiveueqn}\\
    R_{i,i}^{(j)} &=& R_{i,i}^{(j-1)}+u_i^{j+i-1}-u_{i-1}^{j+i-2},~~~1\leq j\leq n-1,~~~1\leq i\leq n.\label{inductiveReqn}
\end{eqnarray}

\end{thm}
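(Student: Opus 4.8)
The plan is to establish the two asserted equivalences in turn, routing through the intermediate ``split'' form $\tfrac{d}{ds}T_j = R^{(j-1)}T_j - T_j R^{(j)}$ of equation~(\ref{eqn:dLe}); the one genuinely non-routine ingredient is a diagonal-grading argument that evaluates the projection $\pi_+$. I would begin with the normal form of $T_j$: since the elementary factors obey $f_i f_{i'}=0$ whenever $i<i'$, all cross terms in the product $T_j = \ell_1(u_1^j)\cdots\ell_{n-j}(u_{n-j}^{n-1})$ vanish, so $T_j = I + N_j$ with $N_j := \sum_{i=1}^{n-j} u_i^{j+i-1} f_i$ strictly lower bidiagonal, $(N_j)_{i+1,i} = u_i^{j+i-1}$. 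Hence $\tfrac{d}{ds}T_j = \tfrac{d}{ds}N_j$ is supported on the first subdiagonal, and $[\epsilon,T_j]=[\epsilon,N_j]$.

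\textbf{Matrix equations $\Leftrightarrow$ (\ref{eq:commutatorsfulltoda}).} Using $\pi_- = \mathrm{id}-\pi_+$ and (\ref{Rjandjminusone}), equation (\ref{TjandRjminusone}) is at once equivalent to $\tfrac{d}{ds}T_j = R^{(j-1)}T_j - T_j R^{(j)}$. The crux is the identity $\pi_+\big(T_j^{-1}R^{(j-1)}T_j\big) = R^{(j-1)} + [\epsilon, N_j]$, proved by induction on $j$ with base $R^{(0)}=\epsilon_\Lambda$: grading matrices by diagonal height, $R^{(j-1)}=D^{(j-1)}+\epsilon$ has heights $\{0,+1\}$ (inductive hypothesis, $D^{(j-1)}$ diagonal), $N_j$ has height $-1$, and $T_j^{-1}=I-N_j+N_j^2-\cdots$ has heights $\{0,-1,-2,\dots\}$, so expanding $T_j^{-1}(D^{(j-1)}+\epsilon)T_j$ and keeping only terms of height $\ge 0$ leaves precisely $D^{(j-1)}+\epsilon+\epsilon N_j - N_j\epsilon$. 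Since $[\epsilon,N_j]$ is diagonal, $R^{(j)}$ again has the form $D^{(j)}+\epsilon$, closing the induction and, by iteration, giving $R^{(j)}=\epsilon_\Lambda + [\epsilon,\sum_{i=1}^j T_i]$. It then remains to check by direct substitution of $R^{(j)}=R^{(j-1)}+\epsilon N_j - N_j\epsilon$ that $R^{(j-1)}T_j - T_j R^{(j)} = [T_j,\,T_j\epsilon - R^{(j-1)}]$; this is routine, the main simplification being $N_j^2\epsilon - N_j\epsilon N_j = N_j[N_j,\epsilon]$, after which both sides reduce to $[R^{(j-1)},N_j]+[N_j,\epsilon]+N_j[N_j,\epsilon]$. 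Running the implications backwards gives the converse.

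\textbf{(\ref{eq:commutatorsfulltoda}) $\Leftrightarrow$ (\ref{inductiveueqn})--(\ref{inductiveReqn}).} Working from the split form and using that $R^{(j-1)},R^{(j)}$ are upper bidiagonal with unit superdiagonal while $T_j=I+N_j$, I would compute the $(i+1,i)$ entry of each side of $\tfrac{d}{ds}T_j = R^{(j-1)}T_j - T_j R^{(j)}$: the left side is $\dot u_i^{j+i-1}$ and the right side is $u_i^{j+i-1}\big(R^{(j-1)}_{i+1,i+1}-R^{(j)}_{ii}\big)$, giving $\dot u_i^{j+i-1}/u_i^{j+i-1} = R^{(j-1)}_{i+1,i+1}-R^{(j)}_{ii}$. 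Reading off the $(i,i)$ entry of $R^{(j)}=R^{(j-1)}+[\epsilon,N_j]$ together with $[\epsilon,N_j]_{ii}=u_i^{j+i-1}-u_{i-1}^{j+i-2}$ yields (\ref{inductiveReqn}); substituting this into the previous relation yields (\ref{inductiveueqn}). Conversely, since $T_j$ is determined by its subdiagonal and $R^{(j)}$ by its diagonal (the superdiagonal being fixed at $1$), these scalar relations reconstruct the matrix identities; the index endpoints $i=1$ and $i=n-j$ are absorbed uniformly by the extension convention of Remark~\ref{rem:extlusztig}.

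\textbf{Expected obstacle.} The only delicate point is the grading step: one must verify that no term of negative height contaminates the $\pi_+$ part of $T_j^{-1}R^{(j-1)}T_j$, and — crucially — carry along in the induction the fact that every $R^{(j)}$ retains the ``diagonal plus unit superdiagonal'' shape, since that is precisely what makes the grading argument applicable at the next stage (and what makes $R^{(j)}$ upper bidiagonal, so that the entrywise computation of the coordinate equations goes through). Everything else is bookkeeping.
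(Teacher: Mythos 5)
Your proposal is correct and follows essentially the same route as the paper's proof: both decompose $T_j=\mathbb{I}+T_{j,-}$, grade by diagonal height to show $\pi_+\bigl(T_j^{-1}R^{(j-1)}T_j\bigr)=R^{(j-1)}+[\epsilon,T_{j,-}]$ (so each $R^{(j)}$ stays upper bidiagonal with unit superdiagonal), read the coordinate equations off the subdiagonal and diagonal entries, and obtain the commutator form by routine algebra. The only cosmetic differences are that you extract the subdiagonal from the split form $R^{(j-1)}T_j-T_jR^{(j)}$ rather than from the explicit expansion of $T_j\pi_-\bigl(T_j^{-1}R^{(j-1)}T_j\bigr)$, and you verify the commutator identity by direct substitution rather than by rewriting the entrywise equations.
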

\begin{proof}
Since $T_j(s)$ is lower unipotent, one has that $T_{j,-}(s):=T_j(s)-\mathbb{I}_n$ is lower nilpotent, with $(T_{j,-}(s))^n=0$. It therefore follows that the inverse of $T_j(s)$ can be expressed as follows:
$$T_j^{-1}=\mathbb{I}_n-T_{j,-}+T_{j,-}^2-\cdots +(-1)^{n-1}T_{j,-}^{n-1}.$$
We now decompose $R^{(j-1)}$ as $\epsilon+R^{(j-1)}_\Delta$, where $(R^{(j-1)})_\Delta$ is a diagonal matrix. We therefore write $T_j^{-1}R^{(j-1)}T_j$ as
$$\left(\mathbb{I}_n-T_{j,-}+T_{j,-}^2-\cdots +(-1)^{n-1}T_{j,-}^{n-1}\right)(\epsilon +(\epsilon T_{j,-}+R^{(j-1)}_\Delta) + R^{(j-1)}_\Delta T_{j,-})$$
where each single term in the above is a matrix whose entries lie solely along some diagonal of a given height.\\
From the above, one sees that
\begin{enumerate}[(i)]
\item The principal superdiagonal of $T_j^{-1}R^{(j-1)}T_j$ is $\epsilon$, as expected. 
\item The diagonal of $T_j^{-1}R^{(j-1)}T_j$ is $\epsilon T_{j,-}+R^{(j-1)}_\Delta-T_{j,-}\epsilon$. 
\item The principal subdiagonal of $T_j^{-1}R^{(j-1)}T_j$ is $R^{(j-1)}_\Delta T_{j,-}-T_{j,-}(\epsilon T_{j,-}+R^{(j-1)}_\Delta)+T_{j,-}^2\epsilon$.
\item Likewise, the subdiagonals below the principal subdiagonal can be extracted similarly.
\end{enumerate}
We now claim that $T_j[T_j^{-1}R^{(j-1)}T_j]_-$ has nonzero entries only on its principal subdiagonal. What is clear \textit{a priori} is that this matrix is strictly lower triangular since it is the product of a lower triangular matrix, $T_j$, and and strictly lower triangular matrix $[T_j^{-1}R^{(j-1)}T_j]_-$. We now decompose the matrix into its upper and strictly lower projections:
$$T_j^{-1}R^{(j-1)}T_j=\pi_+(T_j^{-1}R^{(j-1)}T_j)+\pi_-(T_j^{-1}R^{(j-1)}T_j).$$
Multiplying both sides on the left by $T_j$ and rearranging yields
\begin{equation}T_j\pi_-(T_j^{-1}R^{(j-1)}T_j)=R^{(j-1)}T_j-T_j\pi_+(T_j^{-1}R^{(j-1)}T_j).\label{eq:tjprojconjdecomp}\end{equation}
Now, we observe that $R^{(j-1)}T_j$, being the product of an upper bidiagonal matrix and a lower bidiagonal matrix, has no nonzero entries below the principal subdiagonal. Likewise, $T_j\pi_+(T_j^{-1}R^{(j-1)}T_j)$, being the product of a lower bidiagonal matrix and an upper bidiagonal matrix, also has no nonzero entries below the principal subdiagonal. Therefore, by Equation (\ref{eq:tjprojconjdecomp}), the strictly lower triangular matrix $T_j\pi_-(T_j^{-1}R^{(j-1)}T_j)$ has no nonzero entries below its principal subdiagonal, and is therefore itself a principal subdiagonal matrix.\\

\n Now we may proceed to unpack Equations (\ref{TjandRjminusone}) and (\ref{Rjandjminusone}). Firstly, since $T_j$ is lower unipotent, the principal subdiagonal of $T_j\pi_+(T_j^{-1}R^{(j-1)}T_j)$ is equal to the principal subdiagonal of $\pi_+(T_j^{-1}R^{(j-1)}T_j)$. Thus, Equation (\ref{TjandRjminusone}) is equivalent to 
\begin{equation}
    \frac{d}{ds}T_j =R^{(j-1)}_\Delta T_{j,-}-T_{j,-}(\epsilon T_{j,-}+R^{(j-1)}_\Delta)+T_{j,-}^2\epsilon.\label{eq:derivTjdiags}
\end{equation}

\n Therefore, 
\begin{equation}\dot{u}_i^{i+j-1}=R_{i+1,i+1}^{(j-1)}u_i^{i+j-1}-(u_i^{i+j-1})^2-u_i^{i+j-1}R_{i,i}^{(j-1)}+u_i^{i+j-1}u_{i-1}^{i+j-2}\label{eq:lusztigder}\end{equation}
where, when $i=1$, the last term $u_0^{j-1}$ is understood to be zero (Remark \ref{rem:extlusztig}). Likewise, also using the convention set forth in Remark \ref{rem:extlusztig}, when $i>n-j$, $u_i^{i+j-1}=0$, so that Equation (\ref{eq:lusztigder}) becomes vacuous when beyond this bound. Dividing Equation (\ref{eq:lusztigder}) by $u_i^{i+j-1}$ and reordering terms yields Equation (\ref{inductiveueqn}) precisely.\\

\n Turning our attention to Equation (\ref{Rjandjminusone}), we recall that $T_j^{-1}R^{(j-1)}T_j$ is Hessenberg, hence $\pi_+(T_j^{-1}R^{(j-1)}T_j)$ is comprised of the superdiagonal and diagonal of $T_j^{-1}R^{(j-1)}T_j$. By the analysis above, namely items (i) and (ii), Equation (\ref{Rjandjminusone}) yields
\begin{equation}
    R^{(j)}_{i,i}=\left(\epsilon T_{j,-}+R^{(j-1)}_\Delta-T_{j,-}\epsilon\right)_{i,i}=u_i^{i+j-1}+R_{i,i}^{(j-1)}-u_{i-1}^{i+j-2}\label{eq:Rsbydiags}
    \end{equation}
which gives Equation (\ref{inductiveReqn}).\\

\n Finally, to see the commutator expressions for the evolutions of $T_j$ and $R^{(j)}$, we note that Equation (\ref{eq:derivTjdiags}) can immediately be written as
$$\frac{d}{ds}T_{j,-}=[T_{j,-},~T_{j,-}\epsilon-R^{(j-1)}_\Delta].$$
Next, we note that we can add the identity inside the derivative, as well as in the first argument of the commutator without affecting either the derivative or commutator, and we may also add and subtract $\epsilon$ in the second argument:
$$\frac{d}{ds}T_j=[T_j,~T_{j,-}\epsilon+\epsilon-\epsilon-R^{(j-1)}_\Delta]\\
=[T_j,~T_j\epsilon-R^{(j-1)}],$$
which is the first equation of (\ref{eq:commutatorsfulltoda}). For the second equation of (\ref{eq:commutatorsfulltoda}), we see that Equation (\ref{eq:Rsbydiags}) can be expressed as
$$R^{(j)}_\Delta=R^{(j-1)}_\Delta +[\epsilon,T_{j,-}].$$
We add $\epsilon$ to both sides, and add the identity matrix in the second argument of the commutator to obtain
$$R^{(j)}=R^{(j-1)} +[\epsilon,T_{j}]=R^{(j-2)} +[\epsilon,T_{j}]+[\epsilon,T_{j-1}]=\cdots =R^{(0)}+\sum_{i=1}^j [\epsilon,T_{i}].$$
The second equation of (\ref{eq:commutatorsfulltoda}) follows from this by recalling that $R^{(0)}=\epsilon_\Lambda$ and by using linearity of the commutator in the second argument.
\end{proof}

We show in Appendix \ref{appendixb}
how Equations (\ref{inductiveueqn}) and (\ref{inductiveReqn}) can be used for an alternative derivation of O'Connell's system of differential equations in (6.4) of \cite{bib:o}.

\section{Concluding Remarks} \label{sec:conclusions}

\subsection{Connections to the Literature}\label{sec:connectionsliterature}

\n The following equations in \cite{bib:sik} describe their dhToda dynamics:
\begin{equation}\label{shinjohungrytodaeqs}
\left\{
\begin{array}{l}
Q_1^{(s,n)}=q_1^{(s,n)}-\mu^{(n)},\\
q_k^{(s,n+1)}+Q_{k+1}^{(s,n)}=q_{k+1}^{(s,n)}+Q_k^{(s+M,n)},~~~k=1,2,\ldots,m-1,\\
q_k^{(s,n+1)}Q_k^{(s,n)}=q_k^{(s,n)}Q_k^{(s+M,n)},~~~k=1,2,\ldots,m,\\
e_{k-1}^{(s,n+1)}+Q_k^{(s+1,n)}=e_k^{(s,n)}+Q_k^{(s,n)},~~~k=1,2,\ldots,m,\\
e_{k}^{(s,n+1)}Q_k^{(s+1,n)}=e_{k}^{(s,n)}Q_{k+1}^{(s+1,n)},~~~k=1,2,\ldots,m-1,\\
e_0^{(s,n)}:=0,~~~e_m^{(s,n)}:=0.
\end{array}
\right.
\end{equation}

\n In a coupled matrix representation, the dynamic evolution is achieved by a combination of iterated lower-upper factorizations, each of which is a regular dToda evolution, followed by an additional factoring algorithm between two upper bidiagonal matrices (one for the $q$-variables and another for the $Q$-variables). Taking the sequence $(\mu^{(n)})_{n\in\N}$ to be the zero sequence forces $Q_k^{(s,n)}=q_1^{(s,n)}$, which can be seen from the first three equations in (\ref{shinjohungrytodaeqs}). 
This is often referred to as the {\it autonomous} dhToda equation and arises in the study of algorithms for finding eigenvalues of certain classes of banded matrices \cite{bib:fiyin}. The resulting system is still a little more general than the dFToda system we present in this paper, but choosing to set some additional $e$-variables equal to zero restricts this system to dFToda.\\

The approach of \cite{bib:sik} is based on extended Hankel determinants that are related to a (discrete) time variation of the measure of orthogonality for families of biorthognal polynomials. We mention that the connection between biorthogonal polynomials and FKToda was already noticed by Ercolani and McLaughlin in \cite{bib:em}. The approach of this paper, based on a direct embedding of Symes's algorithm into full Toda, avoids the need to consider any ancillary determinants and leads to a more direct derivation of the relations between FKToda and dFToda. In addition the approach of \cite{bib:sik} is formal and this makes it difficult to precisely characterize under what conditions the systems considered are well-posed. The derivations in this paper, based on Lusztig parametrization, make it possible to rigorously establish a completely precise characterization of when dFToda is well-posed (Theorem \ref{thm:wp}). Furthermore, the constructions introduced in Section \ref{geometry} reveal how many of the results in this paper may be extended to Full generalized Toda lattices associated to more general real semisimple Lie algebras (see Remark \ref{rem:Lie}).

It will be of interest to consider possible extensions of the results here to the more general dhToda systems described in \cite{bib:sik} which have connections to numerical eigenvalue algorithms \cite{bib:yf}. That work also points out intriguing connections to ecological models known as discrete hungry Lotka-Volterra systems and continuum analogs. Furthermore, in ultradiscrete versions of discrete hungry Toda \cite{bib:tns}, an advanced box-ball system  is presented, involving labelling balls and moving balls according to a label-determined order of priority. In these treatments, the key focus regarding solitonic behaviour is the so-called \textit{soliton scattering rule} which describes how labels redistribute amongst blocks under the ultradiscrete hungry Toda evolution, as well as associated conserved quantities of this particular cellular automaton realization of the dynamics. The system we introduced in Section \ref{sec:ud} is simpler in the sense that it does not require additional structures beyond standard BBS (\textit{i.e.}, not requiring labellings, higher capacities, etc.); nevertheless, the fact that the discrete full Toda lattice may be obtained as a particular special case of the discrete hungry Toda molecule, suggests that the general ultradiscrete systems may lead to further interesting extensions to what we have done in this paper, such as Lie-theoretic interpretations.

~\\
\n We have noted that O'Connell derived ODEs for the induced dynamics of the Toda flow on Lusztig parameters. We have compared this to the FKToda flow on parameters that we derived in Section \ref{sec:return}. From this we are able to directly recover O'Connell's equations (see Appendix \ref{appendixb}). What we do is in many ways a more direct and elementary derivation of these ODEs; however, O'Connell is able to extend these to stochastic equations for reflected Brownian motions on the line, with many deep and fascinating applications to processes that break open new territory. It would be interesting to explore similar applications for our approach. In particular, it may be promising to interpret dFToda as B\"acklund transformations for FKToda in a stochastic setting.

\subsection{Future Directions }

The classical Toda lattice has been central to developments and applications in many areas of mathematics ranging from representation theory to quantum mechanics \cite{bib:howe}, from Hamiltonian dynamics/ symplectic geometry to statistical mechanics \cite{bib:spohn} and from combinatorics to random geometry \cite{bib:ew}. We believe the developments described in this paper set the stage for new developments and applications in related veins. Here we just mention a couple of those directions (by no means exhaustive) that it would be natural to pursue in the near future. 

\subsubsection{Integrability and Lie Theoretic Generalizations}
One such direction concerns the complete integrability of FKToda and its discretizations done in an extended fully Lie theoretic setting.  While the eigenvalues of $X_0$ provide just enough constants of motion (see (\ref{Isospectral})) to establish the complete integrability of the
 tridiagonal Toda lattice, they do not suffice for the Full Kostant Toda  lattice. Finding additional invariants relies on reformulating the 
  FKToda equations as a Hamiltonian system with respect to a geometrically defined Poisson bracket that meshes naturally with the geometry of the flag manifolds discussed in Sections \ref{geometry} and \ref{sec:LusDynam}. This Poisson structure is based on the (co-adjoint) orbit method of Kirillov-Kostant
(K-K). The terminology stems from the fact that the Lie bracket on the Lie algebra $\frak{h}$ of a general Lie group $H$ induces a natural Poisson bracket on the dual space $\frak{h}^*$. The orbits on this dual (called co-adjoint orbits) induced by the conjugation action of $H$ on $\frak{h}$ are the symplectic leaves (submanifolds of $\frak{h}^*$ on which the Poisson bracket restricts to be non-degenerate).

Using the G-invariant, non-degenerate inner product, $ \langle X, Y\rangle = \tr(XY)$, on $\frak{g}$, the dual space  $\frak{b}_+^*$ may be identified with $\mathcal{H}$. For the K-K Poisson bracket on $\mathcal{H}$ this may be concretely expressed, for functions $f, g$ on  $\frak{g}$, as 
\begin{eqnarray} \label{KKPB}
\{\tilde{f}, \tilde{g} \}(X) &=& \langle X, \left[ \pi_+ \nabla f(X),  \pi_+ \nabla g(X)\right]\rangle
\end{eqnarray}
where $\tilde{f} = f|_{\mathcal{H}}$ and $\nabla$ denotes the gradient with respect to the inner product on $\frak{g}$. The tridiagonal elements of $\mathcal{H}$, with $b_i > 0$, comprise a coadjoint orbit of $B_+$ in $\frak{b}_+^*$ and the restriction of
(\ref{KKPB}) to these matrices does indeed yield a bracket equivalent to the standard one for the Toda lattice equations of (\ref{hamiltonian}). Moreover, the Hamiltonian for FKToda  is the same as that for the tridiagonal case in Flaschka's coordinates: $\tr X^2$. So despite its initially somewhat complicated appearance, the $\frak{b}_+^*$ K-K orbit structure is the correct generalization of Hamiltonian structure for FKToda. 
 
 The additional constants of motion needed to completely integrate FKToda are constructed by considering nested sequences 
 of parabolic subalgebras $\frak{p}_{k}^*$ of $\frak{g}$ (a parabolic subalgebra is one that contains $\frak{b}_+$) with corresponding Lie group $P_k$:
 \begin{eqnarray} \nonumber 
\frak{b}_+ &=& \frak{p}_{m+1} \subset \cdots \subset \frak{p}_{k} \subset \cdots \subset \frak{p}_{0} = \frak{g} \\ \label{parabolic}
\frak{b}_+^* &=& \frak{p}_{m+1}^* \leftarrow \cdots \leftarrow \frak{p}_{k}^* \leftarrow \cdots \leftarrow \frak{p}_{0}^* = \frak{g}^*
\end{eqnarray}
 in which the maps on the second line are the projections induced by duality from the inclusions on the first line. Those projections are {\it Poisson maps},  meaning that the pullback of the K-K Poisson bracket of two functions on 
 $\frak{p}_{k-1}^*$  equals the K-K Poisson bracket of their pullbacks on $\frak{p}_{k}^*$ . 
 
One can show, based on the Factorization Theorem, that functions on  $\frak{p}_{k-1}^*$ invariant under the coadjoint action of
$P_{k-1}$, Poisson commute with {\it all} functions on $\frak{p}_{k-1}^*$. Since the projections in (\ref{parabolic}) are all Poisson maps one can collectively pull back all these sequentially invariant functions to 
$\frak{g}^*$ to form a family of involutive (commuting) functions there. Then restricting this family to $\frak{b}_+^*$ will, by duality, give an involutive family of functions on $\mathcal{H}$ that are constants of motion for FKToda. This is called the {\it method of Thimm} \cite{bib:thimm}. Of course it could be that this family is empty; however, that is not the case. In \cite{bib:efs} it was shown that for a particular chain of parabolic subalgebras this method yields exactly enough independent invariants of this type to demonstrate complete integrability for the Hessenberg phase space. They further showed that this analysis can also be successfully applied to generalized full Toda systems associated to Lie algebras $B_n, C_n$ and $D_n$.   Subsequently Gekhtman and Shapiro, in an elegant reformulation \cite{bib:gs,bib:r}, showed how this further applies to all simple Lie algebras. 

Our re-formulation of FKToda and dFToda in Section \ref{geometry} now makes it natural to extend these considerations of Poisson-Lie structures and integrability to the dFToda setting of Hamiltonian/symplectic maps. Notions of total positivity and Lusztig parameters fit naturally into our framework  but appear to be little studied. Extensions of the above described ideas to the ultra-discrete/box-ball systems also poses some fascinating avenues for study. One possible link there is to recent developments concerning crystals for affine Lie algebras (see \cite{bib:scrimshaw} and references therein). Links between factorization and crystal bases for quantum groups were discovered by Lusztig (see \cite{bib:bfz} for an exposition); but, relating this to symplectic geometry or Hamiltonian dynamics remains largely unexplored as far as we know.

\subsubsection{Representation Theory of Solvable Lie groups and Geometric Quantization}

Connections between completely integrable Hamiltonian systems and quantum mechanics go back to the earliest days of the latter theory, the so-called {\it old quantum theory} of Bohr and Sommerfeld with subsequent refinements by Einstein-Brillouin-Keller (EBK). The basic connection here is that completely integrable systems typically have big symmetry groups related to their constants of motion \`a la Noether's theorem. The first step of quantization is to take the Hamiltonian that defines the dynamical system, written in canonical variables,  and replace the $p_i$ by $i \partial/\partial q_i$ and regard functions of the $q_i$ as multiplication operators. This then effectively replaces the nonlinear Hamiltonian by a {\it linear} differential operator also referred to as the Hamiltonian.   In the case of the tridiagonal Toda lattice the Hamiltonian (\ref{hamiltonian}) is replaced by the operator
 \begin{eqnarray} \label{quantumT}
\widehat{H} = - \frac12 \sum_{i =1}^n \partial^2/\partial q_i^2 + \sum_{j=1}^{n-1}e^{q_j-q_{j+1}}.
\end{eqnarray}
The other constants of motion, $H_i $ in involution with $H$ also have such linear operator representations $\widehat{H}_i$.
Then the EBK formalism determines which values of the motion constants are admissible as spectra (``energy" levels) for this 
commuting family of Hamiltonians.  This associates to those levels, regarded as eigenvalues, common eigenfunctions which are
the {\it states} of  the quantum system and should comprise a basis for its associated Hilbert space. 

On the other hand, the symmetry group associated to the classical integrability acts as a group of symmetries on the collective operators $\{\widehat{H}\}  \cup \{ \widehat{H}_i\}$ and so in turn acts as a group representation on the associated Hilbert space. 
The decomposition of that representation into a sum of irreducible representations then describes the composition of general states in terms of fundamental eigenstates which is a fundamental question in quantum theory. Due to the connection with group theory (about to be discussed) one expects the eigenfunctions to have explicit expressions, such as in terms of spherical functions. 

Turning this around, one can also start with a Lie group and try to build for it integrable systems that realize irreducible representations by the above prescription. These integrable systems would then become the fundamental building blocks for representations of the group. This is precisely what the program, referred to as the Kirillov-Kostant (K-K) orbit method (the quantum analogue of the Poisson orbit method described in the previous subsection), is designed to do. It is the most important component of Kostant's  more general program of {\it Geometric Quantization} which attempts to place the quantization prescription described above on a firm rigorous footing for solvable groups. The K-K orbit method may be implemented for the Lie group $B_+$, the group that underlies the work in this paper. The co-adjoint orbits of $B_+$ on $\frak{b}_+^*$ are symplectic leaves on which local canonical coordinates of  $(p_i, q_i)$ may be constructed as the basis for geometric quantization. For the tridiagonal symplectic leaves one has the quantization of the tridiagonal Toda Lattice described above.
The representation theory in this case has been completely worked out by Kostant \cite{bib:kostant}. The irreducible representations are infinite dimensional (possible since $B_+$ is non-compact) and are indexed by the joint continuous spectrum of (\ref{quantumT}) and the other $\widehat{H}_i$ that Poisson commute with it. The eigenfunction basis is expressed in terms of generalizations of the classical Whittaker functions. This class of examples is one of the principal models and success stories for the K-K orbit method.

The focus in relation to this paper will be to extend the above type of analysis to other symplectic leaves ($B_+$ orbits)  such as generic maximal dimensional orbits of the Full Kostant Toda lattice or the $m$-banded invariant sets for $3 < m < n$ as well as to their discretizations. This is wide open territory with many challenges but the results described in this paper make it possible to begin to undertake those challenges. 

\section*{Statements and Declarations}
\textbf{Conflict of Interest} Authors declare no conflict of interests for this article.

\appendix

\section{Tau-Function Description for Lower-Upper Hessenberg Factorization} \label{appA}

\begin{lem}\label{LemmaSemiInfiniteFactorisationTau}
Let 
$$H=\left[\begin{array}{cccc}
g_1 & 1 \\
h_1 & g_2 & \ddots\\
&\ddots & \ddots & 1\\
&&h_{n-1} & g_n
\end{array}
\right],$$
with $h_i\neq 0$ for all $i$, then either there exist unique sequences $\mb{y}=(y_1,y_2,\ldots,y_{n-1})$ and $\mb{b}=(b_1,b_2,\ldots,b_n)$ such that
$$
\left[\begin{array}{cccc}
1 \\
y_1 & 1 \\
&\ddots & \ddots\\
&&y_{n-1}& 1
\end{array}\right]
\left[\begin{array}{cccc}
b_1 & 1\\
& b_2 & \ddots\\
&&\ddots & 1\\
&&&b_n
\end{array}\right]=H$$
or no solution exists. Furthermore, if a solution exists, it is given by\vspace{0.2cm}
\begin{align}
b_i&=\frac{\tau_i(H)}{\tau_{i-1}(H)}~~~~~i=1,2,\ldots,n\ldots\label{equationsforbs}\\
y_i&=\frac{h_i\tau_{i-1}(H)}{\tau_i(H)}~~~i=1,2,\ldots,n-1,\label{equationsforys}
\end{align}~\\[-12pt]
where $\tau_k(H)$ is defined to be the principal $k^\text{th}$ minor determinant of $H$ if $1\leq k\leq n$, and $\tau_0(H)$ is defined to be $1$.
\end{lem}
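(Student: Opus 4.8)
The plan is to turn the matrix identity into a scalar recursion, solve that recursion by forward substitution, and then recognise the outcome as ratios of leading principal minors using the classical three-term recurrence for determinants of tridiagonal matrices.

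First I would expand the product on the left. If $Y$ denotes the lower bidiagonal unipotent factor with subdiagonal entries $y_1,\dots,y_{n-1}$ and $B$ the upper bidiagonal factor with diagonal $b_1,\dots,b_n$ and superdiagonal all equal to $1$, then a one-line computation shows that $YB$ is automatically tridiagonal with superdiagonal all $1$, with $(YB)_{ii}=b_i+y_{i-1}$ (using the convention $y_0=0$) and $(YB)_{i+1,i}=y_ib_i$. Hence $YB=H$ is equivalent to the system $b_1=g_1$, $b_i+y_{i-1}=g_i$ for $i=2,\dots,n$, and $y_ib_i=h_i$ for $i=1,\dots,n-1$. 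This system admits at most one solution, obtained by forward substitution: put $b_1=g_1$; for $i=1,\dots,n-1$, if $b_i\neq0$ then $y_i=h_i/b_i$ is forced and then $b_{i+1}=g_{i+1}-y_i$ is forced; finally $b_n=g_n-y_{n-1}$. So either every $b_1,\dots,b_{n-1}$ arising in this process is nonzero, in which case $(\mathbf y,\mathbf b)$ exists and is unique, or the process meets a vanishing $b_i$ with $i\le n-1$ and no factorisation of the stated form exists. This is exactly the claimed dichotomy.

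Next I would invoke the cofactor recurrence for leading principal minors of a tridiagonal matrix: expanding $\tau_k(H)$ along its last column and then expanding the resulting minor along its last row gives $\tau_0(H)=1$, $\tau_1(H)=g_1$, and $\tau_k(H)=g_k\,\tau_{k-1}(H)-h_{k-1}\,\tau_{k-2}(H)$ for $k\ge2$. The core step is then an induction on $k$ proving simultaneously that $b_1,\dots,b_k$ are all nonzero precisely when $\tau_k(H)\neq0$, and that in that case $b_k=\tau_k(H)/\tau_{k-1}(H)$. The base case $k=1$ is immediate from $b_1=g_1=\tau_1(H)$. For the inductive step, assuming $b_{k-1}=\tau_{k-1}(H)/\tau_{k-2}(H)\neq0$, substitute $y_{k-1}=h_{k-1}/b_{k-1}=h_{k-1}\tau_{k-2}(H)/\tau_{k-1}(H)$ into $b_k=g_k-y_{k-1}$ and clear denominators to obtain $b_k=\bigl(g_k\tau_{k-1}(H)-h_{k-1}\tau_{k-2}(H)\bigr)/\tau_{k-1}(H)=\tau_k(H)/\tau_{k-1}(H)$, so $b_k\neq0\iff\tau_k(H)\neq0$. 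Feeding this back into $y_i=h_i/b_i$ gives $y_i=h_i\,\tau_{i-1}(H)/\tau_i(H)$, which are formulas (\ref{equationsforbs}) and (\ref{equationsforys}); and the existence criterion ``$b_1,\dots,b_{n-1}$ all nonzero'' becomes ``$\tau_1(H),\dots,\tau_{n-1}(H)$ all nonzero.''

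The only delicate point — and the one I would be most careful to phrase correctly — is tracking the breakdown of the forward substitution so that the minor formulas are asserted only where the denominators $\tau_i(H)$ do not vanish; note that the hypothesis $h_i\neq0$ is irrelevant to the existence question and only serves to guarantee $y_i\neq0$, i.e.\ that $Y$ genuinely has the displayed nonzero subdiagonal. Everything else is routine linear algebra.
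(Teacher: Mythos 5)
Your proof is correct and follows essentially the same route as the paper's: expand the product, compare entries to get the forward-substitution system, and identify $b_k=\tau_k(H)/\tau_{k-1}(H)$ by induction using the three-term cofactor recurrence for the principal minors. The only quibble is your closing aside that $h_i\neq0$ is ``irrelevant to the existence question'': it is in fact what makes a vanishing $b_i$ force \emph{non-existence} rather than non-uniqueness (if $h_i=b_i=0$ then $y_i$ is unconstrained and infinitely many factorizations exist), so it underpins the stated dichotomy, exactly as in the paper's argument.
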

\begin{proof}
We start by multiplying to obtain\vspace{0.2cm}
\begin{equation}
\left[\begin{array}{cccc}
g_1 & 1 \\
h_1 & g_2 & \ddots\\
&\ddots & \ddots & 1\\
&&h_{n-1} & g_n
\end{array}
\right]
=
\left[\begin{array}{cccc}
b_1 & 1 \\
b_1y_1 & b_2+y_1 & \ddots\\
&\ddots & \ddots & 1\\
&&b_{n-1}y_{n-1} & b_n+y_{n-1}
\end{array}
\right]
\end{equation}~\\[-12pt]
Comparing coefficients yields the following system of equations:
\begin{align}
b_1&=g_1\\
b_iy_i&=h_i~~~~~\,~i=1,2,\ldots,n-1\label{syseqtwo}\\
b_{i+1}+y_i&=g_{i+1}~~~~i=1,2,\ldots,n-1 \label{syseqthree}
\end{align}
Since $h_i\neq 0$ for all $i$, it is clear that one can solve uniquely for the $(b_1,y_1,b_2,y_2,\ldots)$ in that order, provided no $b_i$ vanishes, assuring Equation (\ref{syseqtwo}) can be solved for $y_i$. \\[5pt]
Completion of the proof boils down to proving Equation (\ref{equationsforbs}), since Equation (\ref{equationsforys}) follows immediately from this and Equation (\ref{syseqtwo}). We proceed by induction on $i$:\\[5pt]
For $i=1$, we have \vspace{0.25cm}
\begin{equation}
b_1=g_1=\tau_1(H)=\frac{\tau_1(H)}{\tau_0(H)}
\end{equation}~\\[-14pt]
so that the base case holds trivially.\\[5pt]
Assume $b_i=\frac{\tau_i(H)}{\tau_{i-1}(H)}$ for some $i<n$, then, by Equations (\ref{syseqtwo}) and (\ref{syseqthree}), one has
\begin{equation}
b_{i+1}=g_{i+1}-\frac{h_i}{b_i} = \frac{g_{i+1}b_i-h_i}{b_i}.
\end{equation}
\n Using the induction hypothesis, one then obtains\vspace{0.2cm}
\begin{equation}
b_{i+1}=\frac{g_{i+1}\tau_i(H)-h_i\tau_{i-1}(H)}{\tau_i(H)}=\frac{\tau_{i+1}(H)}{\tau_i(H)}
\end{equation}~\\[-12pt]
where the latter equality follows from a cofactor expansion of the upper left $(i+1) \times (i+1)$ submatrix, along its bottom row.
\end{proof}

\n We now use Lemma \ref{LemmaSemiInfiniteFactorisationTau} to prove Equations (\ref{tauf}) and (\ref{taus}). We recall the definition of the dFToda evolution process:
\begin{enumerate}
    \item $R(t)=:R^{(0)}(t)$
    \item $T_i(u(t+1))R^{(i)}(t)=R^{(i-1)}(t)T_i(u(t))$, where $L(t)=T_1(u(t))T_2(u(t))\cdots T_{n-1}(u(t))$.
\end{enumerate}
We also recall the following definition (Equation (\ref{defnhessenbergtruncs})):
\begin{equation}
H^{(i)}(t):=R^{(i-1)}(t)T_i(t),\end{equation}
as well as the associated $\tau$-functions (Equation (\ref{taufundefnithhess})):
$$\tau_j^i(t):=\tau_j(H^{(i)}(t))$$
which is the principal $j$-th minor of $H^{(i)}(t)$ for $1\leq j\leq n$, and is defined to be 1 when $j=0$.
\begin{defn}
For each $1\leq i\leq n-1$, define $L^{(i)}(t)=L_1(t)L_2(t)\cdots L_i(t)$.
\end{defn}
By Lemma \ref{LemmaSemiInfiniteFactorisationTau}, one finds $T_i(u(t+1))$ and $R^{(i)}(t)$ as the following:
\begin{equation}\label{eqn:Tiplusone}
T_i(u(t+1))=\left[\begin{array}{ccccc}
1\\ 
H^{(i)}_{2,1}(t) \frac{\tau_0^i(t)}{\tau_1^i(t)} & 1\\
&H^{(i)}_{3,2}(t) \frac{\tau_1^i(t)}{\tau_2^i(t)} & \ddots\\
&&\ddots & 1\\
&&&H^{(i)}_{n,n-1}(t) \frac{\tau_{n-2}^i(t)}{\tau_{n-1}^i(t)} &1
\end{array}\right]
\end{equation}
and 
$$R^{(i)}(t)=\left[\begin{array}{cccc}
\frac{\tau_1^i(t)}{\tau_0^i(t)} & 1\\
&\frac{\tau_2^i(t)}{\tau_1^i(t)}&\ddots\\
&&\ddots & 1\\
&&&\frac{\tau_n^i(t)}{\tau_{n-1}^i(t)}
\end{array}\right].$$
Therefore, determining equations for the entries of $T_i(u(t+1))$ and $R^{(i)}(t)$ amounts to determining $H^{(i)}_{j+1,j}(t)$ for $j=1,\ldots,n-1$ and $\tau_j^i(t)=\tau_j(R^{(i-1)}(t)T_i(u(t)))$ for $j=1,\ldots,n$.\\
We now make the following key observation:
\begin{align*}
    \tau_k(H^{(i)}(t))&=\tau_k(R^{(i-1)}(t)T_i(u(t)))\\
    &=\tau_k(T_1(u(t+1))T_2(u(t+1))\cdots T_{i-1}(u(t+1))R^{(i-1)}(t)T_i(u(t)))\\
    &=\tau_k(R^{(0)}(t)T_1(u(t))T_2(u(t))\cdots T_{i-1}(u(t))T_i(u(t)))\\
    &=\tau_k(R(t)L^{(i)}(t)),
\end{align*}
where the second equality comes from the the invariance of principal minors under left-multiplication by lower unipotent matrices, and the third equality comes from reversing the definition of the dFToda iterative construction of the $R^{(i)}$'s.\\

\n The significance of this observation lies in the fact that $\tau_k(H^{(i)}(t))$ is known at the outset, \textit{i.e.}, before having to compute $R^{(i)}(t)$ or $T_i(u(t+1))$ for any $i>0$.\\

\n We now turn our attention to $H^{(i)}_{j+1,j}(t)$. Multiplying out $H^{(i)}=R^{(i-1)}(t)T_i(u(t))$, one finds 
$$H^{(i)}(t)=\left[\begin{array}{cccccccc}
* & 1\\
u_1^i(t)\frac{\tau_2^{i-1}(t)}{\tau_1^{i-1}(t)}& * & 1\\
& u_2^{i+1}(t)\frac{\tau_3^{i-1}(t)}{\tau_2^{i-1}(t)} & * & 1\\
&&\ddots & \ddots & \ddots\\
&&&u_{n-i}^{n-1}(t)\frac{\tau_{n-i+1}^{i-1}(t)}{\tau_{n-i}^{i-1}(t)}\\
&&&&0 & * & \ddots\\
&&&&&\ddots  & * & 1\\
&&&&&  & 0 & 1\\
\end{array}\right],$$
resulting in the following description of the lower-diagonal entries of $H_{j+1,j}^i(t)$:
$$H_{j+1,j}^i(t) = \left\{
\begin{array}{cl}
u_j^{i+j-1}(t) \frac{\tau_{j+1}^{i-1}(t)}{\tau_{j}^{i-1}(t)} & \text{~~if~~}j\leq n-i\\
0 & \text{~~if~~}j> n-i
\end{array}
\right..$$
Thus, from Equation (\ref{eqn:Tiplusone}), we have
\begin{equation} \label{eqn:paramDynam}
    u_j^{i+j-1}(t+1) = H_{j+1,j}^{(i)}(t) \frac{\tau_{j-1}^i(t)}{\tau_j^i(t)}
    = u_j^{i+j-1}(t)\frac{\tau_{j+1}^{i-1}(t)}{\tau_j^{i-1}(t)}\frac{\tau_{j-1}^i(t)}{\tau_j^i(t)}.
\end{equation}
This completes the proof of Equations (\ref{tauf}) and (\ref{taus}), and therefore Theorem \ref{thm:wp}.

\section{O'Connell's ODE's}\label{appendixb}

\begin{thm}\label{thm:oconnellsetup}
Suppose $(R^{(i)})_{i=0}^{n-1}$ and $(T_j(s))_{j=1}^{n-1}$ satisfy Equations (\ref{TjandRjminusone}) and (\ref{Rjandjminusone}), along with the initialization $R^{(0)}=\epsilon_\Lambda$. By writing the positive Lusztig coordinates in the following way
$$u_i^j=e^{x_{i+1}^{j+1}-x_i^j},~~~~1\leq i\leq j\leq n-1,$$
there exist constants $(C_h)_{h=0}^{n-2}$ such that 
$$\dot{x}_i^{i+h}=C_h+\lambda_i-u_{i-1}^{i-1}+\sum_{l=1}^h \left(u_i^{i+h-l}-u_{i-1}^{i+h-l}\right).$$
\end{thm}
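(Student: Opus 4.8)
The plan is to derive the ODEs for the $x_i^j$ directly from Theorem~\ref{thm:matrixandcoordrepoffktoda}, specifically from Equation~\ref{inductiveueqn}, which gives the logarithmic derivative of the Lusztig parameters. Writing $u_i^j = e^{x_{i+1}^{j+1}-x_i^j}$, the left-hand side of (\ref{inductiveueqn}) becomes $\dot{x}_{i+1}^{j+i}-\dot{x}_i^{i+j-1}$ (after re-indexing so that the superscript matches the notation $x_i^{i+h}$ of the statement). So the first step is purely bookkeeping: reconcile the two indexing conventions. In (\ref{inductiveueqn}) the parameter $u_i^{j+i-1}$ has lower index $i$ and ``column'' index $j$, with $1\le j\le n-1$, $1\le i\le n-j$; in the statement $u_i^j$ is indexed by $1\le i\le j\le n-1$. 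Matching $u_i^{i+h}$ (statement) with $u_i^{j+i-1}$ (theorem) forces $h=j-1$, so $h$ ranges over $0,\dots,n-2$. With this dictionary in hand, (\ref{inductiveueqn}) reads
$$
\dot{x}_{i+1}^{i+h+1}-\dot{x}_i^{i+h} = R^{(h)}_{i+1,i+1}-R^{(h)}_{i,i}+u_{i-1}^{i+h-1}-u_i^{i+h}.
$$

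Next I would eliminate the $R^{(h)}_{i,i}$ in favour of Lusztig parameters using Equation~\ref{inductiveReqn}, namely $R^{(h)}_{i,i} = R^{(h-1)}_{i,i}+u_i^{i+h-1}-u_{i-1}^{i+h-2}$, together with the initialization $R^{(0)}=\epsilon_\Lambda$, i.e. $R^{(0)}_{i,i}=\lambda_i$. Telescoping (\ref{inductiveReqn}) in the superscript gives the closed form
$$
R^{(h)}_{i,i} = \lambda_i + \sum_{l=1}^{h}\left(u_i^{i+l-1}-u_{i-1}^{i+l-2}\right) = \lambda_i + \sum_{l=1}^{h}\left(u_i^{i+h-l}-u_{i-1}^{i+h-l-1}\right),
$$
where the last equality is just the reversal $l\mapsto h+1-l$ of the summation index (and terms with lower index $0$ vanish by Remark~\ref{rem:extlusztig}). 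The right-hand side here is exactly the quantity $C_h$-free part of the asserted formula for $\dot{x}_i^{i+h}$, except shifted by $u_{i-1}^{i-1}$; this is the clue that the $x_i^{i+h}$ ODE is obtained by ``integrating'' the difference relation $\dot{x}_{i+1}^{(i+1)+h}-\dot{x}_i^{i+h}$ in $i$, and the integration constant across each diagonal $j-i = h$ fixed is precisely $C_h$.

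So the core step is the following: define $Y_i^{h} := \dot{x}_i^{i+h}$ and show that the proposed formula
$$
Y_i^{h} = C_h + \lambda_i - u_{i-1}^{i-1} + \sum_{l=1}^{h}\left(u_i^{i+h-l}-u_{i-1}^{i+h-l}\right)
$$
satisfies the recurrence $Y_{i+1}^{h}-Y_i^{h} = R^{(h)}_{i+1,i+1}-R^{(h)}_{i,i}+u_{i-1}^{i+h-1}-u_i^{i+h}$ for all admissible $i$. I would verify this by substituting the closed form for $R^{(h)}_{i,i}$ and checking that the two sides agree term by term; the $\lambda$-terms match trivially, and the sum-over-$l$ terms match after the reversal of summation index, leaving a short verification that the ``boundary'' discrepancies $-u_i^{i} + u_{i-1}^{i-1}$ on the $Y$-side cancel against $u_{i-1}^{i+h-1}-u_i^{i+h}$ after accounting for the extremal terms $l=1$ and $l=h$ of the shifted sums. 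Having verified the recurrence, the formula for $Y_i^{h}$ is determined up to one additive constant per diagonal, and I name that constant $C_h$; this proves existence of the $C_h$ as claimed. One also needs to confirm that the map $u \mapsto x$ is consistent, i.e. that prescribing $u_i^j = e^{x_{i+1}^{j+1}-x_i^j}$ does not over-determine the $x$'s — it does not, since the increments $x_{i+1}^{j+1}-x_i^j$ along the ``hooks'' of the triangular array are independent and one is free to choose a base value on each anti-diagonal, which is exactly the freedom parametrized by $C_h$.

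The main obstacle I anticipate is not conceptual but combinatorial: getting the two index conventions perfectly aligned and making sure the edge cases (terms with a lower index equal to $0$ or equal to $n$, where $u$ vanishes by Remark~\ref{rem:extlusztig}, and the ranges of $i$ at the two ends of each diagonal) are handled so that the telescoping and the summation-reversal are valid without off-by-one errors. Once the dictionary $h = j-1$ and the closed form $R^{(h)}_{i,i} = \lambda_i + \sum_{l=1}^h (u_i^{i+h-l}-u_{i-1}^{i+h-l-1})$ are pinned down, the remainder is a direct, if slightly fiddly, algebraic check against Equations~\ref{inductiveueqn} and~\ref{inductiveReqn}.
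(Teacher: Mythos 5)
Your proposal follows essentially the same route as the paper's proof: telescope Equation \ref{inductiveReqn} from $R^{(0)}=\epsilon_\Lambda$ to get the closed form $R^{(h)}_{i,i}=\lambda_i+\sum_{l=1}^h\bigl(u_i^{i+h-l}-u_{i-1}^{i+h-l-1}\bigr)$, substitute into Equation \ref{inductiveueqn} via the dictionary $h=j-1$, and observe that the resulting first-order difference relation in $i$ along each diagonal determines $\dot{x}_i^{i+h}$ up to a single constant $C_h$. The only cosmetic difference is that the paper finishes by separating the $i$- and $(i+1)$-dependent terms onto opposite sides of the equation and reading off that each side equals the constant $C_h$, rather than verifying the asserted formula against the recurrence; the content is identical.
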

\begin{proof}
We first cascade Equation (\ref{inductiveReqn}) to obtain the following equation for $R_{i,i}^{(j)}$:
$$R_{i,i}^{(j)}=\lambda_i+\sum_{l=1}^j \left(u_i^{j+i-l}-u_{i-1}^{j+i-l-1}\right),$$
recalling that $R_{i,i}^{(0)}=\lambda_i$ for each $1\leq i\leq n$. By defining $h=j-1$, we now have the equations coming from (\ref{inductiveueqn}) and (\ref{inductiveReqn}):
\begin{eqnarray}
    \dfrac{\dot{u}_i^{i+h}}{u_i^{i+h}} &=& R_{i+1,i+1}^{(h)}-R_{i,i}^{(h)}+u_{i-1}^{i+h-1}-u_i^{i+h},~~~0\leq h\leq n-2,~~~1\leq i\leq n-h-1\label{uiheqn}\\
    R_{i,i}^{(h)}&=&\lambda_i+\sum_{l=1}^h \left(u_i^{i+h-l}-u_{i-1}^{i+h-l-1}\right),~~~0\leq h\leq n-2,~~~1\leq i\leq n.\label{Riheqn}
\end{eqnarray}
Combining Equations (\ref{uiheqn}) and (\ref{Riheqn}), and noting that $\dfrac{\dot{u}_i^{i+h}}{u_i^{i+h}}=\dot{x}_{i+1}^{i+h+1}-\dot{x}_{i}^{i+h}$, one obtains the following equation
$$\dot{x}_{i+1}^{i+h+1}-\dot{x}_{i}^{i+h}
=
\lambda_{i+1}+\sum_{l=1}^h \left(u_{i+1}^{i+h-l+1}-u_{i}^{i+h-l}\right)
- \lambda_i-\sum_{l=1}^h \left(u_i^{i+h-l}-u_{i-1}^{i+h-l-1}\right)
+u_{i-1}^{i+h-1}-u_i^{i+h}$$
which can be rewritten as
$$\dot{x}_{i+1}^{i+h+1}-\lambda_{i+1}+\sum_{l=1}^h \left(u_{i}^{i+h-l}-u_{i+1}^{i+h-l+1}\right)+u_i^{i+h}
=
\dot{x}_{i}^{i+h}- \lambda_i+\sum_{l=1}^h \left(u_{i-1}^{i+h-l-1}-u_i^{i+h-l}\right)
+u_{i-1}^{i+h-1}.
$$
Consequently, each side of the above equation is constant in $i$. If we name this constant $C_h$, we see that
$$\dot{x}_{i}^{i+h}- \lambda_i+\sum_{l=1}^h \left(u_{i-1}^{i+h-l-1}-u_i^{i+h-l}\right)
+u_{i-1}^{i+h-1}=C_h$$
for each $1\leq i\leq n-1-h$. Thus,
\begin{align*}
    \dot{x}_i^{i+h}
    &=C_h+\lambda_i-u_{i-1}^{i+h-1}-\sum_{l=1}^h u_{i-1}^{i+h-l-1}+\sum_{l=1}^h u_i^{i+h-l}\\
    &=C_h+\lambda_i-u_{i-1}^{i-1}-\sum_{l=0}^{h-1} u_{i-1}^{i+h-l-1}+\sum_{l=1}^h u_i^{i+h-l}\\
    &=C_h+\lambda_i-u_{i-1}^{i-1}+\sum_{l=1}^h \left(u_i^{i+h-l}-u_{i-1}^{i+h-l}\right).
\end{align*}
\end{proof}

\begin{cor}\label{cor:oconnellsetup}
 The equations in Theorem \ref{thm:oconnellsetup} can now be reduced to Equation 6.4 in \cite{bib:o}, with the addition of some scalar addition along \textit{`diagonals'} $(\dot{x}_i^{i+h})_{1\leq i\leq n-1-h}$:
 \begin{eqnarray}
 &~~~~~~~~~~~~\dot{x}_1^1=C_0+\lambda_1, \label{eq:refoc1}\\
& \dot{x}_1^m=C_{m-1}-C_{m-2}+\dot{x}_1^{m-1}+e^{x_2^m-x_1^{m-1}},~~~~~\dot{x}_m^m=C_0+\lambda_m-e^{x_m^m-x_{m-1}^{m-1}},~~~2\leq m\leq n,\label{eq:refoc2}\\
&~~~~~\dot{x}_i^m=C_{m-i}-C_{m-i-1}+\dot{x}_i^{m-1}+e^{x_{i+1}^m-x_i^{m-1}}-e^{x_i^m-x_{i-1}^{m-1}},~~~~1<i<m\leq n.\label{eq:refoc3}
 \end{eqnarray}
\end{cor}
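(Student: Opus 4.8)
The plan is to start from the conclusion of Theorem \ref{thm:oconnellsetup}, namely the formula
$$\dot{x}_i^{i+h}=C_h+\lambda_i-u_{i-1}^{i-1}+\sum_{l=1}^h\left(u_i^{i+h-l}-u_{i-1}^{i+h-l}\right),$$
and simply translate it into the notation of \cite{bib:o} by substituting back the exponential parametrization $u_i^j=e^{x_{i+1}^{j+1}-x_i^j}$. The three displayed equations \ref{eq:refoc1}--\ref{eq:refoc3} are nothing more than the cases $h=0$ (so $m=i$), $i=1$ (the left boundary), and $1<i<m$ (the interior), so the proof is a case analysis driven by which terms in the telescoping sum survive under the conventions of Remark \ref{rem:extlusztig} (where $u_i^j=0$ outside $1\le i\le j\le n-1$).

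First I would handle the diagonal case $h=0$: the sum $\sum_{l=1}^0$ is empty, so $\dot{x}_i^i=C_0+\lambda_i-u_{i-1}^{i-1}$. For $i=1$ we have $u_0^0=0$, giving $\dot{x}_1^1=C_0+\lambda_1$, which is \ref{eq:refoc1}; for $i=m\ge 2$ we have $u_{m-1}^{m-1}=e^{x_m^m-x_{m-1}^{m-1}}$, giving the second half of \ref{eq:refoc2}. Next I would treat the left-boundary case $i=1$, $m=1+h\ge 2$. Here $u_0^{j}=0$ for every $j$, so the term $u_{i-1}^{i-1}=u_0^0$ and all the $u_{i-1}^{i+h-l}=u_0^{h-l+1}$ vanish, leaving $\dot{x}_1^m=C_{m-1}+\lambda_1+\sum_{l=1}^{m-1}u_1^{m-l}$. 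To recognize this as the recursive form in \ref{eq:refoc2}, I would subtract the corresponding expression for $\dot{x}_1^{m-1}$ (i.e. $h\mapsto h-1$): the two sums differ by exactly the single term $u_1^{m-1}=e^{x_2^m-x_1^{m-1}}$ and the constants differ by $C_{m-1}-C_{m-2}$, yielding $\dot{x}_1^m=C_{m-1}-C_{m-2}+\dot{x}_1^{m-1}+e^{x_2^m-x_1^{m-1}}$.

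The interior case $1<i<m=i+h$ is the same differencing trick applied to the general formula: comparing the expression for $\dot{x}_i^{i+h}$ with that for $\dot{x}_i^{i+h-1}$, the constants contribute $C_{m-i}-C_{m-i-1}$ (since $h=m-i$), the two telescoping sums differ by the single pair of terms coming from $l=1$ in the longer sum, namely $u_i^{i+h-1}-u_{i-1}^{i+h-1}=e^{x_{i+1}^m-x_i^{m-1}}-e^{x_i^m-x_{i-1}^{m-1}}$, and the $-u_{i-1}^{i-1}$ term is common to both and cancels. This reproduces \ref{eq:refoc3} exactly. I do not anticipate a genuine obstacle here; the only thing requiring care is bookkeeping of index shifts and the boundary conventions of Remark \ref{rem:extlusztig}, in particular making sure that when $i=2$ the term $e^{x_i^m-x_{i-1}^{m-1}}=e^{x_2^m-x_1^{m-1}}$ is \emph{not} killed (since $u_1^{\cdot}$ is a genuine coordinate), so that the $i=2$ instance of \ref{eq:refoc3} is consistent with—but distinct from—the $i=1$ instance of \ref{eq:refoc2}. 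The mild discrepancy with \cite[(6.4)]{bib:o}, namely the presence of the constants $C_h$ and their differences, is exactly the ``scalar addition along diagonals'' flagged in the statement, and is accounted for simply by noting these constants are free parameters (fixed by initial data) that drop out of the differences of flows along a fixed diagonal.
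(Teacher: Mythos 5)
Your proposal is correct and follows essentially the same route as the paper: handle $h=0$ directly from the empty sum (giving \ref{eq:refoc1} and the second half of \ref{eq:refoc2}), then obtain \ref{eq:refoc3} by differencing $\dot{x}_i^{i+h}-\dot{x}_i^{i+h-1}$ so that the telescoping sums cancel except for the single $l=1$ pair, with the $i=1$ boundary case of \ref{eq:refoc2} following from the convention $u_0^{m-1}=0$. The only difference is organizational — you treat the left boundary as a separate explicit case while the paper derives the interior formula first and specializes — which does not change the argument.
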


\begin{proof}
Applying Theorem \ref{thm:oconnellsetup} with $h=0$, one immediately retrieves Equation (\ref{eq:refoc1}) and the second equation of (\ref{eq:refoc2}):
$$\dot{x}_i^{i}=C_0+\lambda_i-u_{i-1}^{i-1}=\left\{\begin{array}{cc}
C_0+\lambda_1 & \text{~~if~}i=1\\
C_0+\lambda_i-u_{i-1}^{i-1} & \text{~~if~}2\leq i\leq n
\end{array}\right..$$
If $h>0$, consider $\dot{x}_i^{i+h}-\dot{x}_i^{i+h-1}$:
\begin{align*}
    \dot{x}_i^{i+h}-\dot{x}_i^{i+h-1}
    &=
    C_h+\lambda_i-u_{i-1}^{i-1}+\sum_{l=1}^h \left(u_i^{i+h-l}-u_{i-1}^{i+h-l}\right)
    -C_{h-1}-\lambda_i+u_{i-1}^{i-1}-\sum_{l=1}^{h-1} \left(u_i^{i+h-1-l}-u_{i-1}^{i+h-1-l}\right)\\
    &=C_h-C_{h-1}+u_i^{i+h-1}-u_{i-1}^{i+h-1}+\sum_{l=2}^h \left(u_i^{i+h-l}-u_{i-1}^{i+h-l}\right)-\sum_{l=2}^{h} \left(u_i^{i+h-l}-u_{i-1}^{i+h-l}\right)\\
    &=C_h-C_{h-1}+e^{x_{i+1}^{i+h}-x_i^{i+h-1}}-e^{x_i^{i+h}-x_{i-1}^{i+h-2}}.
\end{align*}
Setting $h=m-i$, this can now be rewritten as 
$$\dot{x}_i^m=C_{m-i}-C_{m-i-1}+\dot{x}_i^{m-1}+e^{x_{i+1}^m-x_i^{m-1}}-e^{x_i^m-x_{i-1}^{m-1}},$$
which is Equation (\ref{eq:refoc3}), and reduces to the first equation of (\ref{eq:refoc2}) by recalling that $u_0^{m-1}=e^{x_1^{m}-x_{0}^{m-1}}=0$.
\end{proof}


\newpage
\section*{Glossary}
\addcontentsline{toc}{section}{Glossary}%
\begin{figure}[H]
\centering\renewcommand{\arraystretch}{1.5}
\begin{tabular}{cl}
\textbf{Symbol} & \multicolumn{1}{c}{\textbf{Meaning}}\\
$\mathfrak{b}_+$ & The Lie subalgebra of $\mathfrak{gl}_n$ of upper triangular $n\times n$ matrices\\
$\mathfrak{b}_-$ & The Lie subalgebra of $\mathfrak{gl}_n$ of lower triangular $n\times n$ matrices\\
$\mathfrak{n}_+$ & The Lie subalgebra of $\mathfrak{gl}_n$ of upper nilpotent $n\times n$ matrices\\
$\mathfrak{n}_-$ & The Lie subalgebra of $\mathfrak{gl}_n$ of lower nilpotent $n\times n$ matrices\\
$\pi_+$ & The Lie algebra projection of $\mathfrak{gl}_n=\mathfrak{b}_+\oplus \mathfrak{n}_-$ onto $\mathfrak{b}_+$ \\
$\pi_-$ & The Lie algebra projection of $\mathfrak{gl}_n=\mathfrak{b}_+\oplus \mathfrak{n}_-$ onto $\mathfrak{n}_-$\\
$B_+$ & The Lie subgroup of $\text{GL}_n$ of invertible upper triangular matrices.\\
$N_-$ & The Lie subgroup of $\text{GL}_n$ of lower unipotent triangular matrices.\\
$\Pi_+$ and $\Pi_-$ & For $g=nb\in N_-B_+\subset \text{GL}_n$, $\Pi_-(g)=n\in N_-$ and $\Pi_+(g)=b\in B_+$.\\
$\epsilon$ & The $n\times n$ matrix with ones on its superdiagonal, and zeroes everywhere else.\\
$\epsilon_\Lambda$ & For a tuple of eigenvalues $\Lambda=(\lambda_1,\lambda_2,\ldots,\lambda_n)$, this is $\epsilon+\text{Diag}(\Lambda)$\\
$\mathcal{H}$ & The space of Hessenberg matrices, \textit{i.e.}, $\mathcal{H}=\epsilon + \mathfrak{b}_-$\\
$\mathcal{B}$ & The set of upper bidiagonal matrices with ones on the superdiagonal.\\
$\mathcal{B}^{>0}$ & The set of matrices in $\mathcal{B}$ with positive diagonal entries.\\
$N_-^{>0}$& The set of totally positive lower unipotent matrices\\
$\mathcal{H}^{>0}$& The totally positive Hessenberg matrices: $\mathcal{H}^{>0}=N_-^{>0}\times \mathcal{B}^{>0}$\\
$s$& Continuous time\\
$t$& Discrete time\\
$\ell_i(a)$ & The matrix $\mathbb{I}_n+aE_{i+1,i}$, where $\mathbb{I}_n$ is the $n\times n$ identity matrix, and $E_{i+1,i}$ is\\& the matrix with one in the $(i+1,i)$ entry and zero everywhere else.\\
$T_i(u)$& For $u=(u_i^j)_{1\leq i\leq j\leq n-1}$, $T_i(u)=\ell_1(u_1^i)\ell_2(u_2^{i+1})\cdots \ell_{n-i}(u_{n-i}^{n-1})$\\
$\tau_j$ & The function on $n\times n$ matrices returning the principal $j\times j$ minor.
\end{tabular}
\end{figure}

\newpage



\end{document}